\providecommand{\keywords}[1]{ \small\textbf{\textit{Keywords---}} #1}
  \theoremstyle{trdefinition}%
  \newtheorem{counterexample}[theorem]{Counterexample}%
  \theoremstyle{trplain}%
\crefname{counterexample}{Ex.}{Exs.}
\renewcommand{\defeq}{\coloneqq}
\newcommand{\m}[1]{\mathsf{#1}}
\newcommand{\prob}{\ensuremath{\bbP}}
\newcommand{\expe}{\ensuremath{\bbE}}
\newcommand{\lang}{\textsc{Appl}}
\newcommand{\reason}[1]{\dagger~\text{#1}~\dagger}
\newcommand{\bind}{\mathbin{\gg\!=}}
\newcommand{\iskip}{\kw{skip}}
\newcommand{\itick}[1]{\kw{tick}(#1)}
\newcommand{\iassign}[2]{#1 \coloneqq #2}
\newcommand{\isample}[2]{#1 \sim #2}
\newcommand{\iprob}[3]{\kw{if}~\kw{prob}(#1)~\kw{then}~#2~\kw{else}~#3~\kw{fi}}
\newcommand{\icond}[3]{\kw{if}~#1~\kw{then}~#2~\kw{else}~#3~\kw{fi}}
\newcommand{\iloop}[2]{\kw{while}~#1~\kw{do}~#2~\kw{od}}
\newcommand{\iinvoke}[1]{\kw{call}~#1}
\newcommand{\ctop}{\kw{tt}}
\newcommand{\cneg}[1]{\kw{not}~#1}
\newcommand{\cconj}[2]{#1~\kw{and}~#2}
\newcommand{\cbin}[3]{#2 \mathrel{#1} #3}
\newcommand{\cle}[2]{\cbin{\le}{#1}{#2}}
\newcommand{\eadd}[2]{#1 + #2}
\newcommand{\esub}[2]{#1 - #2}
\newcommand{\emul}[2]{#1 \times #2}
\newcommand{\kstop}{\kw{Kstop}}
\newcommand{\kseq}[2]{\kw{Kseq}~#1~#2}
\newcommand{\kloop}[3]{\kw{Kloop}~#1~#2~#3}
\begin{document}
\title{Expected-Cost Analysis for Probabilistic Programs and Semantics-Level Adaption of Optional Stopping Theorems}
\author[1]{Di Wang}
\author[1]{Jan Hoffmann}
\author[2]{Thomas Reps}
\affil[1]{Carnegie Mellon University}
\affil[2]{University of Wisconsin}
\date{}
\maketitle

%

\keywords{Probabilistic semantics, expected-cost analysis, the potential method, optional stopping theorems} 


\section{A Probabilistic Programming Language}
\label{Sec:PPL}

This article uses an imperative arithmetic probabilistic programming language \lang{} that supports general recursion and continuous distributions, where program variables are real-valued.
We use the following notational conventions.
Natural numbers $\bbN$ \emph{exclude} $0$, i.e., $\bbN \defeq \{ 1,2,3,\cdots\} \subseteq \bbZ^+ \defeq \{0,1,2,\cdots\}$.
The \emph{Iverson brackets} $[\cdot]$ are defined by $[\varphi]=1$ if $\varphi$ is true and otherwise $[\varphi]=0$.
We denote updating an existing binding of $x$ in a finite map $f$ to $v$ by $f[x \mapsto v]$.
Interested readers can refer to textbooks in the literature~\cite{book:Billingsley12,book:Williams91} for details about measure theory.

\subsection{Preliminaries: Measure Theory}
\label{Sec:MeasureTheory}

A \emph{measurable space} is a pair $(S,\calS)$, where $S$ is a nonempty set, and $\calS$ is a $\sigma$-algebra on $S$, i.e., a family of subsets of $S$ that contains $\emptyset$ and is closed under complement and countable unions.
The smallest $\sigma$-algebra that contains a family $\calA$ of subsets of $S$ is said to be \emph{generated} by $\calA$, denoted by $\sigma(\calA)$.
Every topological space $(S,\tau)$ admits a \emph{Borel $\sigma$-algebra}, given by $\sigma(\tau)$.
This gives canonical $\sigma$-algebras on $\bbR$, $\bbQ$, $\bbN$, etc.
A function $f : S \to T$, where $(S,\calS)$ and $(T,\calT)$ are measurable spaces, is said to be \emph{$(\calS,\calT)$-measurable}, if $f^{-1}(B) \in \calS$ for each $B \in \calT$.
If $T=\bbR$, we tacitly assume that the Borel $\sigma$-algebra is defined on $T$, and we simply call $f$ \emph{measurable}, or a \emph{random variable}.
Measurable functions form a vector space, and products, maxima, and limiting operations preserve measurability.\footnote{For limiting operations to be well-defined, we consider the \emph{extended} real number line $\overline{\bbR} \defeq \bbR \cup \{\infty,-\infty\}$.}

A \emph{measure} $\mu$ on a measurable space $(S,\calS)$ is a mapping from $\calS$ to $[0,\infty]$ such that
(i) $\mu(\emptyset)=0$, and
(ii) for all pairwise-disjoint $\{A_n\}_{n \in \bbN}$ in $\calS$, it holds that $\mu(\bigcup_{n \in \bbN} A_n) = \sum_{n \in \bbN} \mu(A_n)$.
The triple $(S,\calS,\mu)$ is called a \emph{measure space}.
A measure $\mu$ is called a \emph{probability} measure, if $\mu(S)=1$.
We denote the collection of probability measures on $(S,\calS)$ by $\bbD(S,\calS)$.
The \emph{zero measure} $\mathbf{0}$ is defined as $\lambda A. 0$.
For each $x \in S$, the \emph{Dirac measure} $\delta(x)$ is defined as $\lambda A. [x \in A]$.
For measures $\mu$ and $\nu$, we write $\mu+\nu$ for the measure $\lambda A. \mu(A) + \nu(A)$.
For measure $\mu$ and scalar $c \ge 0$, we write $c \cdot \mu$ for the measure $\lambda A. c \cdot \mu(A)$.

The \emph{integral} of a measurable function $f$ on $A \in \calS$ with respect to a measure $\mu$ on $(S,\calS)$ is defined following Lebesgue's theory and is denoted by $\mu(f;A)$, $\int_A f d\mu$, or $\int_A f(x) \mu(dx)$.
If $\mu$ is a probability measure, we call the integral as the \emph{expectation} of $f$, written $\expe_{x \sim \mu}[f;A]$, or simply $\expe[f;A]$ when the scope is clear in the context.
If $A = S$, we tacitly omit $A$ from the notations.
For each $A \in \calS$, it holds that $\mu(f;A) = \mu(f \mathrm{I}_A)$, where $\mathrm{I}_A$ is the indicator function for $A$.
If $f$ is nonnegative, then $\mu(f)$ is well-defined with the understanding that the integral can be infinite.
If $\mu(|f|) < \infty$, then $f$ is said to be \emph{integrable}, written $f \in \calL^1(S,\calS,\mu)$, and its integral is well-defined.
Integration is \emph{linear}, in the sense that for any $c,d \in \bbR$ and integrable functions $f,g$, $c \cdot f + d \cdot g$ is integrable and $\mu(c \cdot f + d \cdot g) = c \cdot \mu(f) + d \cdot \mu(g)$.

A \emph{kernel} from a measurable space $(S,\calS)$ to another $(T,\calT)$ is a mapping from $S \times \calT$ to $[0,\infty]$ such that
(i) for each $x \in S$, the set function $\lambda B. \kappa(x,B)$ is a measure on $(T,\calT)$, and
(ii) for each $B \in \calT$, the function $\lambda x. \kappa(x,B)$ is measurable.
We write $\kappa : (S,\calS) \rightsquigarrow (T,\calT)$ to declare that $\kappa$ is a kernel from $(S,\calS)$ to $(T,\calT)$.
Intuitively, kernels describe measure transformers from one measurable space to another.
A kernel $\kappa$ is called a \emph{probability} kernel, if $\kappa(x,T)=1$ for all $x \in S$.
We denote the collection of probability kernels from $(S,\calS)$ to $(T,\calT)$ by $\bbK((S,\calS),(T,\calT))$.
If the two measurable spaces coincide, we simply write $\bbK(S,\calS)$.
We can \emph{push-forward} a measure $\mu$ on $(S,\calS)$ to a measure on $(T,\calT)$ through a kernel $\kappa : (S,\calS) \rightsquigarrow (T,\calT)$ by integration: $\mu \bind \kappa \defeq \lambda B. \int_{S} \kappa(x,B)\mu(dx)$.

We review two important convergence theorems for series of random variables.

\begin{theorem}[Monotone convergence theorem]\label{The:MON}
  If $\{f_n\}_{n \in \bbN}$ is a non-decreasing sequence of nonnegative measurable functions on a measure space $(S,\calS,\mu)$, and $\{f_n\}_{n \in \bbN}$ converges to $f$ pointwise, then $\lim_{n \to \infty} \mu(f_n) = \mu(f) \le \infty$.
  
  Further, the theorem still holds if $f$ is chosen as a measurable function and ``$\{f_n\}_{n \in \bbN}$ converges to $f$ pointwise'' holds \emph{almost everywhere}, rather than \emph{everywhere}.
\end{theorem}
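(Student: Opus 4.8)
The plan is to run the classical two-inequality argument, reducing everything to the definition of the Lebesgue integral as a supremum of integrals of simple functions below the integrand. First I would dispose of the easy direction: since $\{f_n\}_{n\in\bbN}$ is non-decreasing and converges pointwise to $f$, each $f_n \le f$, so monotonicity of the integral gives $\mu(f_n) \le \mu(f)$ for every $n$; moreover $\{\mu(f_n)\}_{n\in\bbN}$ is itself non-decreasing in $[0,\infty]$, so the limit $L \defeq \lim_{n\to\infty}\mu(f_n)$ exists and satisfies $L \le \mu(f)$.

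The substance is the reverse inequality $L \ge \mu(f)$. I would fix an arbitrary simple measurable function $s$ with $0 \le s \le f$ and a constant $c \in (0,1)$, and set $E_n \defeq \{x \in S : f_n(x) \ge c\,s(x)\}$. Because $f_n \uparrow f \ge s$ pointwise and $c < 1$, the $E_n$ are measurable, increasing, and exhaust $S$. Then $\mu(f_n) \ge \mu(f_n; E_n) \ge c\,\mu(s; E_n)$. The key lemma is that $A \mapsto \mu(s; A) = \mu(s\,\mathrm{I}_A)$ is a measure on $(S,\calS)$, which follows from countable additivity of $\mu$ applied across the finitely many level sets of $s$; then continuity from below yields $\mu(s; E_n) \to \mu(s)$. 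Letting $n \to \infty$ gives $L \ge c\,\mu(s)$, then $c \uparrow 1$ gives $L \ge \mu(s)$, and finally taking the supremum over all simple $0 \le s \le f$ gives $L \ge \mu(f)$ by definition of $\mu(f)$. Combining the two inequalities gives $\lim_{n\to\infty} \mu(f_n) = \mu(f) \le \infty$.

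For the almost-everywhere strengthening, I would let $N \in \calS$ be a $\mu$-null set outside of which $f_n \to f$ pointwise (enlarging $N$ if necessary so that the sequence is non-decreasing everywhere on $S \setminus N$), and apply the already-proved statement to $f_n\,\mathrm{I}_{S\setminus N}$ and $f\,\mathrm{I}_{S\setminus N}$: these converge everywhere, and their integrals coincide with those of $f_n$ and $f$ since they differ only on $N$.

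The main obstacle I anticipate is precisely the reverse inequality: getting the $c$-truncation bookkeeping right and justifying that $\mu(s;\cdot)$ is a measure so that continuity from below applies. Everything else is routine monotonicity and a limit exchange. (If in this development the integral is constructed so that the monotone convergence theorem is essentially definitional or is imported from the cited texts~\cite{book:Billingsley12,book:Williams91}, the proof collapses to that reference; the argument above is self-contained modulo the construction of $\int$ via simple functions.)
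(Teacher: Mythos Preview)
Your argument is the standard, correct proof of the monotone convergence theorem, and your final parenthetical is exactly on point: in this paper \cref{The:MON} is stated as a preliminary from measure theory and is not proved at all---the reader is referred to the textbooks \cite{book:Billingsley12,book:Williams91}. So there is nothing to compare on the paper's side; what you wrote is essentially the proof one finds in those references, with the $c\in(0,1)$ truncation and continuity-from-below of the set function $A\mapsto\mu(s;A)$ doing the real work, and the a.e.\ extension handled by restricting to the complement of a null set.
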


\begin{theorem}[Dominated convergence theorem]\label{The:DOM}
  If $\{f_n\}_{n \in \bbN}$ is a sequence of measurable functions on a measure space $(S,\calS,\mu)$, $\{f_n\}_{n \in \bbN}$ converges to $f$ pointwise, and $\{f_n\}_{n \in \bbN}$ is dominated by a nonnegative integrable function $g$ (i.e., $|f_n(x)| \le g(x)$ for all $n \in \bbN, x \in S$), then $f$ is integrable and $\lim_{n \to \infty} \mu(f_n) = \mu(f)$.
  
  Further, the theorem still holds if $f$ is chosen as a measurable function and ``$\{f_n\}_{n \in \bbN}$ converges to $f$ pointwise and is dominated by $g$'' holds \emph{almost everywhere}, rather than \emph{everywhere}.
\end{theorem}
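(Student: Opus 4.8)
The plan is to run the classical derivation of dominated convergence from the monotone convergence theorem (\Cref{The:MON}) by way of Fatou's lemma. First I would establish Fatou's lemma: for any sequence $\{h_n\}_{n \in \bbN}$ of nonnegative measurable functions on $(S,\calS,\mu)$, one has $\mu(\liminf_{n} h_n) \le \liminf_{n} \mu(h_n)$. This follows by setting $g_n \defeq \inf_{k \ge n} h_k$ and noting that $\{g_n\}_{n\in\bbN}$ is a non-decreasing sequence of nonnegative measurable functions converging pointwise to $\liminf_n h_n$, so \Cref{The:MON} gives $\mu(\liminf_n h_n) = \lim_n \mu(g_n)$; since $g_n \le h_n$ we get $\mu(g_n) \le \mu(h_n)$, hence $\lim_n \mu(g_n) = \liminf_n \mu(g_n) \le \liminf_n \mu(h_n)$.

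Next, from $|f_n| \le g$ for all $n$ and $f_n \to f$ pointwise I would deduce $|f| \le g$ pointwise, so $\mu(|f|) \le \mu(g) < \infty$; thus $f$ and every $f_n$ are integrable and $\mu(f), \mu(f_n)$ are all finite. Then I apply Fatou's lemma to the two sequences of nonnegative measurable functions $\{g + f_n\}_n$ and $\{g - f_n\}_n$. For the first: $\mu(g) + \mu(f) = \mu(g+f) \le \liminf_n \mu(g+f_n) = \mu(g) + \liminf_n \mu(f_n)$, and cancelling the finite quantity $\mu(g)$ yields $\mu(f) \le \liminf_n \mu(f_n)$. For the second: $\mu(g) - \mu(f) = \mu(g-f) \le \liminf_n \mu(g-f_n) = \mu(g) - \limsup_n \mu(f_n)$, which gives $\limsup_n \mu(f_n) \le \mu(f)$. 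Chaining the two, $\limsup_n \mu(f_n) \le \mu(f) \le \liminf_n \mu(f_n)$, so $\lim_n \mu(f_n)$ exists and equals $\mu(f)$.

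For the almost-everywhere strengthening I would pick a $\mu$-null set $N \in \calS$ outside of which both $f_n \to f$ and $|f_n| \le g$ (for every $n$) hold, and replace $f_n, f, g$ by $f_n \mathrm{I}_{S \setminus N}$, $f \mathrm{I}_{S \setminus N}$, $g \mathrm{I}_{S \setminus N}$. These modified functions satisfy the hypotheses of the everywhere version, and since they differ from the originals only on the null set $N$ none of the relevant integrals change; applying the version just proved then finishes the argument. Alternatively, one invokes the a.e.\ form of \Cref{The:MON} directly inside the proof of Fatou's lemma and carries the qualifier ``a.e.'' through the computation.

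The argument is routine once Fatou's lemma is available; the one genuine idea is the symmetric device of adding and subtracting the dominating function $g$, which converts the two-sided claim into two applications of the one-sided Fatou inequality. The only points requiring care are (a) checking $\mu(g) < \infty$ so that it may legitimately be cancelled from both sides of the inequalities, and (b) the elementary identities $\liminf_n(-a_n) = -\limsup_n a_n$ and $\liminf_n(c + a_n) = c + \liminf_n a_n$ for finite $c$, which are used in the two estimates.
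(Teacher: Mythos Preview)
Your argument is the standard, correct derivation of dominated convergence from \Cref{The:MON} via Fatou's lemma. Note, however, that the paper does not supply its own proof of \Cref{The:DOM}: it is stated in the preliminaries (\Cref{Sec:MeasureTheory}) as a textbook fact, with the reader referred to the literature. So there is nothing to compare against; your write-up simply fills in a proof the paper deliberately omitted, and it does so correctly.
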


\subsection{Syntax and Semantics of Probabilistic Programs}
\label{Sec:SyntaxAndSemantics}

\begin{figure}
\centering
\begin{align*}
  S & \Coloneqq \iskip \mid \itick{c} \mid \iassign{x}{E} \mid \isample{x}{D} \mid \iinvoke{f} \mid \iloop{L}{S}  \\
  & \mid \iprob{p}{S_1}{S_2} \mid \icond{L}{S_1}{S_2}  \mid S_1 ; S_2 \\
  L & \Coloneqq \ctop \mid \cneg{L} \mid \cconj{L_1}{L_2} \mid \cle{E_1}{E_2} \\
  E & \Coloneqq x \mid c \mid \eadd{E_1}{E_2} \mid \emul{E_1}{E_2} \\
  D & \Coloneqq \kw{uniform}(a,b) \mid \cdots
\end{align*}
\caption{Syntax of \lang{}, where $p \in [0,1]$, $a,b,c \in \bbR$, $a < b$, $x\in \mathsf{VID}$ is a variable, and $f \in \mathsf{FID}$ is a function identifier.}
\label{Fi:Syntax}
\end{figure}

\paragraph{Syntax.}

\cref{Fi:Syntax} presents the syntax of \lang{}, where the metavariables $S$, $L$, $E$, and $D$ stand for statements, conditions, expressions, and distributions, respectively.
Each distribution $D$ is associated with a probability measure $\mu_D \in \bbD(\bbR)$.
For example, $\kw{uniform}(a,b)$ describes a uniform distribution on the interval $[a,b]$, and its corresponding probability measure is the integration of its density function $\mu_{\kw{uniform}(a,b)}(A) \defeq \int_A \frac{[a \le x \le b]}{b - a} dx $.
The statement ``$\isample{x}{D}$'' is a \emph{random-sampling} assignment, which draws from the distribution $\mu_D$ to obtain a sample value and then assigns it to $x$.
The statement ``$\iprob{p}{S_1}{S_2}$'' is a \emph{probabilistic-branching} statement, which executes $S_1$ with probability $p$, or $S_2$ with probability $(1-p)$.

The statement ``$\iinvoke{f}$'' makes a (possibly recursive) call to the function with identifier $f \in \mathsf{FID}$.
In this article, we assume that the functions only manipulate states that consist of global program variables.
The statement $\itick{c}$, where $c \in \bbR$ is a constant, is used to define the \emph{cost model}.
It adds $c$ to an anonymous global cost accumulator.
Note that our implementation supports local variables, function parameters, return statements, as well as accumulation of non-constant costs;
the restrictions imposed here are not essential, and are introduced solely to simplify the presentation.

We use a pair $\tuple{\scrD, S_{\mathsf{main}}}$ to represent an \lang{} program, where $\scrD$ is a finite map from function identifiers to their bodies and $S_{\mathsf{main}}$ is the body of the main function.

\begin{figure*}
\flushleft{\small\fbox{$\gamma \vdash E \Downarrow r $}\;\;\;\;``the expression $E$ evaluates to a real value $r$ under the valuation $\gamma$''}
\begin{mathpar}\small
  \Rule{E-Var}{ \gamma(x) = r }{ \gamma \vdash x \Downarrow r }
  \and
  \Rule{E-Const}{ }{ \gamma \vdash c \Downarrow c }
  \and
  \Rule{E-Add}{ \gamma \vdash E_1 \Downarrow r_1 \\ \gamma \vdash E_2 \Downarrow r_2 \\ r = r_1 + r_2 }{ \gamma \vdash \eadd{ E_1 }{ E_2} \Downarrow r }
  \and
  \Rule{E-Mul}{ \gamma \vdash E_1 \Downarrow r_1 \\ \gamma \vdash E_2 \Downarrow r_2 \\ r = r_1 \cdot r_2}{\gamma \vdash \emul{E_1}{ E_2 }\Downarrow r}
\end{mathpar}
\flushleft{\small\fbox{$\gamma \vdash L \Downarrow b$}\;\;\;\;``the condition $L$ evaluates to a Boolean value $b$ under the valuation $\gamma$''}
\begin{mathpar}\small
  \Rule{E-Top}{ }{ \gamma \vdash \ctop \Downarrow \top }
  \and
  \Rule{E-Neg}{ \gamma \vdash L \Downarrow b }{ \gamma \vdash \cneg{L} \Downarrow \neg b }
  \and
  \Rule{E-Conj}{ \gamma \vdash L_1 \Downarrow b_1 \\ \gamma \vdash L_2 \Downarrow b_2 }{ \gamma \vdash \cconj{L_1}{L_2} \Downarrow b_1 \wedge b_2 }
  \and
  \Rule{E-Le}{ \gamma \vdash E_1 \Downarrow r_1 \\ \gamma \vdash E_2 \Downarrow r_2 }{ \gamma \vdash \cle{E_1}{E_2} \Downarrow [r_1 \le r_2] }
\end{mathpar}
\flushleft{\small\fbox{$\tuple{\gamma,S,K,\alpha} \mapsto \mu$}\;\;\;\;``the configuration $\tuple{\gamma,S,K,\alpha}$ steps to a probability distribution $\mu$ on $\tuple{\gamma',S',K',\alpha'}$'s''}
\begin{mathpar}\small
  \Rule{E-Skip-Stop}{ }{ \tuple{\gamma,\iskip,\kstop,\alpha} \mapsto \delta(\tuple{\gamma,\iskip,\kstop,\alpha}) }
  \and
  \Rule{E-Skip-Loop}{ \gamma \vdash L \Downarrow b }{ \tuple{\gamma, \iskip , \kloop{S}{L}{K} ,\alpha} \mapsto [b] \cdot \delta(\tuple{\gamma,S, \kloop{S}{L}{K} ,\alpha}) + [\neg b] \cdot \delta( \tuple{\gamma,\iskip, K,\alpha} ) }
  \and
  \Rule{E-Skip-Seq}{ }{ \tuple{\gamma,\iskip,\kseq{S}{K},\alpha} \mapsto \delta( \tuple{\gamma,S,K,\alpha} )  }
  \and
  \Rule{E-Tick}{  }{ \tuple{\gamma,\itick{c}  ,K,\alpha} \mapsto \delta(\tuple{\gamma, \iskip ,K,\alpha+c} ) }
  \and
  \Rule{E-Assign}{ \gamma \vdash E \Downarrow r }{ \tuple{\gamma, \iassign{x}{E} ,K,\alpha} \mapsto \delta( \tuple{\gamma[x \mapsto r], \iskip ,K,\alpha} ) }
  \and
  \Rule{E-Sample}{  }{ \tuple{\gamma,\isample{x}{D} ,K,\alpha} \mapsto \mu_D \bind \lambda r. \delta( \tuple{ \gamma[x \mapsto r],  \iskip ,K,\alpha } ) }
  \and
  \Rule{E-Call}{  }{ \tuple{\gamma, \iinvoke{f}  ,K,\alpha} \mapsto \delta( \tuple{\gamma , \scrD(f) , K,\alpha } ) }
  \and
  \Rule{E-Prob}{ }{ \tuple{\gamma, \iprob{p}{S_1}{S_2} ,K,\alpha} \mapsto p \cdot \delta(\tuple{\gamma,S_1,K,\alpha}) + (1-p) \cdot \delta(\tuple{\gamma,S_2,K,\alpha}) }
  \and
  \Rule{E-Cond}{  \gamma \vdash L \Downarrow b }{ \tuple{\gamma,\icond{L}{S_1}{S_2} ,K,\alpha} \mapsto [b] \cdot \delta(\tuple{\gamma,S_1,K,\alpha}) + [\neg b] \cdot \delta(\tuple{\gamma,S_2,K,\alpha}) }
  \and
  \Rule{E-Loop}{ }{ \tuple{\gamma,\iloop{L}{S},K,\alpha} \mapsto \delta( \tuple{ \gamma,\iskip,\kloop{L}{S}{K},\alpha } ) }
  \and
  \Rule{E-Seq}{  }{ \tuple{\gamma,S_1;S_2, K,\alpha } \mapsto \delta( \tuple{ \gamma,S_1,\kseq{S_2}{K},\alpha } ) }
\end{mathpar}  
\caption{Rules of the operational semantics of \lang{}.}
\label{Fi:OperationalSemantics}
\end{figure*}

\paragraph{Semantics.}

We present a small-step operational semantics with continuations.
We follow a distribution-based approach~\cite{ICFP:BLG16,JCSS:Kozen81}
to define an operational cost semantics for \lang{}.
A \emph{program configuration} $\sigma \in \Sigma$ is a quadruple $\tuple{\gamma,S,K,\alpha}$ where $\gamma : \mathsf{VID} \to \bbR$ is a program state that maps variables to values, $S$ is the statement being executed, $K$ is a continuation that described what remains to be done after the execution of $S$, and $\alpha \in \bbR$ is the global cost accumulator.
A \emph{continuation} $K$ is either an empty continuation $\kstop$, a loop continuation $\kloop{L}{S}{K}$, or a sequence continuation $\kseq{S}{K}$.
Note that there does not exist a continuation for function calls, because we assume that functions only manipulate global program variables.
Nevertheless, it is a common approach to include a continuation component in the program configurations if functions have local variables.
An execution of an \lang{} program $\tuple{\scrD,S_\mathsf{main}}$ is initialized with $\tuple{\lambda\_. 0, S_{\mathsf{main}}, \kstop,0}$, and the termination configurations have the form $\tuple{\_,\iskip,\kstop,\_}$.  

Different from a standard semantics where each program configuration steps to at most one new configuration, a probabilistic semantics may pick several different new configurations.
The evaluation relation for \lang{} has the form $\sigma \mapsto \mu$ where $\mu \in \bbD(\Sigma)$ is a probability measure over configurations.
\cref{Fi:OperationalSemantics} collects the evaluation rules.
Note that in \lang{}, expressions $E$ and conditions $L$ are deterministic, so we define a standard big-step evaluation relation for them, written $\gamma \vdash E \Downarrow r$ and $\gamma \vdash L \Downarrow b$, where $\gamma$ is a valuation, $r \in \bbR$, and $b \in \{ \top, \bot \}$.
Most of the rules, except \textsc{(E-Sample)} and \textsc{(E-Prob)}, are also deterministic as they step to a Dirac measure.
The rule \textsc{(E-Prob)} constructs a distribution whose support has exactly two elements, which stand for the two branches of the probabilistic choice.
We write $\delta(\sigma)$ for the \emph{Dirac measure} at $\sigma$, defined as $\lambda A.[\sigma\in A]$ where $A$ is a measurable subset of $\Sigma$.
We also write $p \cdot \mu_1 + (1-p)\cdot \mu_2$ for a convex combination of measures $\mu_1$ and $\mu_2$ where $p \in [0,1]$, defined as $\lambda A. p \cdot \mu_1(A) + (1-p) \cdot \mu_2(A)$.
The rule \textsc{(E-Sample)} \emph{pushes} the probability distribution of $D$ to a distribution over post-sampling program configurations.

\begin{example}\label{Exa:EvaluationRelation}
  Suppose that a random sampling statement is being executed, i.e., the current configuration is
  \begin{center}
  $\tuple{\{ t \mapsto t_0 \}, (\isample{t}{\kw{uniform}(-1,2)}),K_0,\alpha_0}.$
  \end{center}
  The probability measure for the uniform distribution is $\lambda A. \int_A \frac{[-1 \le x \le 2]}{3}   dx$.
  Thus by the rule \textsc{(E-Sample)}, we derive the post-sampling probability measure over configurations via the following density function:
  \begin{center}
  $
  \lambda \tuple{\gamma,S,K,\alpha}. \frac{[-1 \le \gamma(t) \le 2]}{3} \cdot [S = \iskip \wedge K = K_0 \wedge \alpha = \alpha_0].
  $
  \end{center}
\end{example}

\subsection{Meta-Theory}
\label{Sec:MetaTheory}

To formally construct a measurable space of program configurations,
our approach is to construct a measurable space for each of the four components of configurations, and then use their \emph{product} measurable space as the semantic domain.
The \emph{product} of two measurable spaces $(S,\calS)$ and $(T,\calT)$ is defined as $(S,\calS) \otimes (T,\calT) \defeq (S \times T, \calS \otimes \calT)$, where $\calS \otimes \calT$ is the smallest $\sigma$-algebra that makes coordinate maps measurable, i.e., $\sigma(\{ \pi_1^{-1}(A) : A \in \calS \} \cup \{ \pi_2^{-1}(B) : B \in \calT \}))$, where $\pi_i$ is the $i$-th coordinate map.
\begin{itemize}
  \item Valuations $\gamma : \mathsf{VID} \to \bbR$ are finite real-valued maps, so we define $(V,\calV) \defeq (\bbR^{\mathsf{VID}},\calB(\bbR^{\mathsf{VID}}))$ as the canonical structure on a finite-dimensional space.
  
  \item The executing statement $S$ can contain real numbers, so we need to lift the Borel $\sigma$-algebra on $\bbR$ to program statements.
  Intuitively, statements with exactly the same structure can be treated as vectors of parameters that correspond to their real-valued components.
  Formally, we achieve this by constructing a metric space on statements and then extracting a Borel $\sigma$-algebra from the metric space.
  \cref{Fi:MetricsForSemantics} presents an inductively defined metric $d_S$ on statements, as well as metrics $d_E$, $d_L$, and $d_D$ on expressions, conditions, and distributions, respectively, as they are required by $d_S$.
  We denote the result measurable space by $(S,\calS)$.
  
  \item
  Similarly, we construct a measurable space $(K,\calK)$ on continuations by extracting from a metric space.
  \cref{Fi:MetricsForSemantics} shows the definition of a metric $d_K$ on continuations.
  
  \item The cost accumulator $\alpha \in \bbR$ is a real number, so we define $(W,\calW) \defeq (\bbR,\calB(\bbR))$ as the canonical measurable space on $\bbR$.
\end{itemize}
Then the semantic domain is defined as the product measurable space of the four components: $(\Sigma,\calO) \defeq (V,\calV) \otimes (S,\calS) \otimes (K,\calK) \otimes (W,\calW)$.

\begin{figure*}
  \centering
  \small
  
  \noindent\hrulefill
  
  \begin{minipage}{1.0\textwidth}
  \begin{align*}
    d_E(x,x) & \defeq 0 \\
    d_E(c_1,c_2) & \defeq \abs{c_1-c_2} \\
    d_E(\eadd{E_{11}}{E_{12}},\eadd{E_{21}}{E_{22}}) & \defeq d_E(E_{11},E_{21}) + d_E(E_{12},E_{22}) \\
    d_E(\emul{E_{11}}{E_{12}},\emul{E_{21}}{E_{22}}) & \defeq d_E(E_{11},E_{21}) + d_E(E_{12},E_{22}) \\
    d_E(E_1,E_2) & \defeq \infty~\text{otherwise} \\
    \end{align*}
  \end{minipage}
  
  \noindent\hrulefill
  
  \begin{minipage}{1.0\textwidth}
  \begin{align*}
    d_L(\ctop,\ctop) & \defeq 0 \\
    d_L(\cneg{L_1},\cneg{L_2}) & \defeq d_L(L_1,L_2) \\
    d_L(\cconj{L_{11}}{L_{12}},\cconj{L_{21}}{L_{22}}) & \defeq d_L(L_{11},L_{21}) + d_L(L_{12},L_{22}) \\
    d_L(\cle{E_{11}}{E_{12}},\cle{E_{21}}{E_{22}}) & \defeq d_E(E_{11},E_{21}) + d_E(E_{12},E_{22}) \\
    d_L(L_1,L_2) & \defeq \infty~\text{otherwise} \\
  \end{align*}
  \end{minipage}
  
  \noindent\hrulefill
  
  \begin{minipage}{1.0\textwidth}
    \begin{align*}
      d_D(\kw{uniform}(a_1,b_1),\kw{uniform}(a_2,b_2)) & \defeq |a_1-a_2| + |b_1-b_2| \\
      d_D(D_1,D_2) & \defeq \infty~\text{otherwise} \\
    \end{align*}
  \end{minipage}
  
  \noindent\hrulefill
  
  \begin{minipage}{1.0\textwidth}
  \begin{align*}
    d_S(\iskip,\iskip) & \defeq 0 \\
    d_S(\itick{c_1},\itick{c_2}) & \defeq |c_1 - c_2| \\
    d_S(\iassign{x}{E_1},\iassign{x}{E_2}) & \defeq d_E(E_1,E_2) \\
    d_S(\isample{x}{D_1},\isample{x}{D_2}) & \defeq d_D(D_1,D_2) \\
    d_S(\iinvoke{f},\iinvoke{f} ) & \defeq 0 \\
    d_S(\iprob{p_1}{S_{11}}{S_{12}},\iprob{p_2}{S_{21}}{S_{22}}) & \defeq |p_1-p_2| + d_S(S_{11},S_{21}) + d_S(S_{12},S_{22}) \\
    d_S(\icond{L_1}{S_{11}}{S_{12}},\icond{L_2}{S_{21}}{S_{22}}) & \defeq d_L(L_1,L_2) + d_S(S_{11},S_{21}) + d_S(S_{12},S_{22}) \\
    d_S(\iloop{L_1}{S_1}, \iloop{L_2}{S_2}) & \defeq d_L(L_1,L_2)+d_S(S_1,S_2) \\
    d_S(S_{11};S_{12} , S_{21};S_{22}) & \defeq d_S(S_{11},S_{21}) + d_S(S_{12},S_{22}) \\
    d_S(S_1,S_2) & \defeq \infty ~\text{otherwise} \\
  \end{align*}
  \end{minipage}
  
  \noindent\hrulefill
  
  \begin{minipage}{1.0\textwidth}
  \begin{align*}
    d_K(\kstop, \kstop ) & \defeq 0 \\
    d_K(\kloop{L_1}{S_1}{K_1},\kloop{L_2}{S_2}{K_2}) & \defeq d_L(L_1,L_2) + d_S(S_1,S_2) + d_K(K_1,K_2) \\
    d_K(\kseq{S_1}{K_1},\kseq{S_2}{K_2}) & \defeq d_S(S_1,S_2) + d_K(K_1,K_2) \\
    d_K(K_1,K_2) & \defeq \infty ~\text{otherwise} \\
  \end{align*}
  \end{minipage}
  
  \noindent\hrulefill
  
  \caption{Metrics for expressions, conditions, distributions, statements, and continuations.}
  \label{Fi:MetricsForSemantics}
\end{figure*}

The evaluation relation $\mapsto$ can be interpreted as a \emph{distribution transformer}, i.e., a probability kernel.

\begin{lemma}\label{Lem:SoundnessForExpCond}
  Let $\gamma :\mathsf{VID} \to \bbR$ be a valuation.
  \begin{itemize}
    \item Let $E$ be an expression. Then there exists a unique $r \in \bbR$ such that $\gamma \vdash E \Downarrow r$.
    \item Let $L$ be a condition. Then there exists a unique $b \in \{ \top, \bot \}$ such that $\gamma \vdash L \Downarrow b$.
  \end{itemize}
\end{lemma}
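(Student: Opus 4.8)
The plan is to prove the two bullets by structural induction, handling expressions first: the grammar of \cref{Fi:Syntax} lets conditions mention expressions (via $\cle{\cdot}{\cdot}$) but never the other way around, so the expression statement can be established independently and then used freely in the condition case. For existence I would exhibit the evident derivation built bottom-up from the syntax-directed rules of \cref{Fi:OperationalSemantics}; for uniqueness I would observe that the form of the judgment's conclusion pins down which rule can have produced it, and that rule's premises (together with the induction hypothesis) then determine the result uniquely.

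For the first bullet, induct on the structure of $E$. In the base cases $E = x$ and $E = c$, the only applicable rules are \textsc{(E-Var)} and \textsc{(E-Const)}, forcing $r = \gamma(x)$ — well-defined since $\gamma$ is a total map $\mathsf{VID} \to \bbR$ — and $r = c$, respectively. In the inductive cases $E = \eadd{E_1}{E_2}$ and $E = \emul{E_1}{E_2}$, the induction hypothesis yields unique $r_1, r_2$ with $\gamma \vdash E_i \Downarrow r_i$; then \textsc{(E-Add)} (resp.\ \textsc{(E-Mul)}) is the unique applicable rule and sets $r = r_1 + r_2$ (resp.\ $r = r_1 \cdot r_2$), which is uniquely determined because $+$ and $\cdot$ are functions on $\bbR$. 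No other derivation is possible since the head symbol of $E$ selects the rule, so both existence and uniqueness hold.

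For the second bullet, induct on the structure of $L$, now using the already-proved expression statement. The base case $L = \ctop$ uses \textsc{(E-Top)} and forces $b = \top$. For $L = \cneg{L'}$ and $L = \cconj{L_1}{L_2}$, the induction hypothesis supplies the unique Boolean value(s) of the subcondition(s), and \textsc{(E-Neg)} / \textsc{(E-Conj)} determine $b$ as $\neg b'$ / $b_1 \wedge b_2$. For $L = \cle{E_1}{E_2}$, the expression statement gives unique reals $r_1, r_2$, and \textsc{(E-Le)} sets $b = [r_1 \le r_2]$, again uniquely determined. In every case the rule is syntax-directed, completing the induction.

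I do not expect a genuine obstacle: the $\Downarrow$ systems are deterministic and structurally recursive, so the argument is mechanical. The only point worth stating explicitly is the one-directional mutual recursion between expressions and conditions noted above, which justifies sequencing the two inductions rather than running them simultaneously; alternatively, one can perform a single simultaneous induction on the combined term structure at the cost of threading both statements through each case.
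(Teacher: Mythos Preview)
Your proposal is correct and matches the paper's approach exactly: the paper's proof is the single line ``By induction on the structure of $E$ and $L$,'' and you have simply unfolded that induction case by case. Your observation about the one-directional dependence (conditions use expressions but not vice versa) is a nice clarification of why the two inductions can be sequenced, but it is not a departure from the paper's argument.
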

\begin{proof}
  By induction on the structure of $E$ and $L$.
\end{proof}

\begin{lemma}\label{Lem:EvaluationUniqueness}
  For every configuration $\sigma \in \Sigma$, there exists a unique $\mu\in\bbD(\Sigma,\calO)$ such that $\sigma \mapsto \mu$.
\end{lemma}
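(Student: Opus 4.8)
The plan is to run a single case analysis on the shape of $\sigma = \tuple{\gamma,S,K,\alpha}$ and read off the unique applicable rule of \cref{Fi:OperationalSemantics}. Those rules are syntax-directed: the pattern of the conclusion is fixed by the top-level form of $S$ (the nine productions of \cref{Fi:Syntax}), and in the single case $S=\iskip$ by the top-level form of $K$ (one of $\kstop$, $\kloop{L}{S'}{K'}$, $\kseq{S'}{K'}$). Enumerating these patterns shows that they are pairwise disjoint and jointly exhaustive, so for each $\sigma$ there is exactly one rule whose conclusion can match $\sigma \mapsto (-)$. For the rules carrying a side condition $\gamma \vdash E \Downarrow r$ (namely \textsc{(E-Assign)}) or $\gamma \vdash L \Downarrow b$ (\textsc{(E-Skip-Loop)}, \textsc{(E-Cond)}), \cref{Lem:SoundnessForExpCond} supplies a unique such $r$ or $b$. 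Hence the measure $\mu$ named on the right-hand side is uniquely determined, which gives uniqueness; it then remains only to check, case by case, that this $\mu$ really is an element of $\bbD(\Sigma,\calO)$, which gives existence.

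Most cases are immediate. A Dirac measure $\delta(\sigma')$ is a probability measure as soon as $\sigma' \in \Sigma$, and the configurations appearing under the Diracs are manifestly well-formed (e.g.\ the $S'$ and $K'$ extracted from $K=\kseq{S'}{K'}$ or $K=\kloop{L}{S'}{K'}$ are a statement and a continuation, and the updated accumulator $\alpha+c$ and updated valuation $\gamma[x\mapsto r]$ stay in $\bbR$ and $\bbR^{\mathsf{VID}}$). A finite convex combination $p\cdot\mu_1+(1-p)\cdot\mu_2$ of probability measures is again a probability measure whenever $p\in[0,1]$; this holds for \textsc{(E-Prob)} because the syntax requires $p\in[0,1]$, and for \textsc{(E-Skip-Loop)} and \textsc{(E-Cond)} because the two coefficients are the Iverson brackets $[b]$ and $[\neg b]$, which lie in $\{0,1\}$ and sum to $1$. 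This handles every rule except \textsc{(E-Sample)}.

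The one substantive point, and the step I expect to be the main obstacle, is \textsc{(E-Sample)}, whose right-hand side is the push-forward $\mu_D \bind (\lambda r.\,\delta(\tuple{\gamma[x\mapsto r],\iskip,K,\alpha}))$. To make sense of this I must verify that $\kappa \defeq \lambda r.\,\delta(\tuple{\gamma[x\mapsto r],\iskip,K,\alpha})$ is a probability kernel from $(\bbR,\calB(\bbR))$ to $(\Sigma,\calO)$. Writing $g:\bbR\to\Sigma$ for $g(r)\defeq\tuple{\gamma[x\mapsto r],\iskip,K,\alpha}$, we have $\kappa = \delta\circ g$, where $\delta$ is itself a probability kernel on $(\Sigma,\calO)$ (for each fixed $A\in\calO$ the map $\sigma'\mapsto\delta(\sigma')(A)=\mathrm{I}_A(\sigma')$ is measurable), so it suffices that $g$ be $(\calB(\bbR),\calO)$-measurable. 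Since $\calO$ is the product $\sigma$-algebra of the four component spaces, measurability of $g$ reduces to measurability of its four coordinates: three are constant maps, and the valuation coordinate $r\mapsto\gamma[x\mapsto r]$ is measurable into $(\bbR^{\mathsf{VID}},\calB(\bbR^{\mathsf{VID}}))$ because post-composing with each coordinate projection gives either a constant or the identity on $\bbR$. With $g$, hence $\kappa$, a kernel, $\mu_D\bind\kappa$ is a well-defined measure, and it is a probability measure since $\mu_D\in\bbD(\bbR)$ and $\int_{\bbR}\kappa(r,\Sigma)\,\mu_D(dr)=\int_{\bbR}1\,\mu_D(dr)=1$. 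Together with the uniqueness argument above, this completes the proof.
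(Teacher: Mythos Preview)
Your proposal is correct and follows essentially the same approach as the paper: a case analysis on $S$, followed by a case analysis on $K$ when $S=\iskip$, with an appeal to \cref{Lem:SoundnessForExpCond} for the rules carrying evaluation side conditions. The paper's own proof is only a two-line sketch and does not spell out the verification that each right-hand side lies in $\bbD(\Sigma,\calO)$; your additional treatment of the \textsc{(E-Sample)} case (checking that $r\mapsto\delta(\tuple{\gamma[x\mapsto r],\iskip,K,\alpha})$ is a probability kernel so that the push-forward is well-defined) fills in a detail the paper leaves implicit.
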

\begin{proof}
  Let $\sigma = \tuple{\gamma,S,K,\alpha}$.
  Then by case analysis on the structure of $S$, followed by a case analysis on the structure of $K$ if $S = \iskip$.
  The rest of the proof appeals to \cref{Lem:SoundnessForExpCond}. 
\end{proof}

\begin{theorem}\label{The:EvaluationIsKernel}
  The evaluation relation $\mapsto$ defines a probability kernel in $\bbK(\Sigma,\calO)$.
\end{theorem}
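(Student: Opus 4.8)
The plan is to leverage \cref{Lem:EvaluationUniqueness}: it turns $\mapsto$ into a genuine function $\sigma \mapsto \mu_\sigma$ from $\Sigma$ into $\bbD(\Sigma,\calO)$, so each $\mu_\sigma$ is already a probability measure and the defining clauses of a \emph{probability} kernel collapse to a single requirement, namely that for every fixed $A \in \calO$ the map $\lambda\sigma.\,\mu_\sigma(A)$ is $(\calO,\calB(\bbR))$-measurable. Everything below is devoted to this one measurability statement.

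The first step is to unfold the structure of $(S,\calS)$ and $(K,\calK)$. Since $d_S$ and $d_K$ assign distance $\infty$ to any two statements (resp.\ continuations) of different \emph{shape} --- i.e., differing once all real-valued fields are erased --- every shape class is clopen in the metric topology and hence lies in $\calS$ (resp.\ $\calK$); within a fixed shape class the metric is the $\ell^1$ distance on the vector of real parameters, so the trace $\sigma$-algebra there is the Borel $\sigma$-algebra of a finite-dimensional Euclidean space, with the obvious ``read off the parameters'' map a Borel isomorphism. As there are only countably many shapes (a shape is a finite term over a countable signature), this presents $(S,\calS)$ and $(K,\calK)$ as countable disjoint sums of finite-dimensional Borel spaces, and therefore exhibits $\Sigma$ as a countable measurable partition $\{\Sigma_{\mathfrak{s},\mathfrak{t}}\}$ indexed by the shape $\mathfrak{s}$ of the executing statement and the shape $\mathfrak{t}$ of the continuation. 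A function on $\Sigma$ is measurable iff its restriction to each $\Sigma_{\mathfrak{s},\mathfrak{t}}$ is, so it suffices to fix such a piece --- on which, by inspection, exactly one rule of \cref{Fi:OperationalSemantics} applies --- and argue per rule.

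For the deterministic rules (\textsc{E-Skip-Stop}, \textsc{E-Skip-Seq}, \textsc{E-Tick}, \textsc{E-Assign}, \textsc{E-Call}, \textsc{E-Loop}, \textsc{E-Seq}) we have $\mu_\sigma = \delta(g(\sigma))$ for a map $g$ that reassembles a configuration from the components of $\sigma$; $g$ is measurable because $\calO$ is generated by the coordinate maps and each output coordinate is built from input coordinates by projection, the finite-map update $(\gamma,r)\mapsto\gamma[x\mapsto r]$, substitution of fixed statements/continuations, and --- for \textsc{E-Assign} --- expression evaluation, which by \cref{Lem:SoundnessForExpCond} is the well-defined function assembled from projections, $+$, and $\times$, hence continuous and measurable. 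Then $\mu_\sigma(A) = [g(\sigma)\in A]$ is measurable, since $g$ is. For the guarded rules (\textsc{E-Skip-Loop}, \textsc{E-Cond}) one further partitions the piece into the two measurable parts where the guard evaluates to $\top$ and to $\bot$ (condition evaluation being, by \cref{Lem:SoundnessForExpCond}, a composition of comparisons, negation, and conjunction), on each of which the previous case applies; for \textsc{E-Prob}, $\mu_\sigma(A) = p\cdot[g_1(\sigma)\in A] + (1-p)\cdot[g_2(\sigma)\in A]$ with $p$ itself a coordinate of $\sigma$, again a measurable combination of measurable functions.

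The only rule demanding genuine measure theory is \textsc{E-Sample}, where $\mu_\sigma(A) = \int_{\bbR}[G(\sigma,r)\in A]\,\mu_{D_\sigma}(dr)$ with $G(\sigma,r) = \tuple{\gamma[x\mapsto r],\iskip,K,\alpha}$ jointly measurable (same coordinatewise reasoning), so $[G(\cdot,\cdot)\in A]$ is the indicator of some $C$ in the product $\sigma$-algebra of the piece with $\calB(\bbR)$; moreover $\lambda(\sigma,B).\,\mu_{D_\sigma}(B)$ is a probability kernel --- on the piece the distribution family is fixed, its parameters are coordinates of $\sigma$, and e.g.\ $\mu_{\kw{uniform}(a,b)}(B) = \int_B \frac{[a\le x\le b]}{b-a}\,dx$ is jointly measurable in $(a,b,x)$ and hence measurable in $(a,b)$ by Tonelli, the same holding for any primitive distribution whose parameterization forms a kernel (the natural well-formedness condition on \lang{}). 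I would then invoke the standard fact --- provable by a monotone-class/Dynkin argument from measurable rectangles, using \cref{The:MON}, if one wants it self-contained --- that $\sigma\mapsto\int [(\sigma,r)\in C]\,\kappa(\sigma,dr)$ is measurable for any probability kernel $\kappa$ and any $C$ in the product $\sigma$-algebra, which finishes the \textsc{E-Sample} case. I expect the main obstacle to be organizational rather than conceptual: pinning down the metric-induced $\sigma$-algebras on statements and continuations tightly enough to license the shape decomposition and the coordinatewise measurability of $g$, $G$, and the expression/condition evaluators, together with the modest amount of measure theory packaged into the \textsc{E-Sample} case.
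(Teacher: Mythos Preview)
Your proposal is correct and follows essentially the same approach as the paper: both decompose $\Sigma$ into countably many measurable pieces indexed by syntactic shape (the paper calls these \emph{skeletons}), observe that each piece determines a unique evaluation rule, and then verify measurability of $\sigma\mapsto\mu_\sigma(A)$ rule by rule, handling \textsc{E-Sample} via the kernel property of the primitive distribution families. Your presentation is somewhat more streamlined---arguing directly that the map is a measurable combination of measurable ingredients rather than computing the preimages $(\lambda\sigma.\,\mu_\sigma(A))^{-1}((-\infty,t])$ explicitly as the paper does---but the underlying structure is identical.
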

\begin{proof}
  \cref{Lem:EvaluationUniqueness} tells us that $\mapsto$ can be seen as a function $\hat{\mapsto}$ defined as follows:
  \[
  \hat{\mapsto}(\sigma,A) \defeq \mu(A) \quad \text{where} \quad \sigma \mapsto \mu.
  \]
  It is clear that $\lambda A. \hat{\mapsto}(\sigma,A)$ is a probability measure.
  On the other hand, to show that $\lambda\sigma. \hat{\mapsto}(\sigma,A)$ is measurable for any $A \in \calO$, we need to prove that for $B \in \calB(\bbR)$, it holds that $\scrO(A,B) \defeq (\lambda\sigma. \hat{\mapsto}(\sigma,A))^{-1}(B) \in \calO$.
  
  We introduce \emph{skeletons} of programs to separate real numbers and discrete structures.
  \begin{align*}
    \hat{S} & \Coloneqq \iskip \mid \itick{\square_\ell} \mid \iassign{x}{\hat{E}} \mid \isample{x}{\hat{D}} \mid \iinvoke{f} \mid \iloop{\hat{L}}{\hat{S}} \\
    & \mid \iprob{\square_\ell}{\hat{S}_1}{\hat{S}_2} \mid \icond{\hat{L}}{\hat{S}_1}{\hat{S}_2} \mid \hat{S}_1; \hat{S}_2 \\
    \hat{L} & \Coloneqq \ctop \mid \cneg{\hat{L}} \mid \cconj{\hat{L}_1}{\hat{L}_2} \mid \cle{\hat{E}_1}{\hat{E}_2} \\
    \hat{E} & \Coloneqq x \mid \square_\ell \mid \eadd{\hat{E}_1}{\hat{E}_2} \mid \emul{\hat{E}_1}{\hat{E}_2} \\
    \hat{D} & \Coloneqq \kw{uniform}(\square_{\ell_a},\square_{\ell_b}) \\
    \hat{K} & \Coloneqq \kstop \mid \kloop{\hat{L}}{\hat{S}}{\hat{K}} \mid \kseq{\hat{S}}{\hat{K}}
  \end{align*}
  The \emph{holes} $\square_\ell$ are placeholders for real numbers parameterized by \emph{locations} $\ell \in \m{LOC}$.
  We assume that the holes in a program structure are always pairwise distinct.
  Let $\eta : \m{LOC} \to \bbR$ be a map from holes to real numbers and $\eta(\hat{S})$ (resp., $\eta(\hat{L})$, $\eta(\hat{E})$, $\eta(\hat{D})$, $\eta(\hat{K})$) be the instantiation of a statement (resp., condition, expression, distribution, continuation) skeleton by substituting $\eta(\ell)$ for $\square_\ell$.
  One important property of skeletons is that the \emph{distance} between any concretizations of two different skeletons is always infinity with respect to the metrics in \cref{Fi:MetricsForSemantics}.
  
  Observe that
  \[
  \scrO(A,B) = \bigcup_{\hat{S},\hat{K}} \scrO(A,B) \cap \{ \tuple{\gamma,\eta(\hat{S}),\eta(\hat{K}),\alpha} \mid \text{any}~\gamma,\alpha,\eta  \}
  \]
  and that $\hat{S},\hat{K}$ are countable families of statement and continuation skeletons.
  Thus it suffices to prove that every set in the union, which we denote by $\scrO(A,B) \cap \scrC(\hat{S},\hat{K})$ later in the proof, is measurable.
  Note that $\scrC(\hat{S},\hat{K})$ itself is indeed measurable.
  Further, the skeletons $\hat{S}$ and $\hat{K}$ are able to determine the evaluation rule for all concretized configurations.
  Thus we can proceed by a case analysis on the evaluation rules.
  
  To aid the case analysis, we define a deterministic evaluation relation $\xmapsto{\text{det}}$ by getting rid of the $\delta(\cdot)$ notations in the rules in \cref{Fi:OperationalSemantics} except probabilistic ones \textsc{(E-Sample)} and \textsc{(E-Prob)}.
  Obviously, $\xmapsto{\text{det}}$ can be interpreted as a measurable function on configurations.
  
  \begin{itemize}
    \item If the evaluation rule is deterministic, then we have
    \begin{align*}
      \scrO(A,B) \cap \scrC(\hat{S},\hat{K}) & = \{ \sigma \mid \sigma \mapsto \mu, \mu(A) \in B \} \cap \scrC(\hat{S},\hat{K}) \\
       & = \{ \sigma \mid \sigma \xmapsto{\text{det}} \sigma', [\sigma' \in A] \in B \} \cap \scrC(\hat{S},\hat{K}) \\
      & =
      \begin{dcases*}
          \scrC(\hat{S},\hat{K}) & if $\{0,1\} \subseteq B$ \\
          {\xmapsto{\text{det}}}^{-1}(A) \cap \scrC(\hat{S},\hat{K}) & if $1 \in B$ and $0 \not\in B$ \\
          {\xmapsto{\text{det}}}^{-1}(A^c) \cap \scrC(\hat{S},\hat{K}) & if $0 \in B$ and $1 \not\in B$ \\
          \emptyset & if $\{0,1\} \cap B = \emptyset$.
      \end{dcases*} 
    \end{align*}
    The sets in all the cases are measurable, so is the set $\scrO(A,B) \cap \scrC(\hat{S},\hat{K})$.
    
    \item \textsc{(E-Prob)}: Consider $B$ with the form $(-\infty,t]$ with $t \in \bbR$.
    If $t \ge 1$, then $\scrO(A,B) = \Sigma$.
    Otherwise, let us assume $t < 1$.
    Let $\hat{S} = \iprob{\square}{\hat{S}_1}{\hat{S}_2}$.
    Then we have
    \begin{align*}
      \scrO(A,B) \cap \scrC(\hat{S},\hat{K}) & = \{ \sigma \mid \sigma \mapsto \mu, \mu(A) \in B \} \cap \scrC(\hat{S},\hat{K}) \\
       &= \{ \sigma \mid \sigma \mapsto p \cdot \delta(\sigma_1) + (1-p) \cdot \delta(\sigma_2), p \cdot [\sigma_1 \in A] + (1-p) \cdot [\sigma_2 \in A] \in B \} \cap \scrC(\hat{S},\hat{K}) \\
      & = \scrC(\hat{S},\hat{K}) \cap 
       \{  \tuple{\gamma,\iprob{p}{S_1}{S_2},K,\alpha} \mid \\
      & \quad p \cdot [\tuple{\gamma,S_1,K,\alpha} \in A] + (1-p) \cdot [\tuple{\gamma,S_2,K,\alpha} \in A] \le t  \}  \\
      & = \scrC(\hat{S},\hat{K}) \cap {} \\
      & \quad \{ \tuple{\gamma,\iprob{p}{S_1}{S_2},K,\alpha} \mid \\
      & \qquad (\tuple{\gamma,S_1,K,\alpha} \in A, \tuple{\gamma,S_2,K,\alpha}\not\in A, p \le t)  \vee  (\tuple{\gamma,S_2,K,\alpha} \in A, \tuple{\gamma,S_1,K,\alpha}\not\in A, 1 - p \le t) \}. 
    \end{align*}
    The set above is measurable because $A$ and $A^c$ are measurable, as well as $\{ p \le t\}$ and $\{ p \ge 1 - t \}$ are measurable in $\bbR$.
    
    \item \textsc{(E-Sample)}: Consider $B$ with the form $(-\infty,t]$ with $t \in \bbR$.
    Similar to the previous case, we assume that $t < 1$.
    Let $\hat{S} = \isample{x}{\kw{uniform}(\square_{\ell_a},\square_{\ell_b})}$, without loss of generality.
    Then we have
    \begin{align*}
      \scrO(A,B) \cap \scrC(\hat{S},\hat{K}) & = \{\sigma \mid \sigma \mapsto \mu, \mu(A) \in B \} \cap \scrC(\hat{S},\hat{K}) \\
      & = \{ \sigma \mid \sigma \mapsto \mu_D \bind \kappa_{\sigma}, \int \kappa_{\sigma}(r)(A) \mu_D(dr) \le t \} \cap \scrC(\hat{S},\hat{K}) \\
      & = \scrC(\hat{S},\hat{K}) \cap  \{ \sigma \mid \sigma \mapsto \mu_D \bind \kappa_{\sigma},  \mu_D(\{ r \mid \tuple{\gamma[x \mapsto r], \iskip,K,\alpha} \in A \}) \le t\} \\
      & = \scrC(\hat{S},\hat{K}) \cap \{ \tuple{\gamma,\isample{x}{\kw{uniform}(a,b)},K,\alpha} \mid a<b, \\
      & \quad \mu_{\kw{uniform}(a,b)}(\{ r \mid \tuple{\gamma[x \mapsto r], \iskip,K,\alpha} \in A \}) \le t  \}, 
    \end{align*}
    where $\kappa_{\tuple{\gamma, S ,K,\alpha}} \defeq \lambda r. \delta(\tuple{\gamma[x \mapsto r], \iskip, K, \alpha})$.
    For fixed $\gamma,K,\alpha$, the set $\{r \mid \tuple{\gamma[x \mapsto r],\iskip,K,\alpha} \in A\}$ is measurable in $\bbR$.
    For the distributions considered in this article, there is a probability kernel $\kappa_D : \bbR^{\mathrm{ar}(D)} \rightsquigarrow \bbR$.
    For example, $\kappa_{\kw{uniform}}(a,b)$ is defined to be $\mu_{\kw{uniform}(a,b)}$ if $a < b$, or $\mathbf{0}$ otherwise.
    Therefore, $\lambda(a,b). \kappa_{\kw{uniform}}(a,b)(\{r \mid \tuple{\gamma[x \mapsto r],\iskip,K,\alpha} \in A\})$ is measurable, and its inversion on $(-\infty,t]$ is a measurable set on distribution parameters $(a,b)$.
    Hence the set above is measurable.
  \end{itemize}
\end{proof}

\section{Trace-Based Expected-Cost Analysis}
\label{Sec:ExpectedCostAnalysis}

In this article, we harness Markov-chain-based reasoning~\cite{ESOP:KKM16,LICS:OKK16} to develop a Markov-chain cost semantics for \lang{}, based on the evaluation relation $\sigma \mapsto \mu$.
An advantage of this approach is that it allows us to study how the cost of every single evaluation step contributes to the accumulated cost at the exit of the program.

\subsection{Preliminaries: Martingale Theory}
\label{Se:MartingaleTheory}

If $\mu_1$ and $\mu_2$ are two probability measures on $(S,\calS)$ and $(T,\calT)$, respectively, then there exists a unique probability measure $\mu$ on $(S,\calS) \otimes (T,\calT)$, by Fubini's theorem, called the \emph{product measure} of $\mu_1$ and $\mu_2$, written $\mu_1 \otimes \mu_2$, such that $\mu(A \times B) = \mu_1(A) \cdot \mu_2(B)$ for all $A \in \calS$ and $B \in \calT$.

If $\mu$ is a probability measure on $(S,\calS)$ and $\kappa : (S,\calS) \rightsquigarrow (T,\calT)$ is a probability kernel, then we can construct a probability measure on $(S,\calS) \otimes (T,\calT)$ that captures all transitions from $\mu$ through $\kappa$: $\mu \otimes \kappa \defeq \lambda(A,B). \int_A \kappa(x,B)\mu(dx)$.
If $\mu$ is a probability measure on $(S_0,\calS_0)$ and $\kappa_i : (S_{i-1},\calS_{i-1}) \rightsquigarrow (S_i,\calS_i)$ is a probability kernel for $i=1,\cdots,n$, where $n \in \bbZ^+$, then we can construct a probability measure on $\bigotimes_{i=0}^n (S_i,\calS_i)$, i.e., the space of sequences of $n$ transitions by iteratively applying the kernels to $\mu$:
\begin{align*}
  \mu \otimes \textstyle\bigotimes_{i=1}^0 \kappa_i & \defeq \mu, \\
  \mu \otimes \textstyle\bigotimes_{i=1}^{k+1} \kappa_i & \defeq (\mu \otimes \textstyle\bigotimes_{i=1}^k \kappa_i) \otimes \kappa_{k+1}, & 0 \le k < n.
\end{align*}

Let $(S_i, \calS_i), i \in \calI$ be a family of measurable spaces.
Their product, denoted by $\bigotimes_{i \in \calI} (S_i,\calS_i) = (\prod_{i \in \calI} S_i, \bigotimes_{i \in \calI} \calS_i)$, is the product space with the smallest $\sigma$-algebra such that for each $i \in \calI$, the coordinate map $\pi_i$ is measurable.
The theorem below is widely used to construct a probability measure on an infinite product via probability kernels.

\begin{theorem}[Ionescu-Tulcea]\label{The:IonescuTulcea}
  Let $(S_i,\calS_i), i \in \bbZ^+$ be a sequence of measurable spaces.
  Let $\mu_0$ be a probability measure on $(S_0,\calS_0)$.
  For each $i \in \bbN$, let $\kappa_i : \bigotimes_{k=0}^{i-1} (S_k,\calS_k) \rightsquigarrow (S_i,\calS_i)$ be a probability kernel.
  Then there exists a sequence of probability measures $\mu_i \defeq \mu_0 \otimes \bigotimes_{k=1}^i \kappa_k$, $i \in \bbZ^+$, and there exists a uniquely defined probability measure $\mu$ on $\bigotimes_{k=0}^\infty (S_k,\calS_k)$ such that $\mu_i(A) = \mu(A \times \prod_{k=i+1}^\infty S_k)$ for all $i \in \bbZ^+$ and $A \in \bigotimes_{k=0}^i \calS_i$.
\end{theorem}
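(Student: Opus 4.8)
The plan is to follow the standard route: assemble the finite-horizon measures $\mu_i$, observe they form a consistent family, use them to define a premeasure on the algebra of cylinder sets, verify countable additivity of that premeasure by a diagonal argument that exploits the kernel structure, and finally invoke Carath\'eodory's extension theorem. First, for each $i \in \bbZ^+$ let $\mu_i \defeq \mu_0 \otimes \bigotimes_{k=1}^i \kappa_k$, which is a well-defined probability measure on $\bigotimes_{k=0}^i (S_k,\calS_k)$ by the iterated construction reviewed above, being a probability measure precisely because each $\kappa_k$ is a probability kernel. These are consistent: for $A \in \bigotimes_{k=0}^i \calS_k$ we have $\mu_{i+1}(A \times S_{i+1}) = \int_A \kappa_{i+1}(x,S_{i+1})\,\mu_i(dx) = \mu_i(A)$, using $\kappa_{i+1}(x,S_{i+1}) = 1$. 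Let $\calA$ be the collection of cylinder sets $A \times \prod_{k > i} S_k$ with $i \in \bbZ^+$ and $A \in \bigotimes_{k=0}^i \calS_k$; it is an algebra on $\prod_{k=0}^\infty S_k$ with $\sigma(\calA) = \bigotimes_{k=0}^\infty \calS_k$, and consistency makes $\mu(A \times \prod_{k>i} S_k) \defeq \mu_i(A)$ a well-defined, finitely additive set function on $\calA$ with $\mu(\prod_k S_k) = 1$.

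The crux is to show $\mu$ is countably additive on $\calA$, which, for a finitely additive set function with $\mu(\prod_k S_k) = 1$, is equivalent to continuity at $\emptyset$: $\mu(C_n) \to 0$ whenever $C_n \in \calA$, $C_n \supseteq C_{n+1}$, and $\bigcap_n C_n = \emptyset$. I argue the contrapositive: suppose $\inf_n \mu(C_n) = \varepsilon > 0$; I will produce a point of $\bigcap_n C_n$. Without loss of generality $C_n = A_n \times \prod_{k > m_n} S_k$ with $m_n$ nondecreasing. For a cylinder $C = A \times \prod_{k>m} S_k$, define functions $g^{(j)}_C$ of the first $j$ coordinates by integrating the indicator $[\,\cdot \in C\,]$ over coordinates $j,j+1,\dots$ using the kernels (all but finitely many integrations being trivial because the kernels are probability kernels); by the kernel axioms and Fubini's theorem each $g^{(j)}_C$ is measurable and takes values in $[0,1]$, $g^{(0)}_C = \mu(C)$, $g^{(m+1)}_C(x_0,\dots,x_m) = [(x_0,\dots,x_m) \in A]$, and one has the one-step identities $g^{(0)}_C = \int_{S_0} g^{(1)}_C\,d\mu_0$ and $g^{(j)}_C(x_0,\dots,x_{j-1}) = \int_{S_j} g^{(j+1)}_C(x_0,\dots,x_j)\,\kappa_j((x_0,\dots,x_{j-1}),dx_j)$ for $j \ge 1$. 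Since $C_n$ is decreasing, $g^{(j)}_{C_n}$ decreases in $n$ to a limit $h^{(j)}$. From $\int_{S_0} g^{(1)}_{C_n}\,d\mu_0 = \mu(C_n) \ge \varepsilon$ and \cref{The:DOM} (dominating by the constant $1$) we get $\int_{S_0} h^{(1)}\,d\mu_0 = \varepsilon$, so $\{h^{(1)} \ge \varepsilon\}$ has positive $\mu_0$-measure (if $h^{(1)} < \varepsilon$ $\mu_0$-a.e.\ then the integral would be strictly below $\varepsilon$), hence there is $x_0^\ast$ with $h^{(1)}(x_0^\ast) \ge \varepsilon$ and therefore $g^{(1)}_{C_n}(x_0^\ast) \ge \varepsilon$ for all $n$. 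Repeating the argument with the one-step identity at coordinate $1$, with $x_0^\ast$ fixed and against the probability measure $\kappa_1(x_0^\ast,\cdot)$, yields $x_1^\ast$ with $g^{(2)}_{C_n}(x_0^\ast,x_1^\ast) \ge \varepsilon$ for all $n$; inductively we obtain a point $x^\ast = (x_0^\ast,x_1^\ast,\dots)$ with $g^{(j)}_{C_n}(x_0^\ast,\dots,x_{j-1}^\ast) \ge \varepsilon$ for all $n$ and $j$. Fixing $n$ and taking $j = m_n + 1$, the integrand has collapsed to an indicator, so $[(x_0^\ast,\dots,x_{m_n}^\ast) \in A_n] \ge \varepsilon > 0$; hence it equals $1$, i.e.\ $x^\ast \in C_n$. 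As $n$ was arbitrary, $x^\ast \in \bigcap_n C_n$, contradicting $\bigcap_n C_n = \emptyset$.

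With countable additivity in hand, Carath\'eodory's extension theorem extends $\mu$ from $\calA$ to a measure on $\sigma(\calA) = \bigotimes_{k=0}^\infty \calS_k$, necessarily a probability measure since $\mu(\prod_k S_k) = 1$; uniqueness follows because $\calA$ is an algebra generating the $\sigma$-algebra and $\mu$ is finite (a $\pi$--$\lambda$ argument). The required identity $\mu_i(A) = \mu(A \times \prod_{k > i} S_k)$ for $A \in \bigotimes_{k=0}^i \calS_k$ is then just the defining equation of $\mu$ on $\calA$. I expect the main obstacle to be the diagonal construction in the middle paragraph: finite additivity, the algebra structure, and the Carath\'eodory step are routine, but continuity at $\emptyset$ genuinely requires the kernel structure together with the convergence theorems (\cref{The:MON,The:DOM}) — this is exactly the point at which Ionescu--Tulcea sidesteps the topological regularity hypotheses needed by the Kolmogorov extension theorem.
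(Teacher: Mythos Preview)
Your argument is the standard textbook proof of Ionescu--Tulcea and it is correct: the consistency of the $\mu_i$, the algebra of cylinders, the reduction of countable additivity to continuity at $\emptyset$, the backward-integral functions $g^{(j)}_{C}$ with their one-step recursion, the passage to the limit via \cref{The:DOM} (domination by the constant $1$), and the diagonal selection of $x^\ast$ all go through as you describe; the final appeal to Carath\'eodory and the $\pi$--$\lambda$ uniqueness argument are routine.

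There is, however, nothing to compare against: the paper does not prove \cref{The:IonescuTulcea}. It is quoted as a classical result from measure theory, with the reader referred to standard references (Billingsley, Williams) for details. Your write-up therefore goes well beyond what the paper supplies. If anything, note that in this paper's later application the spaces $(S_i,\calS_i)$ are all equal to $(\Sigma,\calO)$ and the kernels depend only on the last coordinate (the Markov case), so the full generality you prove is more than is needed here, though of course it does no harm.
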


Let $(\Omega,\calF,\bbP)$ be a probability space, i.e., a measure space where $\bbP$ is a probability measure.
We write $\calL^1$ for $\calL^1(\Omega,\calF,\bbP)$.
Let $X \in \calL^1$ be a random variable, and $\calG \subseteq \calF$ be a sub-$\sigma$-algebra of $\calF$.
Then there exists a random variable $Y \in \calL^1$ such that
(i) $Y$ is $\calG$-measurable, and
(ii) for each $G \in \calG$, it holds that $\expe[Y;G] = \expe[X;G]$.
Such a random variable $Y$ is said to be a version of \emph{conditional expectation} $\expe[X \mid \calG]$ of $X$ given $\calG$.
Conditional expectations admit almost-sure uniqueness.
On the other hand, if $X$ is a nonnegative measurable function, the conditional expectation $\expe[X \mid \calG]$ is also defined and almost-surely unique.

Intuitively, for some $\omega \in \Omega$, $\expe[X \mid \calG](\omega)$ is the expectation of $X$ given the set of values $Z(\omega)$ for every $\calG$-measurable random variable $Z$.
For example, if $\calG = \{\emptyset,\Omega\}$, which contains no information, then $\expe[X \mid \calG](\omega) = \expe[X]$.
If $\calG = \calF$, which contains full information, then $\expe[X \mid \calG](\omega) = X(\omega)$.

We review some useful properties of conditional expectations below.

\begin{theorem}\label{The:ConditionalExpectationProperties}
  Let $X \in \calL^1$, and $\calG,\calH$ be sub-$\sigma$-algebras of $\calF$.
  \begin{enumerate}[(1)]
    \item\label{Item:TotalExpectation} If $Y$ is a version of $\expe[X \mid \calG]$, then $\expe[Y] = \expe[X]$.
    
    \item\label{Item:Observed} If $X$ is $\calG$-measurable, then $\expe[X \mid \calG] = X$, a.s.
    
    \item\label{Item:TakingOutKnown} If $Z$ is $\calG$-measurable and $Z \cdot X \in \calL^1$, then $\expe[Z \cdot X \mid \calG] = Z \cdot \expe[X \mid \calG]$, a.s.
    Further, if $X$ is a nonnegative measurable function and $Z$ is a nonnegative $\calG$-measurable function, then the property also holds, with the understanding that both sides might be infinite.
  \end{enumerate}
\end{theorem}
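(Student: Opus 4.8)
The plan is to handle all three parts in the same way: in each case I exhibit an explicit random variable, verify that it satisfies the two defining clauses of a version of the conditional expectation — namely $\calG$-measurability and agreement of $\calG$-localized expectations with the conditioned variable — and then invoke the almost-sure uniqueness of conditional expectation recalled just before the statement.

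The first claim is immediate: applying clause~(ii) in the definition of $\expe[X \mid \calG]$ to the set $G = \Omega \in \calG$ gives $\expe[Y] = \expe[Y;\Omega] = \expe[X;\Omega] = \expe[X]$. For the second claim, I observe that $X$ is itself a version of $\expe[X \mid \calG]$: clause~(i) is exactly the hypothesis that $X$ is $\calG$-measurable, and clause~(ii), which reads $\expe[X;G] = \expe[X;G]$ for all $G \in \calG$, is trivial; almost-sure uniqueness then forces $\expe[X \mid \calG] = X$ a.s.

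The third claim is the substantial one, and I would prove it by the ``standard machine'' — measure-theoretic induction on $Z$. Fix a version $Y$ of $\expe[X \mid \calG]$. Since a product of two $\calG$-measurable functions is $\calG$-measurable, it suffices to show $\expe[Z \cdot Y; G] = \expe[Z \cdot X; G]$ for every $G \in \calG$; this makes $Z \cdot Y$ a version of $\expe[Z \cdot X \mid \calG]$, and the claim follows by uniqueness. For $Z = \mathrm{I}_{G_0}$ with $G_0 \in \calG$ this identity is precisely clause~(ii) for $Y$ evaluated at the set $G_0 \cap G \in \calG$, and by linearity of the integral it extends to all nonnegative simple $\calG$-measurable $Z$. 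For an arbitrary nonnegative $\calG$-measurable $Z$ together with a nonnegative $X$ (whence $Y \ge 0$ a.s.), I would choose simple $\calG$-measurable $Z_n \uparrow Z$ and pass to the limit on both sides with the monotone convergence theorem (\cref{The:MON}); this already proves the nonnegative-function variant stated in the theorem. Finally, for the integrable case I decompose $X = X^+ - X^-$ and $Z = Z^+ - Z^-$: each of the four products $Z^{\pm} X^{\pm}$ is bounded by $|Z|\,|X|$ and hence integrable because $Z \cdot X \in \calL^1$, and the same domination shows that $\expe[X^+ \mid \calG]$ and $\expe[X^- \mid \calG]$ are finite a.s., so $Z \cdot \expe[X \mid \calG]$ is almost surely free of any $\infty - \infty$ ambiguity; applying the nonnegative result to each of the four products and recombining by linearity of integration and of conditional expectation completes the argument.

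I expect the only genuine friction to be in this last step: tracking the integrability of the intermediate products $Z^{\pm} X^{\pm}$, the almost-sure finiteness of $\expe[X^{\pm} \mid \calG]$, and the way the finitely many ``almost surely'' qualifiers compose across the decomposition. The indicator, simple-function, and monotone-limit steps are entirely routine.
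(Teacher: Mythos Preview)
The paper does not actually prove this theorem: it appears in the preliminaries (\cref{Se:MartingaleTheory}) as a review of standard facts about conditional expectation, stated without proof and with an implicit reference to the cited textbooks. So there is no ``paper's own proof'' to compare against.

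Your argument is correct and is exactly the standard textbook route (e.g., Williams, \emph{Probability with Martingales}, \S9.7). One small wording issue in part~(3): when you write that ``the same domination shows that $\expe[X^+ \mid \calG]$ and $\expe[X^- \mid \calG]$ are finite a.s.,'' the relevant fact is not domination by $|Z|\,|X|$ but simply that $X \in \calL^1$ forces $X^{\pm} \in \calL^1$, hence their conditional expectations lie in $\calL^1$ and are finite a.s. This is a phrasing nit, not a gap; the recombination step then goes through as you describe.
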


A \emph{filtration} is a sequence $\{\calF_n\}_{n \in \bbZ^+}$ of sub-$\sigma$-algebras of a $\sigma$-algebra $\calF$ such that $\calF_n \subseteq \calF_{n+1}$ for all $n \in \bbZ^+$.
A probability space with a filtration, written $(\Omega,\calF,\{\calF_n\}_{n \in \bbZ^+},\bbP)$, is called a \emph{filtered probability space}.
A stochastic process $\{X_n\}_{n \in \bbZ^+}$ is said to be \emph{adapted} with respect to the filtration $\{\calF_n\}_{n \in \bbZ^+}$ if $X_n$ is $\calF_n$-measurable for each $n \in \bbZ^+$.
The most common way of producing filtrations is to generate them from stochastic processes, i.e., for a stochastic process $\{X_n\}_{n \in \bbZ^+}$, the filtration $\{\calF_n^X\}_{n \in \bbZ^+}$, given by $\calF_n^X \defeq \sigma(X_0,X_1,\cdots,X_n), n \in \bbZ^+$, is said to be \emph{generated} by $\{X_n\}_{n \in \bbZ^+}$.
A stochastic process $\{Y_n\}_{n \in \bbZ^+}$ is called an \emph{$\{\calF_n\}_{n\in\bbZ^+}$-martingale} if
(i) $\{Y_n\}_{n \in \bbZ^+}$ is $\{\calF_n\}_{n \in \bbZ^+}$-adapted,
(ii) $Y_n \in \calL^1$ for all $n \in \bbZ^+$, and
(iii) $\expe[Y_{n+1} \mid \calF_n] = Y_n$, a.s., for all $n \in \bbZ^+$.
A \emph{supermartingale} (resp., \emph{submartingale}) is defined similarly, except that (iii) is replaced by $\expe[Y_{n+1} \mid \calF_n] \le Y_n$, a.s. (resp., $\expe[Y_{n+1} \mid \calF_n] \ge Y_n$, a.s.).

Some results about martingales can be transferred to super-/sub-martingales using the following theorem.

\begin{theorem}[Doob-Meyer decomposition]\label{The:MartingaleDecomposition}
  Let $\{Y_n\}_{n \in \bbZ^+}$ be a submartingale.
  Then, there exist a martingale $\{M_n\}_{n \in \bbZ^+}$ and a predictable process $\{A_n\}_{n \in \bbN}$ (with $A_0=0$ adjoined) such that $A_n \in \calL^1$, $A_n \le A_{n+1}$, a.s., for all $n \in \bbZ^+$, and $Y_{n} = M_n + A_n, \forall n \in \bbZ^+$.
  On the other hand, if $\{Y_n\}_{n \in \bbZ^+}$ is a supermartingale, the symmetric statement with $Y_n=M_n-A_n, \forall n \in \bbZ^+$ also holds.
\end{theorem}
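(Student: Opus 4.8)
The plan is to give the explicit Doob construction. First I would define the process $\{A_n\}$ that accumulates the ``upward drift'' of the submartingale: set $A_0 \defeq 0$ and, for $n \in \bbN$,
\[
  A_n \defeq \sum_{k=1}^{n} \bigl( \expe[Y_k \mid \calF_{k-1}] - Y_{k-1} \bigr),
\]
and then define $M_n \defeq Y_n - A_n$ for all $n \in \bbZ^+$. All the required properties then reduce to bookkeeping with the properties of conditional expectation from \cref{The:ConditionalExpectationProperties}.

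The key steps, in order, are: (i) \emph{Predictability of $A_n$:} each summand $\expe[Y_k \mid \calF_{k-1}] - Y_{k-1}$ is $\calF_{k-1}$-measurable, hence $\calF_{n-1}$-measurable (using $\calF_{k-1} \subseteq \calF_{n-1}$ for $k \le n$), so $A_n$ is $\calF_{n-1}$-measurable. (ii) \emph{Integrability:} since $Y_k \in \calL^1$, the conditional expectation $\expe[Y_k \mid \calF_{k-1}]$ lies in $\calL^1$ by definition, so each summand, and therefore $A_n$, lies in $\calL^1$. (iii) \emph{Monotonicity:} $A_{n+1} - A_n = \expe[Y_{n+1} \mid \calF_n] - Y_n \ge 0$ a.s., precisely by the submartingale inequality. (iv) \emph{$\{M_n\}$ is a martingale:} it is adapted because $Y_n$ is $\calF_n$-measurable and $A_n$ is $\calF_{n-1}$-measurable (hence $\calF_n$-measurable); it is in $\calL^1$ as a difference of $\calL^1$ random variables; and since $A_{n+1}$ is $\calF_n$-measurable, \cref{The:ConditionalExpectationProperties}\eqref{Item:Observed} and linearity of conditional expectation give
\[
  \expe[M_{n+1} \mid \calF_n] = \expe[Y_{n+1} \mid \calF_n] - A_{n+1} = \expe[Y_{n+1} \mid \calF_n] - A_n - \bigl(\expe[Y_{n+1}\mid\calF_n] - Y_n\bigr) = Y_n - A_n = M_n, \quad \text{a.s.}
\]
Finally, $Y_n = M_n + A_n$ holds by construction.

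For the supermartingale case I would simply apply the result just proved to $\{-Y_n\}_{n\in\bbZ^+}$, which is a submartingale: this yields a martingale $\{M_n'\}$ and a predictable nondecreasing $\{A_n'\}$ with $-Y_n = M_n' + A_n'$, and then $Y_n = (-M_n') - A_n'$ exhibits the desired decomposition with $M_n \defeq -M_n'$ (a martingale) and $A_n \defeq A_n'$ (predictable, nondecreasing, $\calL^1$, $A_0 = 0$).

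I do not expect a genuine obstacle here; the content is entirely in the algebra of the telescoping sum together with the ``taking out what is known'' step (\cref{The:ConditionalExpectationProperties}\eqref{Item:TakingOutKnown}/\eqref{Item:Observed}) used to pull $A_{n+1}$ outside $\expe[\,\cdot \mid \calF_n]$. The one point worth stating carefully is why each conditional expectation $\expe[Y_k \mid \calF_{k-1}]$ is genuinely in $\calL^1$ (so that $A_n$ and $M_n$ are well-defined as $\calL^1$ random variables rather than merely a.s.\ finite), which follows directly from the defining property (ii) of conditional expectation recalled just before \cref{The:ConditionalExpectationProperties}. Uniqueness of the decomposition is not asserted in the statement, so I would not include it, though it would follow the same way from the a.s.\ uniqueness of conditional expectations.
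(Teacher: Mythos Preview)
The paper does not supply a proof of \cref{The:MartingaleDecomposition}; it is stated as a classical background result without proof. Your proposal is the standard Doob construction and is correct in all the listed steps, so there is nothing to compare against and nothing to fix.
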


\subsection{A Markov-Chain Semantics}
\label{Sec:MarkovChainConstruction}

Let $(\Omega,\calF) \defeq \bigotimes_{n=0}^\infty (\Sigma,\calO)$ be a measurable space of \emph{infinite} traces on program configurations.
Let $\{\calF_n\}_{n \in \bbZ^+}$ be a filtration generated by coordinate maps $X_n(\omega) \defeq \omega_n$ for $n \in \bbZ^+$.
Let $\tuple{\scrD,S_{\mathsf{main}}}$ be an \lang{} program.
Let $\mu_0 \defeq \delta(\tuple{ \lambda\_.0 ,S_{\mathsf{main}},\kstop ,0 })$ be the initial distribution.
Let $\prob$ be the probability measure on infinite traces induced by \cref{The:IonescuTulcea} and \cref{The:EvaluationIsKernel}.
Then $(\Omega,\calF,\prob)$ forms a probability space on infinite traces of the program.
Intuitively, $\prob$ specifies the probability distribution over all possible executions of a probabilistic program.
The probability of an assertion $\theta$ with respect to $\prob$, written $\prob[\theta]$, is defined as $\prob(\{ \omega \mid \theta(\omega)~\text{is true} \})$.

To formulate the accumulated cost at the exit of the program, we define the \emph{stopping time} $T:\Omega \to \bbZ^+ \cup \{\infty\}$ of a probabilistic program as a random variable on the probability space $(\Omega,\calF,\prob)$ of program traces:
\[
T(\omega) \defeq \inf \{ n \in \bbZ^+ \mid \omega_n = \tuple{\_,\iskip,\kstop,\_} \},
\]
i.e., $T(\omega)$ is the number of evaluation steps before the trace $\omega$ reaches some termination configuration $\tuple{\_,\iskip,\kstop,\_}$.
We define the accumulated cost $A_T : \Omega \to \bbR$ with respect to the stopping time $T$ as
\[
A_T(\omega) \defeq A_{T(\omega)}(\omega),
\]
where $A_n : \Omega \to \bbR$ captures the accumulated cost at the $n$-th evaluation step for $n \in \bbZ^+$, which is defined as
\[
A_n(\omega) \defeq \alpha_n ~\text{where}~\omega_n =\tuple{\_,\_,\_,\alpha_n}.
\]
If the random variable $A_T$ is almost-surely well-defined,
the expected accumulated cost is given by the expectation $\expe[A_T]$ with respect to $\prob$.

\subsection{The Potential Method for Expected-Cost Analysis}
\label{Sec:PotentialMethod}

\paragraph{An informal account.}

\begin{figure}
  \centering
  \begin{subfigure}{0.49\columnwidth}
  \centering
  \begin{pseudo}
    \kw{func} $\mathsf{rdwalk}$() \kw{begin} \\+
      ${ \color{DarkBlue} \{ \; 2(d-x)+4 \;\} }$ \\
      \kw{if} $\cbin{<}{x}{d}$ \kw{then} \\+
        ${\color{DarkBlue} \{\;  2(d-x)+4 \;\}  }$ \\
        $\isample{t}{ \kw{uniform}({-1},2)}$; \\
        ${\color{DarkBlue} \{\;  2(d - x - t) + 5  \;\} }$ \\
        $ \iassign{x}{  \eadd{x}{t} }$; \\
        ${\color{DarkBlue} \{\;  2(d - x) + 5 \;\} }$ \\
        \kw{call} $\mathsf{rdwalk}$; \\
        ${\color{DarkBlue} \{ \; 1 \; \} }$ \\
        \kw{tick}(1) \\
        ${\color{DarkBlue} \{ \; 0 \; \} } $ \\-
      \kw{fi} \\-
    \kw{end}
  \end{pseudo}
  \end{subfigure}
  \begin{subfigure}{0.49\columnwidth}
  \centering
  \begin{pseudo}
    {\color{gray} \# $\mathsf{XID} \defeq \{x,d,t\}$ } \\
    {\color{gray} \# $\mathsf{FID} \defeq \{\mathsf{rdwalk} \}$ } \\
    {\color{gray} \# \emph{pre-condition}: $\{d > 0 \}$ } \\
    \kw{func} $\mathsf{main}$() \kw{begin} \\+
      $\iassign{x}{0}$; \\
      \kw{call} $\mathsf{rdwalk}$ \\-
    \kw{end}
  \end{pseudo}
  \end{subfigure}
  \caption{A bounded, biased random walk, implemented using recursion. The annotations show the derivation of an \emph{upper} bound on the expected accumulated cost.}
  \label{Fi:RecursiveRandomWalk}
\end{figure}

We present an informal review of the \emph{expected-potential method}~\cite{PLDI:NCH18}, which is also known as \emph{ranking super-martingales}~\cite{CAV:CS13,PLDI:WFG19,TACAS:KUH19,CAV:CFG16}, for expected-cost bound analysis of probabilistic programs.

The classic \emph{potential method} of amortized analysis~\cite{JADM:Tarjan85} can be automated to derive symbolic cost bounds for non-probabilistic programs~\cite{POPL:HJ03,APLAS:HH10}.
The basic idea is to define a \emph{potential function} $\phi: \Sigma \to \bbR^+$ that maps program states $\sigma \in \Sigma$ to nonnegative numbers,
where we assume each state $\sigma$ contains a cost-accumulator component $\sigma.\alpha$.
If a program executes with initial state $\sigma$ to final state $\sigma'$,
then it holds that $\phi(\sigma) \ge (\sigma'.\alpha - \sigma.\alpha) + \phi(\sigma')$,
where $(\sigma'.\alpha - \sigma.\alpha)$ describes the accumulated cost from $\sigma$ to $\sigma'$.
In other words, the initial potential $\phi(\sigma)$ must be sufficient to pay for the cost of the computation and for the potential $\phi(\sigma')$ of the resulting state $\sigma'$.
The potential method also enables \emph{compositional} reasoning: if a statement $S_1$ executes from $\sigma$ to $\sigma'$ and a statement $S_2$ executes from $\sigma'$ to $\sigma''$, then we have $\phi(\sigma) \ge (\sigma'.\alpha - \sigma.\alpha) + \phi(\sigma')$ and $\phi(\sigma') \ge (\sigma''.\alpha - \sigma'.\alpha) + \phi(\sigma'')$; therefore, we derive $\phi(\sigma) \ge (\sigma''.\alpha - \sigma.\alpha) + \phi(\sigma'')$ for the sequential composition $S_1;S_2$.
For non-probabilistic programs, the initial potential provides an \emph{upper} bound on the accumulated cost.
The potential method can also be used to derive \emph{lower} bounds; one only needs to change the direction of the potential inequality from $\phi(\sigma) \ge \Delta\alpha + \phi(\sigma')$ to $\phi(\sigma) \le \Delta\alpha + \phi(\sigma')$~\cite{SP:NDF17}.

This approach has been adapted to reason about expected costs of probabilistic programs~\cite{PLDI:NCH18,PLDI:WFG19}.
To derive upper bounds on the \emph{expected} accumulated cost of a program $S$ with initial state $\sigma$, one needs to take into consideration the \emph{distribution} of all possible executions.
More precisely, the potential function should satisfy the following property:
\begin{equation}\label{Eq:ExpectedPotentialInequality}
\phi(\sigma) \ge \expe_{\sigma' \sim \interp{S}(\sigma)}[ C(\sigma,\sigma') + \phi(\sigma')  ],
\end{equation}
where the notation $\expe_{x \sim \mu}[f(x)]$ represents the expected value of $f(x)$, where $x$ is drawn from the distribution $\mu$, $\interp{S}(\sigma)$ is the distribution over final states of executing $S$ from $\sigma$, and $C(\sigma,\sigma') \defeq \sigma'.\alpha - \sigma.\alpha$ is the execution cost from $\sigma$ to $\sigma'$.

\begin{example}\label{Exa:FstMomRecursiveRandomWalk}
  The program in \cref{Fi:RecursiveRandomWalk} implements a bounded, biased random walk.
  The main function consists of a single statement ``$\iinvoke{\mathsf{rdwalk}}$'' that invokes a recursive function.
  The variables $x$ and $d$ represent the current position and the ending position of the random walk, respectively.
  We assume that $d>0$ holds initially.
  In each step, the program samples the length of the current move from a uniform distribution on the interval $[{-1},2]$.
  The statement $\itick{1}$ adds one to a cost accumulator that counts the number of steps before
  the random walk ends. 
  We denote this accumulator by \id{tick} in the rest of this section.
  The program terminates with probability one and its expected accumulated cost is bounded by
  $2d+4$.

  \cref{Fi:RecursiveRandomWalk} annotates the $\mathsf{rdwalk}$ function with the derivation of an upper bound on the expected accumulated cost.
  The annotations, taken together, define an expected-potential function $\phi : \Sigma \to \bbR^+$ where a program state $\sigma \in \Sigma$ consists of a program point and a valuation for program variables.
  To justify the upper bound $2(d-x)+4$ for the function $\mathsf{rdwalk}$, one has to show that the potential right before the $\itick{1}$ statement should be at least $1$.
  This property is established by \emph{backward} reasoning on the function body:
  \begin{itemize}
    \item For $\iinvoke{\mathsf{rdwalk}}$, we apply the \emph{induction hypothesis} that the expected cost of the function $\mathsf{rdwalk}$ can be upper-bounded by $2(d-x)+4$.
    Adding the $1$ unit of potential need by the tick statement, we obtain $2(d-x)+5$ as the pre-annotation of the function call.
    
    \item For $\iassign{x}{\eadd{x}{t}}$, we substitute $x$ with $\eadd{x}{t}$ in the post-annotation of this statement to obtain the pre-annotation.
    
    \item For $\isample{t}{\kw{uniform}({-1},2)}$, because its post-annotation is $2(d-x-t)+5$, we compute its pre-annotation as
    \begin{center}
    $
    \begin{aligned}
     \expe_{t \sim \kw{uniform}({-1},2)}[2(d-x-t)+5] & = 2(d-x)+5 - 2 \cdot \expe_{t \sim \kw{uniform}({-1},2)}[t] \\
    & = 2(d-x)+5-2 \cdot \sfrac{1}{2} = 2(d-x)+4, 
    \end{aligned}
    $
    \end{center}
    which is exactly the upper bound we want to justify.
  \end{itemize}
\end{example}

In \cref{Sec:SoundDerivation}, we will present a program logic with the potential annotations shown in \cref{Fi:RecursiveRandomWalk}.
For now, we simply assume that the potential functions are defined directly on program states.

\paragraph{Trace-based reasoning.}

We review how prior work connects the expected-potential functions with martingales.
In this section, we focus on \emph{upper} bounds on the expected accumulated cost.

With the Markov-chain semantics, the expected-potential function can be defined as a measurable map from $\Sigma$ to $\bbR$ such that $\phi(\sigma)$ is an upper bound on the expected accumulated cost of the computation that \emph{continues} from the configuration $\sigma$, or more formally, $\phi(\tuple{\_,\iskip,\kstop,\_})=0$ and for any program configuration $\sigma\in\Sigma$, it holds that
\begin{equation}\label{Eq:PotentialProperty}
  \phi(\sigma) \ge \expe_{\sigma' \sim {\mapsto}(\sigma)}[ (\alpha'-\alpha) + \phi(\sigma')],
\end{equation}
where $\sigma=\tuple{\_,\_,\_,\alpha}$, $\sigma'=\tuple{\_,\_,\_,\alpha'}$, and ${\mapsto}(\sigma)$ is the probability measure after one evaluation step from the configuration $\sigma$.
Intuitively, the sum of the accumulated cost to reach a configuration $\sigma$ and the expected-potential function at $\sigma$ should be \emph{non-increasing}, on average, when the step $n$ increases.
Let $\Phi_n(\omega) \defeq \phi(\omega_n)$ be the expected-potential function at step $n$.
We define $Y_n(\omega) \defeq A_n(\omega) + \Phi_n(\omega)$ as the sum of accumulated cost and the expected-potential function at step $n$.
Then $Y_0(\omega)=A_0(\omega)+\Phi_0(\omega)=0+\phi(\omega_0)$ is the expected-potential function at the initial configuration, i.e., $Y_0 = \Phi_0$.
Taking the stopping time into consideration, we define the stopped version for these random variables as $A_T(\omega) \defeq A_{T(\omega)}(\omega)$, $\Phi_T(\omega) \defeq \Phi_{T(\omega)}(\omega)$, $Y_T(\omega) \defeq Y_{T(\omega)}(\omega)$.
Note that $A_T,\Phi_T,Y_T$ are not guaranteed to be well-defined everywhere.
Because $\phi(\sigma)=0$ if $\sigma$ is a termination configuration, we have $\Phi_T =0$ and $Y_T = A_T$.

\begin{lemma}\label{Lem:StoppedProcessConvergeAS}
 If $\prob[T < \infty] = 1$, i.e., the program terminates \emph{almost surely}, then $A_T$ is well-defined almost surely and $\prob[\lim_{n \to \infty} A_n = A_T] = 1$.
 Further, if $\{ A_n\}_{n \in \bbZ^+}$ is pointwise non-decreasing, then $\lim_{n \to \infty} \expe[A_n] = \expe[A_T]$.
\end{lemma}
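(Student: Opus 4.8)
My plan hinges on the fact that termination configurations are \emph{absorbing}: by rule \textsc{(E-Skip-Stop)} the one-step measure ${\mapsto}(\sigma)$ equals the Dirac mass $\delta(\sigma)$ whenever $\sigma=\tuple{\_,\iskip,\kstop,\_}$, so once a trace enters the (measurable) set $\Sigma_{\mathrm{stop}} \defeq \{\tuple{\gamma,\iskip,\kstop,\alpha} : \gamma \in V,\ \alpha \in W\}$ it never leaves and its cost accumulator stays frozen. I would first promote this to a $\prob$-almost-sure statement about traces. Since $\prob$ is built by Ionescu--Tulcea (\cref{The:IonescuTulcea}) from the kernel $\mapsto$ (\cref{The:EvaluationIsKernel}), the conditional law of $X_{n+1}$ given $\calF_n$ is ${\mapsto}(X_n)$; hence for any measurable $h:\Sigma\to\bbR$ we get $h(X_{n+1})=h(X_n)$ $\prob$-a.s.\ on the event $\{X_n\in\Sigma_{\mathrm{stop}}\}$, because there ${\mapsto}(X_n)=\delta(X_n)$. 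Applying this with $h=\mathrm{I}_{\Sigma_{\mathrm{stop}}}$ and with $h$ the cost projection $\tuple{\_,\_,\_,\alpha}\mapsto\alpha$, and intersecting over $n\in\bbZ^{+}$ the resulting $\prob$-full events, I obtain a $\prob$-full event $G$ on which, for every $n$, $X_n\in\Sigma_{\mathrm{stop}}$ implies both $X_{n+1}\in\Sigma_{\mathrm{stop}}$ and $A_{n+1}=A_n$.

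An immediate induction on $G$ then shows that $X_n\in\Sigma_{\mathrm{stop}}$ forces $A_m=A_n$ for all $m\ge n$. Now intersect $G$ with $\{T<\infty\}$, which has probability $1$ by hypothesis. For $\omega$ in this event, $T(\omega)\in\bbZ^{+}$ and $X_{T(\omega)}(\omega)\in\Sigma_{\mathrm{stop}}$ (the infimum defining $T$ is attained over the nonempty, well-ordered set $\{n : \omega_n\in\Sigma_{\mathrm{stop}}\}$), so $A_m(\omega)=A_{T(\omega)}(\omega)=A_T(\omega)$ for every $m\ge T(\omega)$. Thus $A_T(\omega)$ is a well-defined real number and the sequence $\{A_m(\omega)\}_m$ is eventually constant with limit $A_T(\omega)$. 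This establishes that $A_T$ is well-defined almost surely and that $\prob[\lim_{n\to\infty}A_n=A_T]=1$.

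For the ``further'' claim, note that $A_0=0$ $\prob$-a.s.\ (the initial distribution $\mu_0$ is the Dirac measure at $\tuple{\lambda\_.0,S_{\mathsf{main}},\kstop,0}$, whose cost component is $0$), so the pointwise-monotonicity hypothesis gives $A_n\ge A_0=0$ $\prob$-a.s.\ for all $n$. Consequently $\{\max(A_n,0)\}_{n\in\bbZ^{+}}$ is (everywhere) a non-decreasing sequence of nonnegative measurable functions, so by the monotone convergence theorem (\cref{The:MON}) $\lim_{n\to\infty}\expe[\max(A_n,0)]=\expe[f]$, where $f\defeq\lim_{n\to\infty}\max(A_n,0)\in[0,\infty]$ is measurable. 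Since $\max(A_n,0)=A_n$ a.s.\ and, by the first part, $f=A_T$ a.s., this reads $\lim_{n\to\infty}\expe[A_n]=\expe[A_T]$, an identity in $[0,\infty]$.

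The step I expect to require the most care is the first paragraph: the ``good'' structure of traces --- consecutive coordinates linked by $\mapsto$, the first coordinate pinned to the start configuration, termination configurations absorbing the cost --- holds only $\prob$-almost surely rather than for every point of $\Omega$, so one has to carve out the relevant $\prob$-null sets explicitly rather than reasoning path-by-path on all of $\Omega$. Everything downstream is just the eventual-constancy observation plus a single application of \cref{The:MON}.
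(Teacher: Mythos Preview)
Your proof is correct and follows essentially the same approach as the paper: absorption at termination configurations gives eventual constancy of $\{A_n\}$ on $\{T<\infty\}$, and then the monotone convergence theorem (\cref{The:MON}) handles the limit of expectations. You are in fact more careful than the paper, which glosses over the measure-theoretic subtlety that absorption holds only $\prob$-almost surely on the full product space $\Omega$ with the phrase ``by the property of the operational semantics.''
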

\begin{proof}
  By the property of the operational semantics, for $\omega \in \Omega$ such that $T(\omega)<\infty$, we have $A_n(\omega) = A_T(\omega)$ for all $n \ge T(\omega)$.
  Then we have
  \begin{align*}
    \prob[\lim_{n \to \infty} A_n = A_T] & = \prob( \{ \omega \mid \lim_{n \to \infty} A_n(\omega) = A_T(\omega) \}) \\
    & \ge \prob(\{ \omega \mid \lim_{n \to \infty} A_n(\omega) = A_T(\omega) \wedge T(\omega) < \infty\}) \\
    & = \prob(\{ \omega \mid A_{T(\omega)}(\omega) = A_T(\omega) \wedge T(\omega) < \infty \}) \\
    & = \prob(\{ \omega \mid T(\omega) < \infty \}) \\
    & = 1. 
  \end{align*}
  
  Now let us assume that $\{A_n\}_{n \in \bbZ^+}$ is pointwise non-decreasing.
  By the property of the operational semantics, we know that $A_0 = 0$.
  Therefore, $A_n$'s are nonnegative random variables, and their expectations $\expe[A_n]$'s are well-defined.
  By \cref{The:MON}, we have $\lim_{n \to \infty} \expe[A_n] = \expe[\lim_{n \to \infty} A_n]$.
  We then conclude by the fact that $\lim_{n \to \infty} A_n = A_T$, a.s., which we just proved.
\end{proof}

In the expected-cost bound analysis, instead of establishing a bound on $\expe[A_T]$ directly, we establish a bound on $\expe[Y_T]$, which gives us the following leverage:
\begin{equation*}
\begin{array}{|p{.97\columnwidth}|}
\hline
\textrm{We reason about $\expe[Y_n]$ and then prove that $\expe[Y_T] \le \expe[Y_0]$.}\\
\hline
\end{array}
\end{equation*}
We can prove that $\expe[Y_n] \le \expe[Y_0]$ for all $n \in \bbZ^+$.

\begin{lemma}\label{Lem:RankingFunctionMartingale}
  For all $n \in \bbZ^+$, it holds that
  \[
  \expe[Y_{n+1} \mid \calF_n] \le Y_n, \text{a.s.},
  \]
  i.e., the expectation of $Y_n$ conditioned on the execution history is an invariant for $n \in \bbZ^+$.
  As a consequence (via \cref{The:ConditionalExpectationProperties}\ref{Item:TotalExpectation}), it holds that $\expe[Y_n] \le \expe[Y_0]$.
\end{lemma}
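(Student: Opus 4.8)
The plan is to reduce the statement to a pointwise application of the defining inequality~\eqref{Eq:PotentialProperty} of the expected-potential function $\phi$, via the Markov property of the trace measure $\prob$. First I would observe that $Y_n$ and $Y_{n+1}$ are each a fixed measurable function of a single coordinate of the trace: setting $h(\tuple{\gamma,S,K,\alpha}) \defeq \alpha + \phi(\tuple{\gamma,S,K,\alpha})$ (measurable, since $\phi$ is assumed measurable), we have $Y_n = h \circ X_n$ and $Y_{n+1} = h \circ X_{n+1}$, because $A_n(\omega)$ is exactly the cost component of $\omega_n = X_n(\omega)$ and $\Phi_n(\omega) = \phi(X_n(\omega))$.

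The key step is to identify $\expe[Y_{n+1} \mid \calF_n]$. Since $(\Omega,\calF,\prob)$ is the infinite-trace measure produced by \cref{The:IonescuTulcea} from $\mu_0$ and the family of kernels that, at every step, equal the one-step evaluation kernel $\hat\mapsto \in \bbK(\Sigma,\calO)$ of \cref{The:EvaluationIsKernel} (applied to the last configuration), the coordinate process $\{X_n\}$ is a time-homogeneous Markov chain with transition kernel $\hat\mapsto$. Unwinding the Ionescu--Tulcea construction, for every nonnegative (or integrable) measurable $g$ on $\Sigma$ one gets
\[
  \expe[g(X_{n+1}) \mid \calF_n] \;=\; \int_\Sigma g(\sigma')\,({\mapsto}(X_n))(d\sigma') \quad \text{a.s.}
\]
Applying this to $h$ (splitting off the nonnegative part $\phi$, for which \cref{The:ConditionalExpectationProperties} applies in the nonnegative-measurable case, and the cost part, using integrability of $A_n$ — or simply carrying the assumption $Y_n\in\calL^1$ inherited from the supermartingale framework this lemma feeds) yields, writing $\sigma' = \tuple{\_,\_,\_,\alpha'}$,
\[
  \expe[Y_{n+1} \mid \calF_n](\omega) \;=\; \expe_{\sigma' \sim {\mapsto}(X_n(\omega))}\bigl[\,\alpha' + \phi(\sigma')\,\bigr] \quad \text{a.s.}
\]
Now fix such an $\omega$, let $\sigma = X_n(\omega) = \tuple{\_,\_,\_,\alpha}$, and add the constant $\alpha = A_n(\omega)$ to both sides of~\eqref{Eq:PotentialProperty}: since ${\mapsto}(\sigma)$ is a probability measure, this gives $\alpha + \phi(\sigma) \ge \expe_{\sigma' \sim {\mapsto}(\sigma)}[\,\alpha' + \phi(\sigma')\,]$, i.e., the right-hand side above is $\le \alpha + \phi(\sigma) = Y_n(\omega)$. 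Hence $\expe[Y_{n+1}\mid\calF_n] \le Y_n$ a.s. For the consequence, take expectations: by \cref{The:ConditionalExpectationProperties}\ref{Item:TotalExpectation}, $\expe[Y_{n+1}] = \expe[\expe[Y_{n+1}\mid\calF_n]] \le \expe[Y_n]$, and a straightforward induction on $n$ gives $\expe[Y_n] \le \expe[Y_0]$ for all $n \in \bbZ^+$.

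I expect the main obstacle to be the Markov identity $\expe[g(X_{n+1}) \mid \calF_n] = \int g\, d({\mapsto}(X_n))$: one must note that each kernel $\kappa_{n+1} : \bigotimes_{k=0}^n(\Sigma,\calO) \rightsquigarrow (\Sigma,\calO)$ used in the construction factors as $\hat\mapsto \circ \pi_n$, check the attendant measurability, and then verify the defining property of $\expe[\,\cdot\mid\calF_n\,]$ against the generators $\pi_0^{-1}(A_0)\cap\cdots\cap\pi_n^{-1}(A_n)$ of $\calF_n$, using a Dynkin/monotone-class argument together with the characterization of $\mu_n$ in \cref{The:IonescuTulcea}. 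A secondary, bookkeeping-level concern is integrability: because $\itick{c}$ admits negative $c$, $A_n$ and hence $Y_n$ need not be nonnegative, so one either assumes $Y_n\in\calL^1$ or argues $\expe[|A_n|]<\infty$ directly from the fact that the per-step cost increment is determined (hence bounded) along each finite trace prefix.
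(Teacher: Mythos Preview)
Your proposal is correct and follows essentially the same route as the paper: both identify $\expe[Y_{n+1}\mid\calF_n]$ with an integral against the one-step kernel ${\mapsto}(\omega_n)$ and then invoke \eqref{Eq:PotentialProperty}. The only cosmetic difference is that the paper first pulls out the $\calF_n$-measurable term $A_n$ (via \cref{The:ConditionalExpectationProperties}\ref{Item:Observed}) and applies the Markov step to the increment $(\alpha_{n+1}-\alpha_n)+\phi(\omega_{n+1})$, whereas you apply it directly to $h(X_{n+1})=\alpha_{n+1}+\phi(\omega_{n+1})$ and absorb the constant $\alpha$ into \eqref{Eq:PotentialProperty} afterward; your discussion of the Ionescu--Tulcea/Dynkin justification for the Markov identity and of integrability is in fact more explicit than what the paper provides.
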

\begin{proof}
  The stochastic processes $\{\Phi_n\}_{n \in \bbZ^+}$ and $\{A_n\}_{n \in\bbZ^+}$ are adapted to the coordinate-generated filtration $\{\calF_n\}_{n \in \bbZ^+}$ as $\Phi_n(\omega)$ and $A_n(\omega)$ depend on $\omega_n$.
  Then we have
  \begin{align*}
    \expe[Y_{n+1} \mid \calF_n](\omega) & = \expe[A_{n+1} + \Phi_{n+1} \mid \calF_n](\omega) \\
    &= \expe[(A_{n+1} - A_n) +\Phi_{n+1} + A_n \mid \calF_n](\omega) \\
    &= \expe[(A_{n+1} - A_n) + \Phi_{n+1} \mid \calF_n](\omega) + A_n(\omega) & \reason{by \cref{The:ConditionalExpectationProperties}\ref{Item:Observed}}\\
    &= \expe[(\alpha_{n+1} - \alpha_n) + \phi(\omega_{n+1}) \mid \calF_n] + A_n(\omega) \\
    &= \expe_{\sigma' \sim {\mapsto}(\omega_n)}[(\alpha'-\alpha_n) + \phi(\sigma')] + A_n(\omega) \\
    &\le \phi(\omega_n) + A_n(\omega) & \reason{by \cref{Eq:PotentialProperty}} \\
    &= \Phi_n(\omega) + A_n(\omega) \\
    &= Y_n(\omega), \text{a.s.} 
  \end{align*}
\end{proof}

Therefore, if we could show that $\expe[Y_T] = \lim_{n \to \infty} \expe[Y_n]$, then we would conclude that $\expe[A_T] = \expe[Y_T] \le \expe[Y_0] = \expe[\Phi_0]$, i.e., the expected-potential function at the initial configuration gives an upper bound on the expected accumulated cost at termination configurations.

However, the property $\expe[Y_T] = \lim_{n \to \infty} \expe[Y_n]$ does \emph{not} necessarily hold.

\begin{counterexample}\label{Exa:NotConvergingMartingale}
  Consider the following random-walk program where the expected-potential method derives an unsound upper bound on the expected accumulated cost.
  \begin{center}
  \begin{pseudo}
    $\iassign{x}{0}$; \\
    \kw{while} $\cbin{<}{x}{N}$ \kw{do} \\+
      {\color{DarkBlue} $\{ x < N; N-1 \}$ } \kw{if} \kw{prob}($\sfrac{1}{2}$) \kw{then} $\iassign{x}{\eadd{x}{1}}$; $\itick{1}$ \kw{else} $\iassign{x}{\esub{x}{1}}$; $\itick{-1}$ \kw{fi} {\color{DarkBlue} $\{ x < N+1; N-1 \}$ } \\-
    \kw{od}
  \end{pseudo}
  \end{center}
  We assume that there is a map $\rho : \bbZ^+ \to \bbZ^+$, such that $\rho(k)$ records the evaluation step at the end of the $k$-th loop iteration, with respect to the Markov-chain semantics.
  For simplicity, we define $A_k' \defeq A_{\rho(k)}$, $\Phi_k' \defeq \Phi_{\rho(k)}$, and $Y_k' \defeq Y_{\rho(k)}$ as random variables at the end of the $k$-th loop iteration.
  Intuitively, the cost accumulator should be the same as the value of $x$ at any time of the execution.
  At the termination of the program, the value of $x$ should be at $n$, thus $A_T = N$ and $\expe[A_T] = N$.
  
  In each iteration, the program increases the cost accumulator by one with probability \sfrac{1}{2}, or decreases it by one with probability \sfrac{1}{2}, so the expected value of the cost accumulator stays unchanged, i.e., $\expe[A_k']=0$ for all $k \in \bbZ^+$.
  However, because we have shown that $\Phi_k' = N-1$ is a well-defined expected-potential function, 
  we have $\expe[Y_k'] = \expe[A_k'] + \expe[\Phi_k'] = N-1$ for all $k \in \bbZ^+$.
  Thus, $\lim_{k \to \infty} \expe[Y_k'] = N- 1 \neq N = \expe[A_T] = \expe[Y_T]$, and also $\lim_{n \to \infty} \expe[Y_n] \neq \expe[Y_T]$.
\end{counterexample}

\section{Soundness of Expected-Cost Analysis}
\label{Sec:SoundnessOfExpectedCostAnalysis}

It has been shown that the convergence property $\expe[Y_T] = \lim_{n \to \infty} \expe[Y_n]$ holds if all the stepwise costs are nonnegative and the analysis only considers upper bounds~\cite{PLDI:NCH18,TACAS:KUH19}.
To handle negative costs or to derive lower bounds while ruling out unsound expected-potential functions, like the one in \cref{Exa:NotConvergingMartingale}, recent research~\cite{PLDI:WFG19,POPL:HKG20,misc:SO19,CAV:BEF16} has adapted the \emph{Optional Stopping Theorem} (OST) from probability theory:

\begin{proposition}[Doob's OST~{\cite[Thm. 10.10]{book:Williams91}}]\label{Prop:OST}
  If $\expe[|Y_n|] < \infty$ for all $n \in \bbZ^+$, then $\expe[Y_T]$ exists and $\expe[Y_T] \le \expe[Y_0]$ in each of the following situations:
  \begin{enumerate}[(a)]
    \item\label{Item:OSTBT} $T$ is bounded;
    \item\label{Item:OSTPAST} $\expe[T]<\infty$ and for some $C \ge 0$, $|Y_{n+1} -Y_n| \le C$ for all $n \in \bbZ^+$;
    \item\label{Item:OSTAST} $\bbP[T < \infty] =1$ and for some $C \ge 0$, $|Y_n| \le C$ for all $n \in \bbZ^+$.
  \end{enumerate}
\end{proposition}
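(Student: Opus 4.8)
The plan is to run the classical proof of the Optional Stopping Theorem: reduce everything to the \emph{stopped process} $Y^T_n \defeq Y_{n \wedge T}$, show $\expe[Y_{n \wedge T}] \le \expe[Y_0]$ for every $n$, and then pass to the limit $n \to \infty$ via the dominated convergence theorem (\cref{The:DOM}) with a dominating function chosen separately in each of the three situations. Throughout I use that $\{Y_n\}_{n \in \bbZ^+}$ is an $\{\calF_n\}_{n \in \bbZ^+}$-supermartingale (in the application of interest this is exactly \cref{Lem:RankingFunctionMartingale} for $Y_n = A_n + \Phi_n$), and that $T$ is a stopping time, so that $\{T \le k\} \in \calF_k$ and hence $\{k < T\} \in \calF_k$ for every $k$.

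\textbf{Step 1 (the stopped process is a supermartingale).} Starting from the telescoping identity
\[
  Y_{(n+1) \wedge T} - Y_{n \wedge T} \;=\; \mathrm{I}_{\{n < T\}} \cdot (Y_{n+1} - Y_n),
\]
I would note that the left-hand side lies in $\calL^1$ (dominated by $\abs{Y_{n+1}} + \abs{Y_n}$) and compute, using linearity of conditional expectation and \cref{The:ConditionalExpectationProperties}\ref{Item:Observed},\ref{Item:TakingOutKnown} (the indicator $\mathrm{I}_{\{n < T\}}$ is bounded and $\calF_n$-measurable),
\[
  \expe[Y_{(n+1) \wedge T} - Y_{n \wedge T} \mid \calF_n] \;=\; \mathrm{I}_{\{n < T\}} \cdot \bigl(\expe[Y_{n+1} \mid \calF_n] - Y_n\bigr) \;\le\; 0 \quad \text{a.s.}
\]
Taking expectations and applying \cref{The:ConditionalExpectationProperties}\ref{Item:TotalExpectation} shows that $n \mapsto \expe[Y_{n \wedge T}]$ is non-increasing, hence $\expe[Y_{n \wedge T}] \le \expe[Y_0]$ for all $n \in \bbZ^+$. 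This already settles situation (a): if $T \le N$ for a constant $N$, then $Y_{N \wedge T} = Y_T$ (which is integrable, being dominated by $\sum_{k \le N} \abs{Y_k}$), so $\expe[Y_T] = \expe[Y_{N \wedge T}] \le \expe[Y_0]$.

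\textbf{Step 2 (passing to the limit in (b) and (c)).} In both of these situations $\prob[T < \infty] = 1$ --- by assumption in (c), and because $\expe[T] < \infty$ forces $T < \infty$ a.s. in (b). Hence $Y_{n \wedge T} \to Y_T$ pointwise almost surely, so $Y_T$ is well-defined a.s., and it remains to justify $\expe[Y_T] = \lim_{n \to \infty} \expe[Y_{n \wedge T}]$. For this I would exhibit an integrable $g$ with $\abs{Y_{n \wedge T}} \le g$ for all $n$ and invoke \cref{The:DOM}. In situation (c) one takes the constant $g \defeq C$, since $\abs{Y_{n \wedge T}} \le C$ by hypothesis. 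In situation (b), summing the identity of Step 1 over $k < n$ (so that $Y_{n \wedge T} = Y_0 + \sum_{k < n} \mathrm{I}_{\{k < T\}} (Y_{k+1} - Y_k)$) and using $\abs{Y_{k+1} - Y_k} \le C$ gives $\abs{Y_{n \wedge T}} \le \abs{Y_0} + C \cdot T$, and $g \defeq \abs{Y_0} + C \cdot T$ is integrable because $\expe[\abs{Y_0}] < \infty$ and $\expe[T] < \infty$. In either case \cref{The:DOM} yields that $Y_T$ is integrable and $\expe[Y_T] = \lim_{n \to \infty} \expe[Y_{n \wedge T}] \le \expe[Y_0]$.

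\textbf{Where the work is.} Step 1 is bookkeeping once one recalls that $\{k < T\} \in \calF_k$; the substance is the limit interchange in Step 2, where the dominating function must be tailored to the hypotheses, and in particular situation (b) is the one place the assumption $\expe[T] < \infty$ is used essentially (it is precisely what makes $C \cdot T$ integrable). I would also emphasize in the write-up that the proposition delivers the bound $\expe[Y_T] \le \expe[Y_0]$ directly under its hypotheses, sidestepping the convergence $\expe[Y_T] = \lim_{n} \expe[Y_n]$, which can fail in general (cf. \cref{Exa:NotConvergingMartingale}) --- this is the whole point of appealing to OST here.
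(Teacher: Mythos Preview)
The paper does not supply its own proof of \cref{Prop:OST}; it is quoted as a standard result from Williams, so there is nothing to compare against directly. Your argument is the classical textbook proof and is correct: Step~1 coincides almost verbatim with the paper's later proof of \cref{The:StoppedMartingale}, and Step~2 is the standard dominated-convergence pass to the limit, with the dominating functions $C$ (case~(c)) and $|Y_0| + C\cdot T$ (case~(b)) chosen correctly.

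If you want a point of contact with what the paper \emph{does} prove, note that in \cref{Sec:OptionalStoppingTheorems} the paper repackages the three situations of \cref{Prop:OST} as sufficient conditions for uniform integrability of the stopped process (\cref{The:OSTUICriteria}, stated without proof) and then deduces optional stopping from the UI-martingale convergence theorem (\cref{The:UIConvergence}, \cref{The:OSTUI}). That route is slightly more abstract---it replaces your explicit dominants by the single notion of UI---but the content is the same, since a family dominated by a fixed $\calL^1$ random variable is automatically UI. Your approach via \cref{The:DOM} is the more elementary of the two and is entirely adequate for \cref{Prop:OST} as stated.
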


Note that from \cref{Item:OSTBT} to \cref{Item:OSTAST} in \cref{Prop:OST}, the constraint on the stopping time $T$ is getting weaker while the constraint on $\{Y_n\}_{n \in \bbZ^+}$ is getting stronger.
\cref{Prop:OST}\ref{Item:OSTBT} corresponds to an analogy we will
present in \cref{Exa:NonProbabilisticCase} about non-probabilistic
programs: one does not need extra constraints on the costs or
expected-potential functions if the program terminates.

For \cref{Exa:NotConvergingMartingale}, it has been shown that $\prob[T < \infty] = 1$ but $\expe[T] = \infty$~\cite{POPL:FH15}.
The only applicable OST criterion is \cref{Prop:OST}\ref{Item:OSTAST}, which requires $\{Y_k'\}_{k \in \bbZ^+}$ to be uniformly bounded.
However, the absolute value of the accumulated cost $|A_k'|$ at the $k$-th loop iteration can be as large as $k$ if the program always executes the else-branch of the probabilistic choice.
Therefore, we cannot apply OST to this program, so we cannot use the expected-potential method to reason about this program.

During our development, we find the usage of the OST in probabilistic programs sometimes not so intuitive in terms of program semantics.
Therefore, we first present a semantic characterization of \emph{optional stopping} for probabilistic programs.

\subsection{Semantic Optional Stopping}
\label{Sec:SemanticOptionalStopping}

To simplify the presentation,
let us again consider real-valued expected-potential functions for \emph{lower} bounds on expected costs.
Recall the problem: we know that $\expe[Y_n] \ge \expe[Y_0]$ for all $n \in \bbZ^+$, and we want to establish the convergence property $\lim_{n \to \infty} \expe[Y_n] = \expe[Y_T]$ so that we can conclude $\expe[A_T] = \expe[Y_T] \ge \expe[Y_0] = \expe[\Phi_0]$, where the random variable $Y_n$ is the sum of the accumulated cost $A_n$ and the expected-potential function $\Phi_n$ at the $n$-th step of the evaluations.
To start with, let us consider the case for \emph{non}-probabilistic programs.

\begin{example}\label{Exa:NonProbabilisticCase}
  If the program is non-probabilistic, then the probability measure $\bbP$ for traces is exactly a Dirac measure for some deterministic trace $\hat{\omega}$.
  As a consequence, $\expe[Y_n] = Y_n(\hat{\omega})$ for all $n \in \bbZ^+$ and $\expe[Y_T] = Y_{T(\hat{\omega})}(\hat{\omega})$.
  If the program \emph{does} terminate, i.e., $T(\hat{\omega}) < \infty$, then $\expe[Y_T] = \expe[Y_{T(\hat{\omega})}] \ge \expe[Y_0]$ immediately,
  because we already know that $\expe[Y_n] = \expe[Y_0]$ for all finite $n$.
  
  Otherwise, if the program is non-terminating, i.e., $T(\hat{\omega}) = \infty$, then $\expe[Y_T] = Y_\infty(\hat{\omega}) = 0$ by definition.
  However, in general, we have $\expe[Y_n] \neq 0$ for all $n \in \bbZ^+$.
  Consider the following non-terminating program:
  \[
  \iloop{\ctop}{\iskip}
  \]
  and similar to former examples, we assume that $A_n$, $\Phi_n$, and $Y_n$ stand for the configuration at the end of the $n$-th loop iteration.
  Let $\phi$ be the potential function such that $\phi(\sigma)=0$ if $\sigma$ is a termination configuration, and otherwise $\phi(\sigma)=1$.
  Using this potential function, we derive that $Y_n = 1$ for all $n \in \bbZ^+$; thus, $\expe[Y_T] \not\ge \expe[Y_0]$.
\end{example}

As shown by \cref{Exa:NonProbabilisticCase}, \emph{non-termination} is the reason that the convergence property may fail to hold.
%
Probabilistic programs can exhibit a mixed behavior of termination and non-termination: even though the random-walk program in \cref{Fi:RecursiveRandomWalk} terminates with probability one, it has some traces that are non-terminating.
Because we focus on the accumulated cost \emph{at termination configurations}, non-terminating traces finally have zero contribution to the expectation $\expe[Y_T]$, but they \emph{do} affect $\expe[Y_n]$ for finite $n$.
Intuitively, to establish the convergence property, we need to put a limit on the contribution to $\expe[Y_n]$ of non-terminating traces, which should approach zero when $n$ approaches infinity.

Formally, let us fix $n \in \bbZ^+$ and reason about the difference between $Y_n$ and $Y_T$. Because $Y_T=Y_n$ if $T \le n$,
\begin{align*}
  \expe[|Y_T-Y_n|] & = \bbP[T \le n] \cdot \expe[|Y_T-Y_n| \mid T \le n] + \bbP[T > n] \cdot \expe[|Y_T-Y_n| \mid T > n] \\
  & = \bbP[T > n] \cdot \expe[|Y_T-Y_n| \mid T > n]  \\
  & = \bbP[T > n] \cdot \expe[ | (A_T-A_n) + (\Phi_T-\Phi_n) | \mid T > n ] \\
  & = \bbP[T > n] \cdot \expe[ | (A_T-A_n) - \Phi_n | \mid T > n] \\
  & = \bbP[T >n] \cdot \expe[ |\textstyle \sum_{i=n}^{T-1} C_i  - \Phi_n | \mid T > n].
\end{align*}
To establish that $\lim_{n \to \infty} \expe[Y_n] = Y_T$, or equivalently, $\lim_{n \to \infty} \expe[|Y_T-Y_n|] = 0$, by the derivation above, we propose the following principle for optional stopping of probabilistic programs, which is one major contribution of this article:
\begin{equation*}
\begin{array}{|p{.96\columnwidth}|}
\hline
\textrm{It is both sufficient and necessary to show that the product of (i) the probability that the program does not terminate, and (ii) the expected gap between the expected-potential function and the remaining cost at step $n$ approaches zero when $n$ approaches infinity.} \\
\hline
\end{array}
\end{equation*}

The three situations in the classic OST (\cref{Prop:OST}) can be interpreted as sufficient conditions for the ``product-of-probability-and-expected-gap-approach-zero'' principle.

Intuitively, the principle states that if a program makes a probabilistic choice at some point, the accumulated costs of the two possible computations should not be too \emph{different} from each other, while the allowed maximal \emph{difference} is determined by how fast the probability $\prob[T > n]$ decreases as $n$ grows.

\subsection{Optional Stopping Theorems}
\label{Sec:OptionalStoppingTheorems}

In this section, we review some variants of the OST from the literature (e.g., \cref{The:OSTUICriteria}), and then present a new one (\cref{The:OSTExtended}), which has been proved practical in our development.

Let us formulate the notion of stopping times.
A random variable $T$ with values in $\bbZ^+ \cup \{\infty\}$ is called a \emph{random time}.
A random time is said to be a \emph{stopping time} with respect to the filtration $\{\calF_n\}_{n \in \bbZ^+}$ if the event $\{T \le n \}$ is $\calF_n$-measurable for all $n \in \bbZ^+$.
Let $\{Y_n\}_{n \in \bbZ^+}$ be a stochastic process and $T$ be a stopping time.
The process $\{Y_n\}_{n \in \bbZ^+}$ \emph{stopped at} $T$, denoted by $\{Y^T_n\}_{n \in \bbN}$, is defined by $Y^T_n \defeq \lambda \omega. Y_{\min(T(\omega),n)}(\omega)$.
Note that in the Markov-chain semantics, the transitions \emph{after} program termination do not mutate program states and the potential at termination states are always zero, the stochastic process $\{Y^T_n\}_{n \in \bbZ^+}$ is identical to $\{Y_n\}_{n \in \bbZ^+}$;
thus, the convergence property $\lim_{n \to \infty} \expe[Y_n] = \expe[Y_T]$ is reduced to the \emph{optional stopping} property, i.e., $\{Y^T_n\}_{n \in \bbZ^+}$ converges to $Y_T$ in $\calL^1$.
In this section, we assume that $\{Y_n\}_{n \in \bbZ^+}$ is an arbitrary martingale, i.e., it is not necessarily derived from the Markov-chain semantics.

\begin{theorem}\label{The:StoppedMartingale}
  Let $\{Y_n\}_{n \in \bbZ^+}$ be a martingale (resp., super-/sub-martingale), and let $T$ be a stopping time.
  Then the stopped process $\{Y^T_n\}_{n \in \bbZ^+}$ is also a martingale (resp., super-/sub-martingale).
\end{theorem}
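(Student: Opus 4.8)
The plan is to express the increments of the stopped process in terms of the increments of the original process multiplied by an indicator, and then to check the three martingale conditions in turn. First I would write, for each $n \in \bbZ^+$,
\[
  Y^T_{n+1} - Y^T_n = [T \ge n+1] \cdot (Y_{n+1} - Y_n) = [T > n] \cdot (Y_{n+1} - Y_n),
\]
which holds because on the event $\{T \le n\}$ we have $\min(T,n+1) = \min(T,n) = T$, so the increment vanishes, while on $\{T \ge n+1\}$ both $\min$'s advance by one. Summing a telescoping series gives the useful closed form $Y^T_n = Y_0 + \sum_{k=0}^{n-1} [T > k](Y_{k+1} - Y_k)$.

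Next I would verify adaptedness, integrability, and the conditional-expectation condition. For adaptedness: $Y^T_n = \sum_{k=0}^{n} [T = k] Y_k + [T > n] Y_n$, and each term is $\calF_n$-measurable --- $[T=k]$ and $[T>n]$ are $\calF_n$-measurable because $T$ is a stopping time (the events $\{T \le k\}$ lie in $\calF_k \subseteq \calF_n$ for $k \le n$), and $Y_k$ is $\calF_k \subseteq \calF_n$-measurable for $k \le n$. For integrability, $|Y^T_n| \le \sum_{k=0}^n |Y_k|$, a finite sum of $\calL^1$ random variables, hence $Y^T_n \in \calL^1$. For the martingale property, I would use that $\{T > n\} = \{T \le n\}^c \in \calF_n$, so $[T > n]$ is $\calF_n$-measurable, and then compute, using \cref{The:ConditionalExpectationProperties}\ref{Item:TakingOutKnown} to pull the indicator out:
\[
  \expe[Y^T_{n+1} - Y^T_n \mid \calF_n] = \expe\big[[T > n](Y_{n+1} - Y_n) \mid \calF_n\big] = [T > n] \cdot \expe[Y_{n+1} - Y_n \mid \calF_n].
\]
In the martingale case $\expe[Y_{n+1} - Y_n \mid \calF_n] = 0$ a.s., so $\expe[Y^T_{n+1} \mid \calF_n] = Y^T_n$ a.s.; in the supermartingale case this conditional expectation is $\le 0$ a.s., and since the multiplier $[T>n]$ is nonnegative, the product is $\le 0$ a.s., giving $\expe[Y^T_{n+1} \mid \calF_n] \le Y^T_n$; the submartingale case is symmetric.

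I do not expect a serious obstacle here; the one point that deserves care is the sign argument in the super-/sub-martingale cases, namely that multiplying an a.s.-nonpositive conditional expectation by the \emph{nonnegative} (and $\calF_n$-measurable) random variable $[T>n]$ preserves the inequality --- this is exactly where the ``taking out what is known'' property for nonnegative multipliers is invoked, and it is why the direction of the inequality is not flipped. A secondary bookkeeping point is making sure every indicator that gets pulled through a conditional expectation is genuinely $\calF_n$-measurable, which reduces entirely to $T$ being a stopping time. Everything else is routine.
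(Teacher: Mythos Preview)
Your proposal is correct and follows essentially the same approach as the paper: both write the stopped increment as $[T \ge n+1]\,(Y_{n+1}-Y_n)$ (the paper packages this as the predictable sequence $C_n \defeq \mathrm{I}_{\{T\ge n\}}$ and the transform $\sum_{k=1}^n C_k(Y_k-Y_{k-1})$), then pull the $\calF_n$-measurable indicator through the conditional expectation via \cref{The:ConditionalExpectationProperties}\ref{Item:TakingOutKnown}, using its nonnegativity for the super-/sub-martingale direction. The only differences are cosmetic: the paper names the integrand and calls it predictable, while you verify adaptedness by the decomposition $Y^T_n=\sum_{k=0}^n[T=k]Y_k+[T>n]Y_n$ and integrability by the cruder bound $|Y^T_n|\le\sum_{k\le n}|Y_k|$.
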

\begin{proof}
  Let $C_n \defeq \mathrm{I}_{\{T \ge n\}}$, $n \in \bbN$.
  Then $\{C_n\}_{n \in \bbN}$ is predictable, and for each $n \in \bbZ^+$, we have
  \[
  \sum_{k=1}^n C_k \cdot (Y_k - Y_{k-1}) = Y_{\min(T, n)} - Y_0 = Y^T_n - Y_0.
  \]
  On the other hand, we claim that $\{\sum_{k=1}^n C_k \cdot (Y_k - Y_{k-1}) \}_{n \in \bbZ^+}$ forms a martingale null at zero.
  The process is clearly obviously $\{\calF_n\}_{n \in \bbZ^+}$-adapted.
  Because $\{C_n\}_{n \in \bbN}$ is bounded, we know that the process is in $\calL^1$.
  Finally, for each $n \in \bbZ^+$, we have
  \begin{align*}
  \expe[\sum_{k=1}^{n+1} C_k \cdot (Y_k-Y_{k-1}) \mid \calF_n] & = \sum_{k=1}^n C_k \cdot (Y_k-Y_{k-1}) + \expe[C_{n+1} \cdot (Y_{n+1}-Y_n) \mid \calF_n] \\
  & = \sum_{k=1}^n C_k \cdot (Y_k-Y_{k-1}) + C_{n+1} \cdot \expe[Y_{n+1} - Y_n \mid \calF_n] \\
  & = \sum_{k=1}^n C_k \cdot (Y_k-Y_{k-1}),
  \end{align*}
  almost surely, by \cref{The:ConditionalExpectationProperties}\ref{Item:Observed} and \ref{Item:TakingOutKnown}.
  Therefore, $\{Y^T_n\}_{n \in \bbZ^+}$ is also a martingale.
  
  For the case where $\{Y_n\}_{n \in \bbZ^+}$ is a super-/sub-martingale, we can reason similarly, except that we need the observation that $\{C_n\}_{n \in \bbN}$ is also nonnegative.
\end{proof}

We review some results about almost-sure convergence of martingales.

\begin{theorem}[Doob's martingale convergence theorem]\label{The:MartingaleConvergence}
  Let $\{Y_n\}_{n \in \bbZ^+}$ be a martingale such that $\sup_{n \in \bbZ^+} \expe[|Y_n|] < \infty$.
  Then there exists $Y \in \calL^1$ such that $\lim_{n \to \infty} Y_n = Y$, a.s.
\end{theorem}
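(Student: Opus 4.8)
The plan is to use the classical \emph{upcrossing argument} due to Doob. Fix two rationals $a < b$, and for a trajectory of the martingale $\{Y_n\}_{n \in \bbZ^+}$ define the number of \emph{upcrossings} $U_N[a,b](\omega)$ of the interval $[a,b]$ by time $N$: the number of times the path moves from below $a$ to above $b$. The key combinatorial fact is \emph{Doob's upcrossing inequality}, $(b-a)\,\expe[U_N[a,b]] \le \expe[(Y_N - a)^-] \le |a| + \expe[|Y_N|]$. First I would prove this inequality by constructing the predictable ``buy-low-sell-high'' gambling strategy $C_n \in \{0,1\}$ that equals $1$ exactly while the path is in the middle of an upcrossing attempt; then $(C \bullet Y)_N \defeq \sum_{k=1}^N C_k (Y_k - Y_{k-1})$ satisfies $(C \bullet Y)_N \ge (b-a) U_N[a,b] - (Y_N - a)^-$ pathwise, while $\expe[(C \bullet Y)_N] = 0$ because $\{(C\bullet Y)_n\}$ is a martingale null at zero (exactly the construction already used in the proof of Theorem~\ref{The:StoppedMartingale}, since $C$ is predictable and bounded).

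Next I would let $N \to \infty$. Since $U_N[a,b]$ is non-decreasing in $N$, the monotone convergence theorem (Theorem~\ref{The:MON}) gives $(b-a)\,\expe[U_\infty[a,b]] \le |a| + \sup_N \expe[|Y_N|] < \infty$ by hypothesis, so $U_\infty[a,b] < \infty$ almost surely. Taking a union over all pairs of rationals $a<b$ — a countable set — yields that almost surely, for \emph{every} rational pair $a<b$ the path has only finitely many upcrossings of $[a,b]$. On this full-measure event, $\liminf_n Y_n = \limsup_n Y_n$: if they differed, one could insert rationals $a<b$ strictly between them and force infinitely many upcrossings. Hence $Y_n$ converges in $\overline{\bbR}$ almost surely; call the limit $Y$, which is measurable as a pointwise limit.

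Finally I would show $Y \in \calL^1$, in particular $Y$ is almost-surely finite. This is Fatou's lemma applied to $|Y_n| \to |Y|$: $\expe[|Y|] = \expe[\liminf_n |Y_n|] \le \liminf_n \expe[|Y_n|] \le \sup_n \expe[|Y_n|] < \infty$. (Fatou is the standard consequence of Theorem~\ref{The:MON} applied to $\inf_{k \ge n} |Y_k|$.) Thus $Y$ is integrable and in particular finite a.s., which also retroactively justifies that the a.s.\ limit $Y$ is a genuine real-valued random variable. The main obstacle is the upcrossing inequality itself: setting up the predictable strategy $C_n$ precisely (so that it is indeed $\calF_{n-1}$-measurable and correctly tracks ``currently in an upcrossing''), and verifying the pathwise lower bound $(C\bullet Y)_N \ge (b-a)U_N[a,b] - (Y_N-a)^-$, requires careful bookkeeping; everything after that is routine measure-theoretic limiting using Theorems~\ref{The:MON} and the Fatou consequence.
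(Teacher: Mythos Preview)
Your proposal is correct and follows the standard textbook route via Doob's upcrossing inequality, monotone convergence, and Fatou's lemma; the bookkeeping you flag (predictability of the strategy $C_n$ and the pathwise bound $(C\bullet Y)_N \ge (b-a)U_N[a,b] - (Y_N-a)^-$) is exactly the place where care is needed, but there is no genuine obstacle.

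There is nothing to compare against, however: the paper states \cref{The:MartingaleConvergence} as a classical background result and gives no proof of its own. It is quoted alongside other textbook facts (the Ionescu--Tulcea theorem, Doob--Meyer decomposition, properties of conditional expectation) and then \emph{used} --- for instance in the proof of \cref{The:MartingaleConvergenceUnderPositivity} --- but never proved. So your upcrossing argument is not an alternative to the paper's approach; it simply supplies the omitted standard proof.
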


\begin{theorem}\label{The:MartingaleConvergenceUnderPositivity}
  Let $\{Y_n\}_{n \in \bbZ^+}$ be a supermartingale (resp., submartingale) such that $\sup_{n \in \bbZ^+} \expe[Y_n^-] < \infty$ (resp., $\sup_{n \in \bbZ^+} \expe[Y_n^+] < \infty$).
  Then there exists $Y \in \calL^1$ such that $\lim_{n \to \infty} Y_n = Y$, a.s.
\end{theorem}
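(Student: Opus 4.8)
The plan is to reduce this to the martingale convergence theorem already in hand (\cref{The:MartingaleConvergence}) by peeling off the monotone part of the process via the Doob–Meyer decomposition (\cref{The:MartingaleDecomposition}). I will treat the supermartingale case in detail; the submartingale case follows by applying the result to $-Y_n$, since $-Y_n$ is then a supermartingale and $(-Y_n)^- = Y_n^+$, so the hypothesis $\sup_n \expe[Y_n^+] < \infty$ becomes $\sup_n \expe[(-Y_n)^-] < \infty$.

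First I would invoke \cref{The:MartingaleDecomposition} to write $Y_n = M_n - A_n$ with $\{M_n\}_{n \in \bbZ^+}$ a martingale and $\{A_n\}_{n \in \bbZ^+}$ predictable, in $\calL^1$, nondecreasing, with $A_0 = 0$; in particular $A_n \ge 0$ almost surely. The key observation is that $A_n \ge 0$ forces $M_n \ge Y_n$ pointwise a.s., hence $M_n^- \le Y_n^-$ and therefore $\sup_n \expe[M_n^-] \le \sup_n \expe[Y_n^-] < \infty$. Since $\{M_n\}$ is a martingale, $\expe[M_n] = \expe[M_0] = \expe[Y_0]$ is constant in $n$, so $\expe[|M_n|] = \expe[M_n] + 2\expe[M_n^-] \le \expe[Y_0] + 2\sup_m \expe[Y_m^-]$, i.e.\ $\sup_n \expe[|M_n|] < \infty$. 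Now \cref{The:MartingaleConvergence} applies to $\{M_n\}$ and yields some $M_\infty \in \calL^1$ with $M_n \to M_\infty$ a.s.

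Next I would control the monotone part. From $\expe[A_n] = \expe[M_n] - \expe[Y_n] = \expe[Y_0] - \expe[Y_n] \le \expe[Y_0] + \expe[Y_n^-] \le \expe[Y_0] + \sup_m \expe[Y_m^-] < \infty$ we get a uniform bound on $\expe[A_n]$. Because $A_n \uparrow A_\infty$ pointwise a.s.\ (monotone limits of nonnegative random variables always exist in $[0,\infty]$), the monotone convergence theorem (\cref{The:MON}) gives $\expe[A_\infty] = \lim_n \expe[A_n] < \infty$, so $A_\infty \in \calL^1$ and in particular $A_\infty < \infty$ a.s. Setting $Y \defeq M_\infty - A_\infty$, on the almost-sure event where both limits exist and are finite we obtain $Y_n = M_n - A_n \to Y$, and $Y \in \calL^1$ by the triangle inequality. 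This finishes the supermartingale case, and the submartingale case follows by the negation argument above.

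There is no deep obstacle here—the statement is the standard corollary of \cref{The:MartingaleConvergence}—but the one point needing care is upgrading the merely one-sided hypothesis to the two-sided $\calL^1$-boundedness required by \cref{The:MartingaleConvergence}. The trick that makes this work is exactly the interplay just used: nonnegativity of the compensator $A_n$ transfers the one-sided control on $Y_n^-$ to $M_n^-$, while martingality pins $\expe[M_n]$ to a constant, and together these turn a one-sided bound into $\sup_n \expe[|M_n|] < \infty$; the same one-sided-plus-martingale identity also bounds $\expe[A_n]$, which is what then feeds \cref{The:MON}.
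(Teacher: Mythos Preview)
Your proposal is correct and follows essentially the same route as the paper: Doob--Meyer decomposition $Y_n = M_n - A_n$, transfer of the one-sided bound $\expe[Y_n^-]$ to $\expe[M_n^-]$ via $A_n \ge 0$, then martingality to get $\sup_n \expe[|M_n|] < \infty$ and apply \cref{The:MartingaleConvergence}, and finally \cref{The:MON} on the compensator to get $A_\infty \in \calL^1$. The only cosmetic difference is that the paper bounds $\expe[M_n^+]$ directly rather than writing $\expe[|M_n|] = \expe[M_n] + 2\expe[M_n^-]$, which is equivalent.
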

\begin{proof}
  We show the proof for the supermartingale case.
  Let $\{M_n\}_{n \in \bbZ^+}$ and $\{A_n\}_{n \in \bbN}$ be as in \cref{The:MartingaleDecomposition}.
  Because $M_n = Y_n + A_n \ge Y_n$, a.s., we have $\expe[M_n^-] \le \expe[Y_n^-] \le \sup_{n \in \bbZ^+} \expe[Y_n^-] < \infty$.
  On the other hand, because $\expe[M_n^+] = \expe[M_n] + \expe[M_n^-] \le \expe[M_0] + \sup_{n \in \bbN} \expe[Y_n^-] < \infty$, we have $\sup_{n \in \bbZ^+} \expe[|M_n|] < \infty$.
  By \cref{The:MartingaleConvergence}, there exists $M_\infty \in \calL^1$ such that $\lim_{n \to \infty} M_n = M_\infty$, a.s.
  
  Recall that $\{A_n\}_{n \in \bbN}$ is nonnegative and non-decreasing in $\calL^1$.
  Then by \cref{The:MON}, there exists a nonnegative random variable $A_\infty$ such that $\lim_{n \to \infty} A_n = A_\infty$, a.s., and $\expe[A_\infty] = \lim_{n \to \infty} \expe[A_n]$.
  Observe that $\lim_{n \to \infty} Y_n = \lim_{n \to \infty} (M_n - A_n) = M_\infty - A_\infty$, a.s.
  Then it suffices to show that $\expe[A_\infty] < \infty$.
  We conclude by $\expe[A_n] = \expe[M_n] - \expe[Y_n] \le \expe[M_0] + \sup_{n \in \bbZ^+} \expe[Y_n^-] < \infty$.
\end{proof}

\begin{corollary}\label{Cor:MartingaleConvergenceUnderPositivity}
  Let $\{Y_n\}_{n \in \bbZ^+}$ be a nonnegative supermartingale (or a nonpositive submartingale).
  Then there exists a random variable $Y \in \calL^1$ such that $\lim_{n \to \infty} Y_n = Y$, a.s.
\end{corollary}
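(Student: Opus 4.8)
This is an immediate specialization of \cref{The:MartingaleConvergenceUnderPositivity}, so the plan is simply to check that the hypothesis of that theorem is met trivially in the two stated cases. First I would take $\{Y_n\}_{n \in \bbZ^+}$ to be a nonnegative supermartingale. Since $Y_n \ge 0$ a.s., the negative part satisfies $Y_n^- = \max(-Y_n, 0) = 0$ a.s., hence $\expe[Y_n^-] = 0$ for every $n$, and therefore $\sup_{n \in \bbZ^+} \expe[Y_n^-] = 0 < \infty$. Applying \cref{The:MartingaleConvergenceUnderPositivity} in its supermartingale form then yields a random variable $Y \in \calL^1$ with $\lim_{n \to \infty} Y_n = Y$ a.s.

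For the second case I would take $\{Y_n\}_{n \in \bbZ^+}$ to be a nonpositive submartingale. Symmetrically, $Y_n \le 0$ a.s. gives $Y_n^+ = \max(Y_n, 0) = 0$ a.s., so $\sup_{n \in \bbZ^+} \expe[Y_n^+] = 0 < \infty$, and the submartingale form of \cref{The:MartingaleConvergenceUnderPositivity} again provides the desired limit $Y \in \calL^1$. (One could alternatively observe that $\{-Y_n\}_{n \in \bbZ^+}$ is a nonnegative supermartingale and reduce to the first case.)

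There is no real obstacle here: the only thing to verify is the one-line bound on $\expe[Y_n^-]$ (resp.\ $\expe[Y_n^+]$), which follows directly from the sign constraint, and then the heavy lifting — the Doob--Meyer decomposition, the $\calL^1$-boundedness of the martingale part, and the monotone-convergence argument for the predictable part — is entirely absorbed into the already-proved \cref{The:MartingaleConvergenceUnderPositivity}.
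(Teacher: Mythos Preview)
Your proposal is correct and matches the paper's approach exactly: the paper's proof consists solely of ``Appeal to \cref{The:MartingaleConvergenceUnderPositivity},'' and you have simply spelled out why that appeal is justified (the sign constraint forces $Y_n^-$, resp.\ $Y_n^+$, to vanish).
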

\begin{proof}
  Appeal to \cref{The:MartingaleConvergenceUnderPositivity}.
\end{proof}

There is still a gap from martingale convergence to the desirable $\calL^1$-convergence:
we want to establish that $\lim_{n \to \infty} \expe[Y^T_n] = \expe[Y_T]$.
Recall that a nonempty family $\calX$ of random variables is said to be \emph{uniformly integrable} (UI) if given $\varepsilon > 0$, there exists $K \in \bbR^+$ such that $ \expe[|X|; |X| > K] < \varepsilon$ for all $X\in \calX$.
We now have a stronger convergence result on UI-martingales.

\begin{theorem}[UI-martingale convergence]\label{The:UIConvergence}
  Let $\{Y_n\}_{n \in \bbZ^+}$ be a uniformly integrable martingale.
  Then $\{Y_n\}_{n \in \bbZ^+}$ converges in $\calL^1$, where the convergence also holds a.s.
  The result still holds if $\{Y_n\}_{n \in \bbZ^+}$ is a super-/sub-martingale.
\end{theorem}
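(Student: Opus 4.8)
The plan is to prove the martingale case directly and then reduce the super-/sub-martingale cases to it. For the martingale case, the key observation is that uniform integrability implies $\sup_{n} \expe[|Y_n|] < \infty$: given $\varepsilon = 1$, pick $K$ with $\expe[|Y_n|; |Y_n| > K] < 1$ for all $n$, so $\expe[|Y_n|] = \expe[|Y_n|; |Y_n| \le K] + \expe[|Y_n|; |Y_n| > K] \le K + 1$. Hence \cref{The:MartingaleConvergence} applies and yields $Y \in \calL^1$ with $Y_n \to Y$ a.s. Then I would upgrade a.s.\ convergence to $\calL^1$ convergence: a.s.\ convergence plus uniform integrability implies $\calL^1$ convergence. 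This is a standard fact, but to keep the argument self-contained I would sketch it via truncation — for a cutoff $K$, write $Y_n = Y_n \mathrm{I}_{\{|Y_n| \le K\}} + Y_n \mathrm{I}_{\{|Y_n| > K\}}$ and similarly for $Y$, control the truncated parts by bounded convergence (\cref{The:DOM} with the constant dominator $K$, using that a.s.\ convergence is preserved under continuous truncation maps except on a null set) and the tails by the UI bound, noting that $Y$ itself inherits uniform integrability from Fatou's lemma ($\expe[|Y|; |Y| > K] \le \liminf_n \expe[|Y_n|; |Y_n| > K']$ type estimates). Combining, $\expe[|Y_n - Y|] \to 0$.

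For the super-/sub-martingale case, I would invoke \cref{The:MartingaleDecomposition} (Doob–Meyer): write a submartingale as $Y_n = M_n + A_n$ with $\{M_n\}$ a martingale and $\{A_n\}$ predictable, nonnegative, non-decreasing, in $\calL^1$. The point is to show $\{M_n\}$ is UI, whence by the martingale case $M_n$ converges in $\calL^1$, and separately $\{A_n\}$ converges in $\calL^1$ by monotone convergence (\cref{The:MON}) once we know $\sup_n \expe[A_n] < \infty$. Since $0 \le A_n \le A_{n+1}$ and $\expe[A_n] = \expe[Y_n] - \expe[M_n] = \expe[Y_n] - \expe[M_0]$, boundedness of $\expe[A_n]$ follows from $\sup_n \expe[|Y_n|] < \infty$, which we again get from uniform integrability of $\{Y_n\}$ as in the first paragraph. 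Then $|M_n| \le |Y_n| + A_n$, and since $\{Y_n\}$ is UI and $\{A_n\}$ is dominated by the integrable $A_\infty := \lim_n A_n$ (hence UI), the sum $\{|Y_n| + A_n\}$ is UI, so $\{M_n\}$ is UI. The supermartingale case is symmetric via $Y_n = M_n - A_n$.

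The main obstacle I anticipate is the a.s.-plus-UI $\Rightarrow$ $\calL^1$ implication: it is the only nontrivial analytic step, and one must be careful that the limit $Y$ is itself uniformly integrable (or at least integrable with controlled tails) before splicing the truncation estimates together — a naive triangle inequality on $\expe[|Y_n - Y|]$ without first establishing integrability of $Y$ and a uniform tail bound does not close. Everything else is bookkeeping: the passage from UI to $\calL^1$-boundedness, the application of the already-stated convergence theorems, and the Doob–Meyer reduction for the super-/sub-martingale cases.
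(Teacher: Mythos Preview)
The paper does not supply a proof of \cref{The:UIConvergence}; it is stated as a classical result (in the spirit of the textbook references the paper already leans on, e.g.\ Williams) and is immediately invoked as a black box in the proof of \cref{The:OSTUI}. So there is no ``paper's own proof'' to compare against.

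Your proposal is a correct reconstruction of the standard argument. The martingale case is exactly the usual two-step: UI $\Rightarrow$ $\calL^1$-bounded $\Rightarrow$ a.s.\ convergence via \cref{The:MartingaleConvergence}, and then a.s.\ convergence plus UI $\Rightarrow$ $\calL^1$-convergence via truncation and dominated convergence. Your caution about controlling the tail of the limit $Y$ is well placed; the clean way to close it is to note that Fatou gives $\expe[|Y|] \le \liminf_n \expe[|Y_n|] < \infty$, so $Y \in \calL^1$, and then for a single integrable variable one can always choose $K$ with $\expe[|Y|;|Y|>K]<\varepsilon$ --- no need for a uniform Fatou-type estimate on the tails of $Y$. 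For the super-/sub-martingale reduction, your Doob--Meyer route is sound: $\sup_n \expe[A_n] < \infty$ gives $A_\infty \in \calL^1$ by \cref{The:MON}, hence $\{A_n\}$ is dominated (so UI) and converges in $\calL^1$; then $|M_n| \le |Y_n| + A_n$ transfers UI to $\{M_n\}$, and the martingale case finishes it. There is no gap.
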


\begin{theorem}[Optional stopping under UI conditions]\label{The:OSTUI}
  For a martingale $\{Y_n\}_{n \in \bbZ^+}$, let $T$ be a stopping time such that the stopped martingale $\{Y^T_n\}_{n \in \bbZ^+}$ is uniformly integrable.
  Then $Y_T = \lim_{n \to \infty} Y^T_n$ exists a.s., $Y_T \in \calL^1$, and $\expe[Y_T] = \expe[Y_0]$.
  The result still holds if $\{Y_n\}_{n \in \bbZ^+}$ is a supermartingale (resp., submartingale), except that $\expe[Y_T] \le \expe[Y_0]$ (resp., $\expe[Y_T] \ge \expe[Y_0]$).
\end{theorem}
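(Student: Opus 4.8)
The plan is to assemble the result directly from \cref{The:StoppedMartingale} and \cref{The:UIConvergence}, with no new probabilistic machinery needed. First I would invoke \cref{The:StoppedMartingale} to conclude that the stopped process $\{Y^T_n\}_{n \in \bbZ^+}$ is itself a martingale (resp.\ super-/sub-martingale). By hypothesis it is uniformly integrable, so \cref{The:UIConvergence} immediately supplies a limit $Y_\infty \in \calL^1$ with $Y^T_n \to Y_\infty$ both almost surely and in $\calL^1$. This already gives the first two assertions: $Y_T \defeq \lim_{n \to \infty} Y^T_n$ exists almost surely and lies in $\calL^1$. I would also remark that on the event $\{T < \infty\}$ one has $Y^T_n = Y_T$ for every $n \ge T$, so this limit agrees pointwise with $Y_{T(\omega)}(\omega)$ there, matching the informal reading of $Y_T$ used elsewhere in the article.

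Second, I would extract the expectation identity from the $\calL^1$-convergence. Since $|\expe[Y^T_n] - \expe[Y_T]| \le \expe[|Y^T_n - Y_T|] \to 0$, we get $\expe[Y_T] = \lim_{n \to \infty} \expe[Y^T_n]$. In the martingale case, $Y^T_0 = Y_{\min(T,0)} = Y_0$, and the martingale property of $\{Y^T_n\}_{n \in \bbZ^+}$ yields $\expe[Y^T_n] = \expe[Y^T_0] = \expe[Y_0]$ for every $n \in \bbZ^+$, hence $\expe[Y_T] = \expe[Y_0]$. In the supermartingale case, $\expe[Y^T_n] \le \expe[Y^T_0] = \expe[Y_0]$ for all $n$, so passing to the limit gives $\expe[Y_T] \le \expe[Y_0]$; the submartingale case is symmetric with the inequality reversed.

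I do not expect a genuine obstacle here: the theorem is essentially a repackaging of stopped-process stability together with UI convergence. The only points demanding a moment's care are (i) observing that $Y^T_0 = Y_0$, so that the constant-expectation chain is anchored at $\expe[Y_0]$ rather than at some auxiliary value, and (ii) ensuring that the almost-sure limit produced by \cref{The:UIConvergence} is the same random variable whose expectation we control through the $\calL^1$ limit — which holds because along the full sequence the $\calL^1$ limit and the a.s.\ limit must coincide up to a $\prob$-null set.
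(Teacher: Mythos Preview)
Your proposal is correct and follows essentially the same route as the paper: apply \cref{The:UIConvergence} to the stopped process (which is a (super-/sub-)martingale by \cref{The:StoppedMartingale}) to obtain a.s.\ and $\calL^1$ convergence, then use the (super-/sub-)martingale property of $\{Y^T_n\}$ together with $Y^T_0 = Y_0$ to pin down $\expe[Y_T]$. The paper spells out the constant-expectation step via $\expe[Y^T_{n+1}] = \expe[\expe(Y^T_{n+1}\mid\calF_n)] = \expe[Y^T_n]$ and an induction, which is exactly what your ``martingale property yields $\expe[Y^T_n]=\expe[Y^T_0]$'' abbreviates.
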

\begin{proof}
  By \cref{The:UIConvergence}, $\{Y^T_n\}_{n \in \bbZ^+}$ converges to some random variable $Y_T$, both a.s. and in $\calL^1$.
  Thus, $Y_T = \lim_{n \to \infty} Y^T_n$ a.s., and $\expe[Y_T] = \lim_{n \to \infty} \expe[Y^T_n]$.
  On the other hand, by the definition of martingales and \cref{The:ConditionalExpectationProperties}\ref{Item:TotalExpectation}, we have $\expe[Y^T_{n+1}] = \expe[\expe(Y^T_{n + 1} \mid \calF_n) ] = \expe[Y^T_n]$ for all $n \in \bbZ^+$.
  Thus, by a simple induction, we know that $\expe[Y^T_n] = \expe[Y^T_0] = \expe[Y_0]$ for all $n \in \bbZ^+$.
\end{proof}

We can reformulate and generalize the classic OST (\cref{Prop:OST}) using UI conditions.

\begin{theorem}\label{The:OSTUICriteria}
  Let $\{Y_n\}_{n \in \bbZ^+}$ be a martingale, and let $T$ be a stopping time.
  Then $\{Y^T_n\}_{n \in \bbZ^+}$ is uniformly integrable in each of the following situations:
  \begin{enumerate}[(a)]
    \item $T$ is almost-surely bounded;
    \item $\expe[T] < \infty$ and $\expe[|Y_{n+1} - Y_n| \mid \calF_n] \le K$, a.s., on $\{T>n\}$, for all $n \in \bbZ^+$, for some $K \in \bbR^+$;
    \item $\{Y_n\}_{n \in \bbZ^+}$ is uniformly integrable.    
  \end{enumerate}
  The result still holds if $\{Y_n\}_{n \in \bbZ^+}$ is a super-/sub-martingale.
\end{theorem}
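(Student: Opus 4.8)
The plan is to treat the three situations (a)--(c) separately and, in each, establish uniform integrability of the stopped process $\{Y^T_n\}_{n\in\bbZ^+}$ (the integrability $Y_n\in\calL^1$ needed throughout is part of the (super-/sub-)martingale hypothesis, so the explicit $\expe[|Y_n|]<\infty$ of \cref{Prop:OST} is automatic). I will repeatedly invoke two elementary facts: (i) a family dominated in absolute value by a single integrable random variable is UI (so, a fortiori, any finite family of $\calL^1$ variables is UI); and (ii) for a fixed $Z\in\calL^1$, the family $\{\expe[Z\mid\calG] : \calG\subseteq\calF \text{ a sub-}\sigma\text{-algebra}\}$ is UI --- which follows from $|\expe[Z\mid\calG]|\le\expe[|Z|\mid\calG]$, Markov's inequality (bounding $\bbP[|\expe[Z\mid\calG]|>K]\le\expe[|Z|]/K$ uniformly in $\calG$), and the uniform integrability of the single variable $|Z|$.

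For situation (a): if $T\le N$ almost surely, then $\min(T,n)\le N$ for every $n$, so $|Y^T_n| = |Y_{\min(T,n)}| \le \max_{0\le k\le N}|Y_k| =: Z$ almost surely, and $Z\in\calL^1$ since each $Y_k\in\calL^1$; domination by $Z$ gives UI by fact (i). For situation (b), telescoping along the stopped path yields, almost surely,
\[
  |Y^T_n| \;\le\; |Y_0| + \sum_{k=0}^{\min(T,n)-1}|Y_{k+1}-Y_k| \;\le\; |Y_0| + \sum_{k\ge 0}\mathrm{I}_{\{T>k\}}\,|Y_{k+1}-Y_k| \;=:\; W,
\]
the second inequality holding because the summands are nonnegative and $\min(T,n)\le T$. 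It remains to check $W\in\calL^1$. Since $T$ is a stopping time, $\{T>k\}=\{T\le k\}^c\in\calF_k$, so $\mathrm{I}_{\{T>k\}}$ is nonnegative and $\calF_k$-measurable; using \cref{The:ConditionalExpectationProperties}\ref{Item:TotalExpectation} together with the nonnegative case of \cref{The:ConditionalExpectationProperties}\ref{Item:TakingOutKnown}, and the hypothesis $\expe[|Y_{k+1}-Y_k|\mid\calF_k]\le K$ on $\{T>k\}$, we get
\[
  \expe[\mathrm{I}_{\{T>k\}}|Y_{k+1}-Y_k|] = \expe\bigl[\mathrm{I}_{\{T>k\}}\,\expe[|Y_{k+1}-Y_k|\mid\calF_k]\bigr] \le K\,\bbP[T>k].
\]
Summing over $k$ (legitimate by \cref{The:MON} applied to partial sums) and using $\sum_{k\ge 0}\bbP[T>k]=\expe[T]$, we obtain $\expe[W]\le\expe[|Y_0|]+K\,\expe[T]<\infty$, so fact (i) again applies. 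Neither of these two arguments used the martingale property --- only that the $Y_n$ are integrable and $\{Y^T_n\}$ adapted --- so both transfer verbatim to super-/sub-martingales.

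For situation (c), where $\{Y_n\}$ is itself UI: by \cref{The:UIConvergence} the martingale $\{Y_n\}$ converges in $\calL^1$ to $Y_\infty:=\lim_n Y_n\in\calL^1$, and passing to the limit in $\expe[Y_n;A]=\expe[Y_m;A]$ (valid for $A\in\calF_m$, $n\ge m$) gives $Y_m=\expe[Y_\infty\mid\calF_m]$ a.s. For the bounded stopping time $\min(T,n)$ one then has the optional-sampling identity $Y^T_n = Y_{\min(T,n)} = \expe[Y_\infty\mid\calF_{\min(T,n)}]$, with $\calF_{\min(T,n)}$ the usual stopping-time $\sigma$-algebra; concretely this is verified from the pathwise decomposition $Y_{\min(T,n)}=\sum_{k<n}Y_k\,\mathrm{I}_{\{T=k\}}+Y_n\,\mathrm{I}_{\{T\ge n\}}$, the memberships $\{T=k\}\in\calF_k$ and $\{T\ge n\}\in\calF_n$, and the identity $Y_k\,\mathrm{I}_{\{T=k\}}=\expe[Y_\infty\,\mathrm{I}_{\{T=k\}}\mid\calF_k]$ from \cref{The:ConditionalExpectationProperties}\ref{Item:TakingOutKnown}. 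Thus $\{Y^T_n\}_{n}$ is a family of conditional expectations of the single integrable variable $Y_\infty$, hence UI by fact (ii). For the super-/sub-martingale version of (c), write $Y_n=M_n\pm A_n$ via the Doob--Meyer decomposition (\cref{The:MartingaleDecomposition}): $\{A_n\}$ is nonnegative, non-decreasing and in $\calL^1$ with $\sup_n\expe[A_n]<\infty$ (since $\expe[A_n]$ is, up to sign, $\expe[Y_n]-\expe[M_0]$ and a UI family is bounded in $\calL^1$), so $A_n\uparrow A_\infty\in\calL^1$ by \cref{The:MON}; then $\{A^T_n\}$ is dominated by $A_\infty$ hence UI, $\{M_n\}=\{Y_n\mp A_n\}$ is a difference of two UI families hence UI, so $\{M^T_n\}$ is UI by the martingale case, and $\{Y^T_n\}=\{M^T_n\pm A^T_n\}$ is UI as a sum of UI families.

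The main obstacle is situation (c): the identity $Y_{\min(T,n)}=\expe[Y_\infty\mid\calF_{\min(T,n)}]$ (optional sampling for bounded stopping times) is standard but the $\sigma$-algebra $\calF_{\min(T,n)}$ and the optional-sampling theorem are not developed in the preceding material, so an honest proof must either introduce them or --- as sketched above --- verify the needed consequence directly from the decomposition $Y_{\min(T,n)}=\sum_{k<n}Y_k\,\mathrm{I}_{\{T=k\}}+Y_n\,\mathrm{I}_{\{T\ge n\}}$, which is routine but requires care in checking the defining property of conditional expectation against events $G\in\calF_{\min(T,n)}$. A secondary point of bookkeeping, in the super-/sub-martingale extension, is confirming $A_\infty\in\calL^1$ in each of (a)--(c); this is immediate once one notes that the hypotheses already force $\sup_n\expe[A_n]<\infty$ --- via $A_n\le A_N$ in (a), via the increment bound and $\expe[T]<\infty$ in (b), and via $\calL^1$-boundedness of the UI family in (c).
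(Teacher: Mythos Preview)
The paper does not supply a proof of \cref{The:OSTUICriteria}; it is stated as a standard result (a reformulation of the classical OST in \cref{Prop:OST} under uniform-integrability conditions) and is used as background. So there is no paper argument to compare your proposal against, only the question of whether your proof is sound.

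Your treatment of (a) and (b) is correct and is the standard one: in both cases you dominate $|Y^T_n|$ by a single integrable random variable (respectively $\max_{k\le N}|Y_k|$ and $|Y_0|+\sum_{k\ge 0}\mathrm{I}_{\{T>k\}}|Y_{k+1}-Y_k|$), and neither argument uses anything beyond integrability of the $Y_n$, so the super-/sub-martingale case is immediate. Your treatment of (c) is also the canonical route---close the UI martingale to $Y_\infty\in\calL^1$, represent $Y^T_n=\expe[Y_\infty\mid\calF_{\min(T,n)}]$ via optional sampling at the bounded time $\min(T,n)$, and invoke UI of a family of conditional expectations of a fixed integrable variable---and your explicit verification via the decomposition $Y_{\min(T,n)}=\sum_{k<n}Y_k\mathrm{I}_{\{T=k\}}+Y_n\mathrm{I}_{\{T\ge n\}}$ correctly sidesteps the fact that $\calF_{\min(T,n)}$ and optional sampling are not developed earlier in the paper. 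The Doob--Meyer reduction for the super-/sub-martingale version of (c) is also right; the step ``$\{M_n\}$ is UI as a sum of two UI families'' uses that finite sums of UI families are UI, which is elementary but worth stating explicitly since you rely on it twice.

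In short: the proposal is correct; the paper simply omits the proof, so there is nothing to contrast.
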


For a special class of martingales, we have another set of conditions for optional stopping.
Note that the assumptions are minimal, compared to \cref{The:OSTUICriteria}.

\begin{theorem}\label{The:OSTUnderPositivity}
  Let $\{Y_n\}_{n \in \bbZ^+}$ be a supermartingale (resp., submartingale), and let $T$ be a stopping time.
  Suppose that that family $\{Y_{\min(T, n)}^-\}_{n \in \bbZ^+}$ (resp., $\{Y_{\min(T, n)}^+\}_{n \in \bbZ^+}$) is uniformly integrable.
  Then $Y_T = \lim_{n \to \infty} Y^T_n$ is well-defined, $Y_T \in \calL^1$, and $\expe[Y_T] \le \expe[Y_0]$ (resp., $\expe[Y_T] \ge \expe[Y_0]$).
\end{theorem}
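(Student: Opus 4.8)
The plan is to reduce everything to the stopped process $\{Y^T_n\}_{n\in\bbZ^+}$ and then feed it into the convergence results already established. First, by \cref{The:StoppedMartingale}, $\{Y^T_n\}_{n\in\bbZ^+}$ is itself a supermartingale. Since $(Y^T_n)^- = Y^-_{\min(T,n)}$, the hypothesis says exactly that $\{(Y^T_n)^-\}_{n\in\bbZ^+}$ is uniformly integrable, and every uniformly integrable family is bounded in $\calL^1$ (take $\varepsilon=1$ in the definition), so $\sup_{n\in\bbZ^+}\expe[(Y^T_n)^-]<\infty$. Hence \cref{The:MartingaleConvergenceUnderPositivity} applies to $\{Y^T_n\}_{n\in\bbZ^+}$ and produces a random variable $Y_T\in\calL^1$ with $Y^T_n\to Y_T$, a.s.; this gives the first two conclusions, so it remains to prove $\expe[Y_T]\le\expe[Y_0]$.

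Since $\{Y^T_n\}_{n\in\bbZ^+}$ is a supermartingale, taking expectations in $\expe[Y^T_{n+1}\mid\calF_n]\le Y^T_n$ yields $\expe[Y^T_{n+1}]\le\expe[Y^T_n]$, so $\expe[Y^T_n]\le\expe[Y^T_0]=\expe[Y_0]$ for every $n$. The delicate point is passing to the limit, since a.s.\ convergence alone does not carry over to expectations. I would split $Y^T_n=(Y^T_n)^+-(Y^T_n)^-$ and handle the two parts differently. For the negative part, $(Y^T_n)^-\to Y^-_T$ a.s.\ and the family is uniformly integrable, and this forces $\expe[(Y^T_n)^-]\to\expe[Y^-_T]$: given $\varepsilon>0$, choose $K$ with $\expe[(Y^T_n)^-;(Y^T_n)^->K]<\varepsilon$ for all $n$; the truncations $\min((Y^T_n)^-,K)$ are uniformly bounded and converge a.s.\ to $\min(Y^-_T,K)$, so \cref{The:DOM} gives $\expe[\min((Y^T_n)^-,K)]\to\expe[\min(Y^-_T,K)]$, and since $(Y^T_n)^--\min((Y^T_n)^-,K)\le(Y^T_n)^-\,\mathrm{I}_{\{(Y^T_n)^->K\}}$ we get $\limsup_n\expe[(Y^T_n)^-]\le\expe[Y^-_T]+\varepsilon$, while Fatou's lemma (which follows from \cref{The:MON}) supplies the matching bound $\expe[Y^-_T]\le\liminf_n\expe[(Y^T_n)^-]$. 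For the positive part, Fatou's lemma directly gives $\expe[Y^+_T]\le\liminf_n\expe[(Y^T_n)^+]$. Combining, $\expe[Y_T]=\expe[Y^+_T]-\expe[Y^-_T]\le\liminf_n\expe[(Y^T_n)^+]-\lim_n\expe[(Y^T_n)^-]=\liminf_n\bigl(\expe[(Y^T_n)^+]-\expe[(Y^T_n)^-]\bigr)=\liminf_n\expe[Y^T_n]\le\expe[Y_0]$.

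The submartingale case follows by applying the supermartingale result to $\{-Y_n\}_{n\in\bbZ^+}$: since $(-Y)^-=Y^+$, the hypothesis that $\{Y^+_{\min(T,n)}\}_{n\in\bbZ^+}$ is uniformly integrable is exactly what the supermartingale version needs, and its conclusion $\expe[-Y_T]\le\expe[-Y_0]$ rearranges to the desired $\expe[Y_T]\ge\expe[Y_0]$.

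I expect the main obstacle to be the limit passage for the negative part: it is the only place the uniform-integrability hypothesis is used, and it is in effect a self-contained proof of the Vitali convergence theorem for that particular sequence (the paper states $\calL^1$-convergence only for uniformly integrable \emph{martingales}, in \cref{The:UIConvergence}, which does not directly apply here). The positive part needs only Fatou, so its expectation may genuinely drop in the limit --- but that is harmless, indeed helpful, for an upper bound on $\expe[Y_T]$ --- and everything else is routine bookkeeping with \cref{The:StoppedMartingale} and \cref{The:MartingaleConvergenceUnderPositivity}.
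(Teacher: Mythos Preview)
Your proof is correct and follows the same overall architecture as the paper's: pass to the stopped process, obtain a.s.\ convergence from \cref{The:MartingaleConvergenceUnderPositivity}, split $Y^T_n$ into positive and negative parts, control the positive part with Fatou and the negative part with uniform integrability, then combine to get $\expe[Y_T]\le\liminf_n\expe[Y^T_n]\le\expe[Y_0]$.

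The one substantive difference is in how $\expe[(Y^T_n)^-]\to\expe[Y_T^-]$ is obtained. You prove it from scratch via a Vitali-type truncation argument (cut at level $K$, apply \cref{The:DOM} to the bounded truncations, bound the tails by the UI constant, and match with Fatou for the lower bound). The paper instead notes, by conditional Jensen applied to the convex map $t\mapsto\max(0,-t)$, that $\expe[Y_{n+1}^-\mid\calF_n]\ge\max(0,-\expe[Y_{n+1}\mid\calF_n])\ge Y_n^-$, so $\{Y^-_{\min(T,n)}\}$ is itself a uniformly integrable \emph{submartingale}, and then simply cites \cref{The:UIConvergence} --- which, contrary to your parenthetical, is stated in the paper to cover super- and submartingales as well as martingales. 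The paper's route is shorter but requires the extra observation that $Y^-$ inherits a submartingale structure; your route is more elementary and works without that observation, at the cost of reproving a special case of Vitali's theorem.
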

\begin{proof}
  We show the proof for the supermartingale case.
  For any $n \in \bbZ^+$, we have
  \[
  \expe[Y_{n+1}^- \mid \calF_n] = \expe[\max(0, - Y_{n+1}) \mid \calF_n] \ge \max(0, -\expe[Y_{n+1} \mid \calF_n]) \ge \max(0, -Y_n) = Y_n^-, a.s.
  \]
  Therefore, $\{Y^-_{\min(T,n)}\}_{n \in \bbZ^+}$ is a uniformly integrable submartingale.
  By \cref{The:UIConvergence}, $Y^-_T = \lim_{n \to \infty} Y^-_{\min(T,n)}$ exists a.s., $Y^-_T \in \calL^1$, and $\expe[Y_T^-] = \lim_{n \to \infty} \expe[Y^-_{\min(T,n)}]$.
  
  On the other hand, by \cref{The:MartingaleConvergenceUnderPositivity}, $Y_T = \lim_{n \to \infty} Y^T_n$ is well-defined and $Y_T \in \calL^1$.
  Therefore, $Y_T^+ = \lim_{n \to \infty} Y^+_{\min(T,n)} = \lim_{n \to \infty} (Y^T_n + Y^-_{\min(T,n)})$ is well-defined.
  Then by Fatou's lemma, we have
  \[
  \expe[Y_T^+] = \expe[\lim_{n\to\infty} (Y^T_n + Y^-_{\min(T,n)})] \le \liminf_{n \to \infty} \expe[Y^T_n + Y^-_{\min(T,n)}] = \liminf_{n \to \infty} \expe[Y^T_n] + \lim_{n \to \infty} \expe[Y^-_{\min(T,n)}] \le \expe[Y_0] + \expe[Y_T^-].
  \]
  Thus, $\expe[Y_T] = \expe[Y_T^+] - \expe[Y_T^-] \le \expe[Y_0]$.
\end{proof}

\begin{corollary}\label{Cor:OSTUnderPositivity}
  Let $\{Y_n\}_{n \in \bbZ^+}$ be a nonnegative supermartingale (resp., nonpositive submartingale), and let $T$ be a stopping time.
  Then the limit $Y_T = \lim_{n \to \infty} Y^T_n$ is well-defined, $Y_T \in  \calL^1$, and $\expe[Y_T] \le \expe[Y_0]$ (resp., $\expe[Y_T] \ge \expe[Y_0]$).
\end{corollary}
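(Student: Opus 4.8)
The plan is to obtain this as an immediate special case of \cref{The:OSTUnderPositivity}. I would first treat the supermartingale case and recover the submartingale case either from the symmetric half of \cref{The:OSTUnderPositivity} or by passing to $\{-Y_n\}_{n \in \bbZ^+}$. So assume $\{Y_n\}_{n \in \bbZ^+}$ is a nonnegative supermartingale and $T$ is a stopping time.

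The key observation is that nonnegativity trivializes the uniform-integrability hypothesis of \cref{The:OSTUnderPositivity}. Since $Y_n \ge 0$ almost surely for every $n \in \bbZ^+$ and $\min(T(\omega), n) \in \bbZ^+$ for every $\omega$ and $n$, we have $Y_{\min(T,n)} \ge 0$ a.s., and hence $Y_{\min(T,n)}^- = \max(0, -Y_{\min(T,n)}) = 0$ a.s. for every $n$. Thus the family $\{Y_{\min(T,n)}^-\}_{n \in \bbZ^+}$ is, up to a null set, the constant family $\{0\}$, which is uniformly integrable: given $\varepsilon > 0$, any $K \in \bbR^+$ works, since $\expe[|Y_{\min(T,n)}^-|; |Y_{\min(T,n)}^-| > K] = 0 < \varepsilon$ for all $n$. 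This is exactly the hypothesis required to invoke \cref{The:OSTUnderPositivity}.

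Applying \cref{The:OSTUnderPositivity} then gives directly that $Y_T = \lim_{n \to \infty} Y^T_n$ is well-defined, $Y_T \in \calL^1$, and $\expe[Y_T] \le \expe[Y_0]$. For a nonpositive submartingale the same argument shows $Y_{\min(T,n)}^+ = 0$ a.s. for every $n$, so the symmetric part of \cref{The:OSTUnderPositivity} yields $Y_T \in \calL^1$ and $\expe[Y_T] \ge \expe[Y_0]$. I do not expect any genuine obstacle here, as the corollary is essentially a direct instantiation; the only point meriting a word of care is the distinction between almost-everywhere and everywhere nonnegativity, but this is immaterial since both uniform integrability and the conclusions of \cref{The:OSTUnderPositivity} are insensitive to changes on null sets.
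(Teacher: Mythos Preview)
Your proposal is correct and follows exactly the paper's approach: the paper's proof is simply ``Appeal to \cref{The:OSTUnderPositivity},'' and you have spelled out precisely why nonnegativity makes the negative parts identically zero (hence trivially UI), which is the only hypothesis to check.
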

\begin{proof}
  Appeal to \cref{The:OSTUnderPositivity}.
\end{proof}

Now we present a new extension to \cref{The:OSTUICriteria}.

\begin{theorem}\label{The:OSTExtended}
  Let $\{Y_n\}_{n \in \bbZ^+}$ be a (super-/sub-)martingale, and let $T$ be a stopping time.
  Then $\{Y^T_n\}_{n \in \bbZ^+}$ is uniformly integrable if
  $\expe[p(T)] < \infty$ and $|Y_n| \le p(n)$, a.s., on $\{T \ge n\}$, for all $n \in \bbZ^+$, for some non-decreasing function $p$.
\end{theorem}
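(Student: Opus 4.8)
The plan is to show that the entire stopped family $\{Y^T_n\}_{n \in \bbZ^+}$ is dominated in absolute value by the single integrable random variable $p(T)$, and then to invoke the standard fact that a family of random variables dominated by a fixed integrable random variable is uniformly integrable. We may assume without loss of generality that $p$ is nonnegative, and we extend $p$ to $\bbZ^+ \cup \{\infty\}$ by setting $p(\infty) \defeq \lim_{n \to \infty} p(n) = \sup_{n \in \bbZ^+} p(n) \in [0,\infty]$; since $\expe[p(T)] < \infty$, the random variable $p(T)$ is finite almost surely and lies in $\calL^1$.

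First I would establish, for each fixed $n \in \bbZ^+$, the almost-sure bound $|Y^T_n| \le p(\min(T,n))$. On the event $\{T \ge n\}$ (which includes $\{T = \infty\}$) we have $\min(T,n) = n$ and $Y^T_n = Y_n$, so the hypothesis ``$|Y_n| \le p(n)$, a.s., on $\{T \ge n\}$'' gives the bound directly there. On the complementary event $\{T < n\} = \bigcup_{k=0}^{n-1} \{T = k\}$ we have $Y^T_n = Y_T$ and $\min(T,n) = T$; on each piece $\{T = k\} \subseteq \{T \ge k\}$ the hypothesis applied with index $k$ yields $|Y_k| \le p(k)$ a.s., i.e.\ $|Y^T_n| \le p(\min(T,n))$ on $\{T = k\}$, and taking the finite union over $k < n$ closes this case. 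Because $p$ is non-decreasing and $\min(T,n) \le T$ with the convention above, $p(\min(T,n)) \le p(T)$, and hence $|Y^T_n| \le p(T)$ a.s., uniformly in $n$.

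It then remains to run the routine uniform-integrability estimate. For any $K > 0$, on the almost-sure event where $|Y^T_n| \le p(T)$ we have $\{|Y^T_n| > K\} \subseteq \{p(T) > K\}$, so $\expe[|Y^T_n|; |Y^T_n| > K] \le \expe[p(T); p(T) > K]$. Since $p(T) \in \calL^1$ and $p(T)\,\mathrm{I}_{\{p(T) > K\}} \to 0$ pointwise as $K \to \infty$ (using that $p(T)$ is finite a.s.) while being dominated by $p(T)$, \cref{The:DOM} gives $\expe[p(T); p(T) > K] \to 0$ as $K \to \infty$; choosing $K$ large enough makes this smaller than any prescribed $\varepsilon$ simultaneously for all $n$, which is exactly uniform integrability. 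The super-/sub-martingale case requires no modification, since the whole argument uses only the bounds $|Y_n| \le p(n)$ and the definition of the stopped process, never the martingale identity.

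As for obstacles: there is no deep difficulty here, and the only points needing a little care are (i) correctly transferring the a.s.\ bound on $\{T \ge n\}$ to each event $\{T = k\}$ with $k < n$ when expressing $Y^T_n$ on $\{T < n\}$, and (ii) the bookkeeping that makes the single dominating variable $p(T)$ integrable — which is where the monotone extension $p(\infty) = \sup_n p(n)$ and the hypothesis $\expe[p(T)] < \infty$ (forcing $p(T) < \infty$ a.s.) are used, in particular on the event $\{T = \infty\}$.
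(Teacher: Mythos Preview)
Your proposal is correct and follows essentially the same approach as the paper: dominate the stopped process by the single integrable variable $p(T)$ via $|Y^T_n| = |Y_{\min(T,n)}| \le p(\min(T,n)) \le p(T)$ a.s., and conclude uniform integrability from $\expe[p(T)] < \infty$. Your version is more explicit where the paper is terse---in particular, the case split on $\{T \ge n\}$ versus $\{T = k\}$ for $k < n$ to justify $|Y_{\min(T,n)}| \le p(\min(T,n))$, and the handling of $\{T = \infty\}$---but the idea is the same.
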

\begin{proof}
  The idea is to show that the family $\{Y^T_n\}_{n \in \bbZ^+}$ is uniformly dominated by a random variable in $\calL^1$.
  In fact, we have
  \[
  |Y^T_n| = |Y_{\min(T,n)}| \le p(\min(T,n)) \le p(T), a.s.
  \]
  We conclude by the fact that $\expe[p(T)] < \infty$.
\end{proof}

\begin{corollary}\label{Cor:OSTExtended}
  Let $\{Y_n\}_{n \in \bbZ^+}$ be a (super-/sub-)martingale, and let $T$ be a stopping time.
  Then $\{Y^T_n\}_{n \in \bbZ^+}$ is uniformly integrable if
  there exist $\ell \in \bbN$ and $C \ge 0$ such that $\expe[T^\ell] < \infty$ and for all $n \in \bbZ^+$, $|Y_n| \le C \cdot (n+1)^{\ell}$, a.s., on $\{ T \ge n \}$.
\end{corollary}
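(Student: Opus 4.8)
The plan is to deduce this from \cref{The:OSTExtended} by exhibiting a suitable non-decreasing envelope $p$. The natural choice is the monomial $p(n) \defeq C \cdot (n+1)^{\ell}$, which is non-decreasing on $\bbZ^+$, and for which the stepwise bound $|Y_n| \le p(n)$, a.s., on $\{T \ge n\}$ is precisely the hypothesis of the corollary. Thus the only remaining obligation is the integrability condition $\expe[p(T)] < \infty$ demanded by \cref{The:OSTExtended}.

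First I would note that $\expe[T^{\ell}] < \infty$ forces $\prob[T = \infty] = 0$, since otherwise $T^{\ell} = \infty$ on a set of positive probability and the expectation would diverge; hence $T$ is almost surely finite and $p(T) = C \cdot (T+1)^{\ell}$ is an almost-surely real-valued random variable. Next I would compare $(T+1)^{\ell}$ with $T^{\ell}$: on $\{T = 0\}$ we have $(T+1)^{\ell} = 1 \le 2^{\ell}$, and on $\{T \ge 1\}$ we have $T + 1 \le 2T$, so $(T+1)^{\ell} \le 2^{\ell} T^{\ell}$; in all cases $(T+1)^{\ell} \le 2^{\ell}(T^{\ell} + 1)$. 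Taking expectations,
\[
\expe[p(T)] = C \cdot \expe[(T+1)^{\ell}] \le C \cdot 2^{\ell} \bigl(\expe[T^{\ell}] + 1\bigr) < \infty.
\]
Applying \cref{The:OSTExtended} with this $p$ then yields that $\{Y^T_n\}_{n \in \bbZ^+}$ is uniformly integrable, and this holds for martingales as well as for super-/sub-martingales, since \cref{The:OSTExtended} already covers all three cases.

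There is essentially no obstacle here: the argument is a one-line instantiation of \cref{The:OSTExtended} together with the elementary polynomial estimate above. The only point that requires a moment's care is ensuring that $p(T)$ is genuinely integrable rather than merely finite-valued almost surely --- i.e., that the event $\{T = \infty\}$ contributes nothing --- which is exactly what the moment assumption $\expe[T^{\ell}] < \infty$ guarantees.
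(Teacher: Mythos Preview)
Your proposal is correct and follows the same approach as the paper: instantiate \cref{The:OSTExtended} with $p(n) \defeq C \cdot (n+1)^{\ell}$ and observe that $\expe[T^{\ell}] < \infty$ implies $\expe[C \cdot (T+1)^{\ell}] < \infty$. The paper leaves that last implication as an observation, whereas you spell it out via the elementary bound $(T+1)^{\ell} \le 2^{\ell}(T^{\ell}+1)$; otherwise the arguments are identical.
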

\begin{proof}
  Appeal to \cref{The:OSTExtended}, by setting $p(n) \defeq C \cdot (n+1)^\ell$ and observing that $\expe[T^\ell] < \infty$ implies $\expe[C \cdot (T+1)^\ell] < \infty$. 
\end{proof}

\subsection{A Sound Program Logic}
\label{Sec:SoundDerivation}

\begin{figure*}
\begin{mathpar}\small
  \Rule{Valid-Ctx}{ \Forall{f \in \mathrm{dom}(\Delta)}\Forall{(\Gamma;Q,\Gamma;Q') \in \Delta(f)} \Delta \vdash \{\Gamma;Q\}~\scrD(f)~\{\Gamma';Q'\} }{ \vdash \Delta }
  \and
  \Rule{Q-Skip}{ }{ \Delta \vdash \{ \Gamma;Q \}~ \iskip ~\{ \Gamma;Q \} }
  \and
  \Rule{Q-Tick}{    Q = Q' + c }{ \Delta \vdash \{ \Gamma;Q \}~\itick{c}~\{ \Gamma;Q' \} }
  \and
  \Rule{Q-Assign}{ \Gamma = [E/x]\Gamma' \\   Q = [E/x]Q' }{ \Delta \vdash \{ \Gamma;Q  \}~\iassign{x}{E}~\{ \Gamma';Q' \} }
  \and
  \Rule{Q-Sample}{ \Gamma =\Forall{x \in \mathrm{supp}(\mu_D)} \Gamma' \\  Q = \expe_{x \sim \mu_D}[Q'] }{ \Delta \vdash \{  \Gamma;Q   \}~\isample{x}{D}~\{ \Gamma' ;Q' \} }
  \and
  \Rule{Q-Loop}{ \Delta \vdash \{ \Gamma \wedge L;Q \}~S_1~\{ \Gamma;Q \}  }{ \Delta \vdash \{ \Gamma;Q \}~\iloop{L}{S_1}~\{ \Gamma \wedge \neg L; Q \} }
  \and
  \Rule{Q-Cond}{ \Delta \vdash \{ \Gamma \wedge L; Q \}~S_1~\{\Gamma';Q' \} \\ \Delta \vdash \{ \Gamma \wedge \neg L; Q \}~S_2~\{\Gamma';Q' \}  }{ \Delta \vdash \{ \Gamma;Q \}~\icond{L}{S_1}{S_2}~\{ \Gamma';Q' \} }
  \and
  \Rule{Q-Seq}{ \Delta \vdash \{\Gamma;Q \}~S_1~\{\Gamma';Q' \} \\ \Delta \vdash \{ \Gamma';Q' \}~S_2~\{\Gamma'';Q''\}  }{ \Delta \vdash \{ \Gamma;Q \}~S_1;S_2~\{\Gamma'';Q''\} }
  \and
  \Rule{Q-Call}{  (\Gamma;Q,\Gamma';Q') \in \Delta(f) \\ c \in \bbR  }{ \Delta \vdash \{ \Gamma;  Q+c \}~\iinvoke{f}~\{ \Gamma';  Q'+c \} }
  \and
  \Rule{Q-Prob}{  \Delta \vdash \{ \Gamma;Q_1 \}~S_1~\{ \Gamma';Q' \} \\ \Delta \vdash \{ \Gamma;Q_2 \} ~S_2~\{ \Gamma';Q' \} \\ Q = p \cdot Q_1 + (1-p) \cdot Q_2 }{ \Delta \vdash \{ \Gamma;Q \} ~\iprob{p}{S_1}{S_2}~\{ \Gamma'; Q' \} }
  \and
  \Rule{Q-Weaken}{ \Delta \vdash \{ \Gamma_0;Q_0 \}~S~\{ \Gamma_0';Q_0' \} \\ \Gamma \models \Gamma_0 \\ \Gamma_0' \models \Gamma' \\ \Gamma \models Q \ge Q_0 \\ \Gamma_0' \models Q_0' \ge Q'  }{ \Delta \vdash \{\Gamma;Q \}~S~\{\Gamma';Q'\} }
  \and
  \Rule{Valid-Cfg}{ \gamma \models \Gamma \\ \Delta \vdash \{\Gamma;Q\}~S~\{\Gamma';Q'\} \\ \Delta \vdash \{ \Gamma'; Q' \}~K }{ \Delta \vdash \{ \Gamma; Q \}~\tuple{\gamma,S,K,\alpha} }
  \and
  \Rule{QK-Stop}{ }{ \Delta \vdash \{ \Gamma ; Q \}~\kstop }
  \and
  \Rule{QK-Loop}{  \Delta \vdash \{ \Gamma \wedge L; Q\}~S~\{\Gamma;Q\} \\ \Delta \vdash \{ \Gamma \wedge \neg L ;Q\}~K }{ \Delta \vdash \{ \Gamma;Q \}~\kloop{L}{S}{K} }
  \and
  \Rule{QK-Seq}{ \Delta \vdash \{ \Gamma;Q\}~S~\{\Gamma';Q'\} \\ \Delta \vdash \{\Gamma';Q'\}~K }{ \Delta \vdash \{ \Gamma;Q \}~\kseq{S}{K} }
  \and
  \Rule{QK-Weaken}{ \Delta \vdash \{\Gamma';Q'\}~K \\ \Gamma \models \Gamma' \\ \Gamma \models Q \ge Q' }{ \Delta \vdash \{\Gamma;Q\}~K }
\end{mathpar}
\caption{Inference rules of the program logic.}
\label{Fi:CompleteInferenceRules}
\end{figure*}

We formalize a program logic for expected-cost analysis in a Hoare-logic style.
Again, we focus on \emph{upper} bounds, but a logic for lower bounds can be easily obtained by using different weakening rules.
The judgment has the form $\Delta \vdash \{\Gamma;Q\}~S~\{\Gamma';Q'\}$, where
$S$ is a statement, $\{\Gamma;Q\}$ is a precondition, $\{\Gamma';Q'\}$ is a postcondition,
and $\Delta$ is a context of function specifications.
The \emph{logical context} $\Gamma : (\mathsf{VID}  \to  \bbR) \to \set{\top,\bot}$ is a predicate that describes reachable states at a program point.
The logical contexts have the same meaning as in Hoare logic.
The \emph{potential annotation} $Q : (\mathsf{VID} \to \bbR) \to \bbR$ specifies a measurable map on program states that is used to define expected-potential functions.
The semantics of the triple $\{\cdot;Q\}~S~\{\cdot;Q'\}$ is that if the rest of the computation after executing $S$ has its expected cost bounded by $Q'$, then the whole computation has its expected accumulated cost bounded by $Q$.
\emph{Function specifications} are valid pairs of pre- and post-conditions for all declared functions in a program.
For each function $f$,
a valid specification $(\Gamma;Q,\Gamma';Q') \in \Delta(f)$ is justified by the judgment $\Delta \vdash \{\Gamma;Q\}~\scrD(f)~\{\Gamma';Q'\}$, where $\scrD(f)$ is the function body of $f$.
The validity of a context $\Delta$ for function specifications is then established by the validity of all specifications in $\Delta$, denoted by $\vdash \Delta$.
To allow context-sensitive analysis, a function can have multiple specifications.

\cref{Fi:CompleteInferenceRules} presents the inference rules of the program logic.
The rule \textsc{(Q-Tick)} is the only rule that deals with costs in a program.
Because the $\itick{\cdot}$ statement does not change the program state, the logical contexts in the pre- and post-condition stay the same.
The rule \textsc{(Q-Sample)} accounts for sampling statements.
Because ``$x \sim D$'' randomly assigns a value to $x$ in the support of distribution $D$, we quantify $x$ out universally from the logical context.
To compute $Q = \expe_{x \sim \mu_D}[Q']$, where $x$ is drawn from distribution $D$, we assume the moments for $D$ are well-defined and computable,
and substitute $x^i$, $i \in \bbN$ with the corresponding moments in $Q'$.
We make this assumption because every component of $Q'$ is a polynomial over program variables.
For example, if $D = \kw{uniform}({-1},2)$, we know the following facts
\[
\expe_{x \sim \mu_D}[x^0] = 1, \expe_{x \sim \mu_D}[x^1] = \sfrac{1}{2},\expe_{x \sim \mu_D}[x^2] = 1,\expe_{x \sim \mu_D}[x^3] = \sfrac{5}{4}.
\]
Then for $Q' = xy^2+x^3y$, by the linearity of expectations, we compute $Q=\expe_{x \sim \mu_D}[Q']$ as follows:
\begin{center}
$
\begin{aligned}
  \expe_{x \sim \mu_D}[Q'] & = \expe_{x \sim \mu_D}[xy^2+x^3y]  \\
  & =  y^2 \expe_{x \sim \mu_D}[x] + y \expe_{x \sim \mu_D}[x^3]  \\
  & =  \sfrac{1}{2} \cdot  y^2+\sfrac{5}{4} \cdot y. 
\end{aligned}
$
\end{center}
The other probabilistic rule \textsc{(Q-Prob)} deals with probabilistic branching.
Intuitively, if expected execution costs of $S_1$ and $S_2$ are $q_1$ and $q_2$, respectively, and those of the computation after the branch statement is bounded by $Q'$,
then the expected cost for the whole computation should be bounded by a \emph{weighted average} of $(q_1 + Q')$ and $(q_2 + Q')$, with respect to the branching probability $p$.

The rule \textsc{(Q-Call)} handles function calls.
In the rule, we fetch the pre- and post-condition $Q_1,Q_1'$ for the function $f$ from the specification context $\Delta$.
We then combine it with a \emph{frame} $c \in \bbR$ of some constant potential.
The frame is used to account for the cost for the computation after the function call for most non-tail-recursive programs.

The structural rule \textsc{(Q-Weaken)} is used to strengthen the pre-condition and relax the post-condition.
The entailment relation $\Gamma \models \Gamma'$ states that the logical implication $\Gamma \implies \Gamma'$ is valid.
In terms of the upper bounds on the expected cost, if the triple $\{\cdot;Q\}~S~\{\cdot;Q'\}$ is valid, then we can safely add potential to the pre-condition $Q$ and remove potential from the post-condition $Q'$.

In addition to rules of the judgments for statements and function specifications, we also include rules for continuations and configurations that are used in the operational semantics.
A continuation $K$ is valid with a pre-condition $\{\Gamma;Q\}$, written $\Delta \vdash \{\Gamma;Q\}~K$, if $Q$ describes a bound on the expected accumulated cost of the computation represented by $K$ on the condition that the valuation before $K$ satisfies $\Gamma$.
Validity for configurations, written $\Delta \vdash \{\Gamma;Q\}~\tuple{\gamma,S,K,\alpha}$, is established by validity of the statement $S$ and the continuation $K$, as well as the requirement that the valuation $\gamma$ satisfies the pre-condition $\Gamma$.
Here $Q$ also describes an upper bound on the expected accumulated cost of the computation that continues from the configuration $\tuple{\gamma,S,K,\alpha}$.

To reduce the soundness proof to the extended OST, we construct an \emph{annotated transition kernel} from validity judgements $\vdash \Delta$ and $\Delta \vdash \{\Gamma;Q\}~S_{\mathsf{main}}~\{\Gamma';Q'\}$.

\begin{lemma}\label{Lem:AnnotatedKernel}
  Suppose $\vdash \Delta$ and $\Delta \vdash \{\Gamma;Q\}~S_{\mathsf{main}}~\{\Gamma';Q'\}$.
  An \emph{annotated program configuration} has the form $\tuple{\Gamma,Q ,\gamma,S,K,\alpha}$ such that $\Delta \vdash \{\Gamma;Q\}~\tuple{\gamma,S,K,\alpha}$.
  Then there exists a probability kernel $\kappa$ over annotated program configurations such that:
  
  For all $\sigma=\tuple{\Gamma,Q,\gamma,S,K,\alpha} \in \mathrm{dom}(\kappa)$, it holds that
  \begin{enumerate}[(i)]
    \item $\kappa$ is the same as the evaluation relation $\mapsto$ if the annotations are omitted, i.e.,
    \[
    \kappa(\sigma) \bind \lambda\tuple{\_,\_,\gamma',S',K',\alpha'}. \delta(\tuple{\gamma',S',K',\alpha'}) = {\mapsto}(\tuple{\gamma,S,K,\alpha}),
    \]
    and
    
    \item $Q(\gamma) \ge \expe_{\sigma' \sim \kappa(\sigma)} [(\alpha'-\alpha) + Q'(\gamma') ]$ where $\sigma'=\tuple{\_,Q',\gamma',\_,\_,\alpha'}$.
  \end{enumerate}
\end{lemma}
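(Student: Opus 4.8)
The plan is to build $\kappa$ by a single-step case analysis that runs in lock-step with the operational rules of \cref{Fi:OperationalSemantics}, carrying the logical annotations $(\Gamma,Q)$ along each transition so that the validity invariant $\Delta \vdash \{\Gamma;Q\}~\tuple{\gamma,S,K,\alpha}$ is preserved. I would first fix derivations witnessing $\vdash\Delta$ and $\Delta \vdash \{\Gamma;Q\}~S_{\mathsf{main}}~\{\Gamma';Q'\}$. Given a valid annotated configuration $\tuple{\Gamma,Q,\gamma,S,K,\alpha}$, fix a derivation of its validity; by \textsc{(Valid-Cfg)} it splits into $\gamma\models\Gamma$, a derivation of $\Delta \vdash \{\Gamma;Q\}~S~\{\Gamma_m;Q_m\}$, and one of $\Delta \vdash \{\Gamma_m;Q_m\}~K$. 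Then case on the shape of $S$ (and, when $S=\iskip$, on the shape of $K$), which determines the firing operational rule. In each case I read the successor configuration(s) off that rule and attach the annotation exposed by inverting the matching logic rule: for \textsc{(E-Tick)}/\textsc{(Q-Tick)} (where $Q=Q''+c$) the successor $\tuple{\gamma,\iskip,K,\alpha+c}$ gets $Q''$; for \textsc{(E-Sample)}/\textsc{(Q-Sample)} every $\tuple{\gamma[x\mapsto r],\iskip,K,\alpha}$ gets the post-annotation $Q''$ with $Q=\expe_{x\sim\mu_D}[Q'']$; for \textsc{(E-Prob)}/\textsc{(Q-Prob)} the two branches get $Q_1,Q_2$ with $Q=pQ_1+(1-p)Q_2$; for \textsc{(E-Loop)}/\textsc{(Q-Loop)} the loop continuation gets the loop invariant; for \textsc{(E-Call)} the body $\tuple{\gamma,\scrD(f),K,\alpha}$ gets the \emph{framed} body precondition drawn from $\Delta(f)$ (see below). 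For each successor I then rebuild \textsc{(Valid-Cfg)}: the statement sub-derivation is the one just exposed, the continuation sub-derivation is obtained from the one for $K$ — possibly re-wrapped by \textsc{(QK-Weaken)} using the entailments that an intervening \textsc{(Q-Weaken)} produced — and $\gamma'\models\Gamma'$ follows from $\gamma\models\Gamma$, those entailments, and, for \textsc{(E-Assign)}/\textsc{(E-Sample)}, the substitution semantics of logical contexts ($\gamma\models[E/x]\Gamma'' \iff \gamma[x\mapsto r]\models\Gamma''$ when $\gamma\vdash E\Downarrow r$, and the universally quantified context for sampling, whose defect is a $\mu_D$-null set on which $\kappa$ can be defined arbitrarily).

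With $\kappa$ so defined, property (i) holds by construction: forgetting the first two components turns $\kappa$ back into $\mapsto$. Property (ii) I would check case by case using the equational side conditions of \textsc{(Q-Tick)}, \textsc{(Q-Assign)}, \textsc{(Q-Sample)}, \textsc{(Q-Prob)}, \textsc{(Q-Call)} together with the potential entailments of \textsc{(Q-Weaken)}: in each case the right-hand side of (ii) evaluates — via \cref{Lem:SoundnessForExpCond} for the deterministic rules, the linearity of convex combinations for \textsc{(E-Prob)}, and the defining property of $\expe_{x\sim\mu_D}[\cdot]$ for \textsc{(E-Sample)} — to exactly the pre-annotation that the logic rule attaches to $S$, which is $\le Q(\gamma)$ by the weakening. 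For \textsc{(E-Sample)} one additionally notes that the attached annotation is a polynomial and that $D$ has finite moments, so the expectation in (ii) is finite and the computation is legitimate.

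Two points will need extra care. First, \textsc{(Q-Weaken)} and \textsc{(QK-Weaken)} are not syntax-directed, so ``invert the last rule'' is not literally available; I would observe that two consecutive weakenings collapse into one (by transitivity of $\models$ and of pointwise $\ge$ over the stronger context), so every derivation may be put in a normal form in which each structural rule is immediately below at most one weakening, and then invert that normal form. Second — and this is the step I expect to be the real obstacle — the function-call case requires an auxiliary \emph{frame lemma}: $\Delta \vdash \{\Gamma;Q\}~S~\{\Gamma';Q'\}$ implies $\Delta \vdash \{\Gamma;Q+c\}~S~\{\Gamma';Q'+c\}$ for every constant $c$, including negative $c$ (since \textsc{(Q-Call)} permits $c\in\bbR$, annotating $\scrD(f)$ with the raw body precondition does not balance (ii)). This is a routine induction on the derivation; the only rules that are not immediate are \textsc{(Q-Sample)} and \textsc{(Q-Prob)}, handled by $\expe_{x\sim\mu_D}[Q''+c] = \expe_{x\sim\mu_D}[Q'']+c$ and $p(Q_1+c)+(1-p)(Q_2+c) = pQ_1+(1-p)Q_2+c$.

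Finally, for measurability of $\kappa$ — that $\lambda\sigma.\,\kappa(\sigma,A)$ is measurable for each measurable $A$ — I would reuse the argument in the proof of \cref{The:EvaluationIsKernel} essentially verbatim, after enlarging the notion of skeleton: besides skeletons for $S$ and $K$, introduce skeletons for $\Gamma$ and $Q$, which is possible because these annotations are built from polynomials over the program variables and therefore decompose into a discrete monomial shape plus real ``holes'' for the coefficients, just as $\itick{c}$ does. This partitions the space of (valid) annotated configurations into countably many measurable pieces; on each piece the annotations that $\kappa$ attaches to successors depend measurably on the real parameters, so the per-piece analysis of \cref{The:EvaluationIsKernel} carries over, and the set of valid annotated configurations is measurable because within a piece validity reduces to a Borel condition on those parameters. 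The only genuinely new bookkeeping is that the choice of witnessing validity derivation used to define $\kappa$ must be made coherently within each piece, which is unproblematic since a piece meets only countably many annotation skeletons.
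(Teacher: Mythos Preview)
Your proposal is correct and follows essentially the same route as the paper: construct $\kappa$ by case analysis on the logic derivation of $\Delta \vdash \{\Gamma;Q\}~S~\{\Gamma';Q'\}$ (and, in the $\iskip$ case, on the derivation for $K$), check (i) and (ii) per case, and invoke for \textsc{(Q-Call)} exactly the frame lemma you anticipate, which the paper states and proves separately as a relaxation lemma by the same induction you sketch. The only presentational difference is that where you normalize away stacked weakenings to make inversion syntax-directed, the paper instead proceeds by induction directly on the derivation and treats \textsc{(Q-Weaken)} and \textsc{(QK-Weaken)} as their own inductive cases via the induction hypothesis; your explicit measurability discussion goes beyond what the paper spells out.
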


Before proving the soundness, we show that the program logic admits a \emph{relaxation} rule.

\begin{lemma}\label{Lem:TypingRelax}
  Suppose $\vdash \Delta$.
  If $\Delta \vdash \{\Gamma;Q\}~S~\{\Gamma';Q'\}$, then for all $c \in \bbR$, the judgment $\Delta \vdash \{\Gamma;Q +c\}~S~\{\Gamma';Q' +c \}$ is derivable.
\end{lemma}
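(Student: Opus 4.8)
The plan is to prove the statement by structural induction on the derivation of $\Delta \vdash \{\Gamma;Q\}~S~\{\Gamma';Q'\}$, establishing in every case that shifting both the pre- and the post-potential by the same constant $c$ produces a valid derivation. Here ``$Q+c$'' abbreviates the pointwise sum of the potential annotation with the constant function $\lambda\gamma.c$, which is again a polynomial annotation, so all the judgments written below are well-formed; the hypothesis $\vdash\Delta$ is carried along unchanged.

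For the axiom-style rules the conclusion follows by re-applying the same rule. \textsc{(Q-Skip)} is instantiated directly with potential $Q+c$. For \textsc{(Q-Tick)}, from $Q = Q' + c'$ we get $Q+c = (Q'+c)+c'$. For \textsc{(Q-Assign)}, since the constant $c$ mentions no program variable, $[E/x](Q'+c) = [E/x]Q' + c = Q+c$, and the logical context $\Gamma = [E/x]\Gamma'$ is unchanged. For \textsc{(Q-Sample)}, linearity of expectation together with $\mu_D$ being a probability measure gives $\expe_{x\sim\mu_D}[Q'+c] = \expe_{x\sim\mu_D}[Q'] + c = Q+c$, while the universally quantified logical context is unaffected. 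The only axiom case needing a different move is \textsc{(Q-Call)}: there is no premise about the callee's body to feed the induction hypothesis, but the rule already carries a frame $c_0 \in \bbR$, and since $(Q_0 + c_0) + c = Q_0 + (c_0 + c)$ we simply re-apply \textsc{(Q-Call)} with frame $c_0 + c$, reusing the same specification $(\Gamma;Q_0,\Gamma';Q_0') \in \Delta(f)$.

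For the compound rules we apply the induction hypothesis to the premises and re-apply the rule. \textsc{(Q-Loop)}, \textsc{(Q-Cond)}, and \textsc{(Q-Seq)} follow immediately once each premise is shifted by $c$ (for \textsc{(Q-Seq)} the intermediate annotation becomes $Q'+c$, which remains consistent across the two premises). For \textsc{(Q-Prob)} one additionally uses $p \cdot (Q_1 + c) + (1-p)\cdot(Q_2+c) = (p\cdot Q_1 + (1-p)\cdot Q_2) + c = Q + c$. For \textsc{(Q-Weaken)} one shifts the premise by $c$ via the induction hypothesis and checks that the side conditions survive: $\Gamma \models Q \ge Q_0$ gives $\Gamma \models Q+c \ge Q_0+c$, and $\Gamma_0' \models Q_0' \ge Q'$ gives $\Gamma_0' \models Q_0'+c \ge Q'+c$, while the logical-context entailments $\Gamma\models\Gamma_0$ and $\Gamma_0'\models\Gamma'$ are untouched.

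I expect the \textsc{(Q-Call)} case to be the only one that requires any thought, precisely because the program logic does not recurse into the function body at a call site: the shift by $c$ cannot be propagated through a subderivation and must instead be folded into the existing frame; the assumption $\vdash\Delta$ guarantees that the fetched specification is still available, and it never needs to be re-derived. All the remaining cases are routine bookkeeping, so the induction closes uniformly.
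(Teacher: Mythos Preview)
Your proposal is correct and follows essentially the same approach as the paper's proof: structural induction on the derivation, with each rule re-applied after shifting the potentials by $c$, and the \textsc{(Q-Call)} case handled by absorbing $c$ into the existing constant frame. The only minor remark is that the hypothesis $\vdash\Delta$ is not actually needed to ensure the specification is ``still available'' in the \textsc{(Q-Call)} case---membership $(\Gamma;Q_0,\Gamma';Q_0')\in\Delta(f)$ is already a premise of the original rule instance and is reused directly---but this does not affect the argument.
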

\begin{proof}
  By induction on the derivation of $\Delta \vdash \{\Gamma;Q\}~S~\{\Gamma';Q'\}$.
  \begin{itemize}
    \item $\small\Rule{Q-Skip}{ }{ \Delta \vdash \{\Gamma;Q\}~\iskip~\{\Gamma;Q\} }$
    
    By \textsc{(Q-Skip)}, we immediately have $\Delta \vdash \{\Gamma;Q +c\}~\iskip~\{\Gamma;Q +c\}$.
    
    \item $\small\Rule{Q-Tick}{ Q =  Q' +c'  }{ \Delta \vdash \{\Gamma;Q\}~\itick{c'}~\{\Gamma;Q'\} }$
    
    We have $(Q' + c) + c' = (Q'+c') + c = Q + c$.
    Then we conclude by \textsc{(Q-Tick)}.
    
    \item $\small\Rule{Q-Assign}{ \Gamma = [E/x]\Gamma' \\ Q = [E/x]Q' }{ \Delta \vdash \{\Gamma;Q\}~\iassign{x}{E}~\{\Gamma';Q'\} }$
    
    Because $c \in \bbR$ is a constant, we know that $[E/x](Q' +c) = [E/x]Q' + c = Q+c$.
    Then we conclude by \textsc{(Q-Assign)}.
    
    \item $\small\Rule{Q-Sample}{ \Gamma = \Forall{x \in \mathrm{supp}(\mu_D)} \Gamma' \\ Q = \expe_{x \sim \mu_D}[Q'] }{ \Delta \vdash \{\Gamma;Q\}~\isample{x}{D}~\{\Gamma';Q'\} }$
    
    By the linearity of expectations, we know that $\expe_{x \sim \mu_D}[Q' +c] = \expe_{x \sim \mu_D}[Q'] +c = Q +c$.
    Then we conclude by \textsc{(Q-Sample)}.
    
    \item $\small\Rule{Q-Call}{ (\Gamma;Q,\Gamma';Q') \in \Delta(f) \\ d \in \bbR }{ \Delta \vdash \{\Gamma;Q +d\}~\iinvoke{f}~\{\Gamma';Q' +d \} }$
    
    Let $e \defeq d + c$.
    Then we conclude by \textsc{(Q-Call)} with the frame set as $e$.
    
    \item $\small\Rule{Q-Prob}{ \Delta \vdash \{\Gamma;Q_1\}~S_1~\{\Gamma';Q'\} \\ \Delta \vdash \{\Gamma;Q_2\}~S_2~\{\Gamma';Q'\} \\ Q = p \cdot Q_1 + (1-p) \cdot Q_2 }{ \Delta \vdash \{\Gamma;Q\}~\iprob{p}{S_1}{S_2}~\{\Gamma';Q'\} }$
    
    By induction hypothesis, we have $\Delta \vdash \{\Gamma;Q_1+c\}~S_1~\{\Gamma';Q'+c \}$ and $\Delta \vdash \{\Gamma;Q_2+ c \}~S_2~\{\Gamma';Q' +c\}$.
    Then $p \cdot (Q_1 + c) + (1-p) \cdot (Q_2+c) = p \cdot Q_1 + (1-p) \cdot Q_2 + c = Q + c$.
    Then we conclude by \textsc{(Q-Prob)}.
    
    \item $\small\Rule{Q-Cond}{ \Delta \vdash \{\Gamma \wedge L; Q\}~S_1~\{\Gamma';Q'\} \\ \Delta \vdash \{\Gamma \wedge \neg L;Q\} ~S_2~\{\Gamma';Q'\}  }{ \Delta \vdash \{\Gamma;Q\}~\icond{L}{S_1}{S_2}~\{\Gamma';Q'\} }$
    
    By induction hypothesis, we have $\Delta \vdash \{\Gamma \wedge L; Q +c \}~S_1~\{\Gamma';Q' +c\}$ and $\Delta \vdash \{\Gamma \wedge \neg L; Q+c \}~S_2~\{\Gamma';Q'+c\}$.
    Then we conclude by \textsc{(Q-Cond)}.
    
    \item $\small\Rule{Q-Loop}{ \Delta \vdash \{\Gamma \wedge L;Q\}~S~\{\Gamma;Q\} }{ \Delta \vdash \{\Gamma;Q\}~\iloop{L}{S}~\{\Gamma \wedge \neg L; Q\} }$
    
    By induction hypothesis, we have $\Delta \vdash \{\Gamma \wedge L; Q +c\}~S~\{\Gamma; Q +c \}$.
    Then we conclude by \textsc{(Q-Loop)}.
    
    \item $\small\Rule{Q-Seq}{ \Delta \vdash \{\Gamma;Q\}~S_1~\{\Gamma';Q'\} \\ \Delta \vdash \{\Gamma';Q'\}~S_2~\{\Gamma'';Q''\} }{ \Delta \vdash \{\Gamma;Q\}~S_1;S_2~\{\Gamma'';Q''\} }$
    
    By induction hypothesis, we have $\Delta \vdash \{\Gamma;Q +c\}~S_1~\{\Gamma';Q'+c\}$ and $\Delta \vdash \{\Gamma'; Q'+c\}~S_2~\{\Gamma'';Q''+c\}$.
    Then we conclude by \textsc{(Q-Seq)}.
    
    \item $\small\Rule{Q-Weaken}{ \Delta \vdash \{\Gamma_0;Q_0\}~S~\{\Gamma_0';Q_0'\} \\ \Gamma \models \Gamma_0 \\ \Gamma_0' \models \Gamma' \\ \Gamma \models Q \ge Q_0 \\ \Gamma_0' \models Q_0' \ge Q' }{ \Delta \vdash \{\Gamma;Q\}~S~\{\Gamma';Q'\} }$
    
    By induction hypothesis, we have $\Delta \vdash \{\Gamma_0;Q_0+c\}~S~\{\Gamma_0';Q_0'+c\}$.
    To apply \textsc{(Q-Weaken)}, we need to show that $\Gamma \models Q +c \ge Q_0 +c$ and $\Gamma_0' \models Q_0' +c \ge  Q' +c$.
    We conclude by the fact that $c \in \bbR$ is a constant.
  \end{itemize}
\end{proof}

Now we can construct the annotated transition kernel to reduce the soundness proof to OST.

\begin{proof}[Proof of \cref{Lem:AnnotatedKernel}]
  Let $\nu \defeq {\mapsto}(\tuple{\gamma,S,K,\alpha})$.
  By inversion on $\Delta \vdash \{\Gamma;Q\}~\tuple{\gamma,S,K,\alpha}$, we know that $\gamma \models \Gamma$, $\Delta \vdash \{\Gamma;Q\}~S~\{\Gamma';Q'\}$, and $\Delta \vdash \{\Gamma';Q'\}~K$ for some $\Gamma',Q'$.
  We construct a probability measure $\mu$ as $\kappa(\tuple{\Gamma,Q,\gamma,S,K,\alpha})$ by induction on the derivation of $\Delta \vdash \{\Gamma;Q\}~S~\{\Gamma';Q'\}$.
  
  \begin{itemize}
    \item $\small\Rule{Q-Skip}{ }{ \Delta \vdash \{ \Gamma;Q \}~\iskip~\{\Gamma;Q\} }$
    
    By induction on the derivation of $\Delta \vdash \{\Gamma;Q\}~K$.
    
    \begin{itemize}
      \item $\small\Rule{QK-Stop}{ }{ \Delta \vdash \{ \Gamma;Q \}~\kstop }$
      
      We have $\nu = \delta(\tuple{\gamma,\iskip,\kstop,\alpha})$.
      Then we set $\mu = \delta(\tuple{\Gamma,Q,\gamma,\iskip,\kstop,\alpha })$.
      
      \item $\small\Rule{QK-Loop}{ \Delta \vdash \{\Gamma \wedge L; Q\}~S~\{\Gamma;Q\} \\ \Delta \vdash \{\Gamma \wedge \neg L; Q\}~K }{ \Delta \vdash \{\Gamma;Q\}~\kloop{L}{S}{K} }$
      
      Let $b \in \{\bot,\top\}$ be such that $\gamma \vdash L \Downarrow b$.
      
      If $b = \top$, then $\nu = \delta(\tuple{\gamma,S,\kloop{L}{S}{K},\alpha})$.
      We set $\mu = \delta(\tuple{ \Gamma \wedge L,Q,\gamma,S,\kloop{L}{S}{K},\alpha })$.
      In this case, we know that $\gamma \models \Gamma \wedge L$.
      By the premise, we know that $\Delta \vdash \{\Gamma\wedge L;Q\}~S~\{\Gamma;Q\}$.
      It then remains to show that $\Delta \vdash \{ \Gamma;Q\}~\kloop{L}{S}{K}$.
      By \textsc{(QK-Loop)}, it suffices to show that $\Delta \vdash \{\Gamma \wedge L; Q\}~S~\{\Gamma;Q\}$ and $\Delta \vdash \{\Gamma \wedge \neg L; Q\}~K$.
      Then appeal to the premise.
      
      If $b = \bot$, then $\mu = \delta(\tuple{\gamma,\iskip,K,\alpha})$.
      We set $\mu = \delta(\tuple{\Gamma \wedge \neg L, Q, \gamma,\iskip,K,\alpha})$.
      In this case, we know that $\gamma \models \Gamma \wedge \neg L$.
      By \textsc{(Q-Skip)}, we have $\Delta \vdash \{\Gamma \wedge \neg L;Q\}~\iskip~\{\Gamma \wedge \neg L; Q\}$.
      It then remains to show that $\Delta \vdash \{\Gamma \wedge \neg L; Q\}~K$.
      Then appeal to the premise.
      
      In both cases, $\gamma$ and $Q$ do not change.
      
      \item $\small\Rule{QK-Seq}{ \Delta \vdash \{\Gamma;Q\}~S~\{\Gamma';Q'\} \\ \Delta \vdash \{\Gamma';Q'\}~K }{ \Delta \vdash \{\Gamma;Q\}~\kseq{S}{K} }$
      
      We have $\nu = \delta(\tuple{\gamma,S,K,\alpha})$.
      Then we set $\mu = \delta(\tuple{\Gamma,Q,\gamma,S,K,\alpha})$.
      By the premise, we know that $\Delta \vdash \{\Gamma;Q\}~S~\{\Gamma';Q'\}$ and $\Delta \vdash \{\Gamma';Q'\}~K$.
      Also $\gamma$ and $Q$ do not change.
      
      \item $\small\Rule{QK-Weaken}{ \Delta \vdash \{\Gamma';Q'\}~K \\ \Gamma \models \Gamma' \\ \Gamma \models Q \ge Q' }{ \Delta \vdash \{\Gamma;Q\}~K }$
      
      Because $\gamma \models \Gamma$ and $\Gamma \models \Gamma'$, we know that $\gamma \models \Gamma'$.
      Let $\mu'$ be obtained from the induction hypothesis on $\Delta \vdash \{\Gamma';Q'\}~K$.
      Then ${Q'}(\gamma) \ge \expe_{\sigma' \sim \mu'}[(\alpha'-\alpha) + {Q''}(\gamma')]$, where $\sigma'=\tuple{\_,Q'',\gamma',\_,\_,\alpha'}$.
      We set $\mu = \mu'$.
      Because $\Gamma \models Q \ge Q'$ and $\gamma \models \Gamma$, we conclude that $Q(\gamma) \ge {Q'}(\gamma)$.
    \end{itemize}

  \item $\small\Rule{Q-Tick}{ Q =  Q'+c }{ \Delta \vdash \{\Gamma;Q\}~\itick{c}~\{\Gamma;Q'\} }$
  
  We have $\nu = \delta(\tuple{\gamma,\iskip,K,\alpha+c})$.
  Then we set $\mu = \delta(\tuple{\Gamma,Q',\gamma,\iskip,K,\alpha+c})$.
  By \textsc{(Q-Skip)}, we have $\Delta \vdash \{\Gamma;Q'\}~\iskip~\{\Gamma;Q'\}$.
  Then by the assumption, we have $\Delta \vdash \{\Gamma;Q'\}~K$.
  It remains to show that $Q(\gamma) \ge c + {Q'}(\gamma)$.
  Indeed, we have $Q(\gamma) = c +  {Q'}(\gamma)$ by the premise.
  
  \item $\small\Rule{Q-Assign}{ \Gamma = [E/x]\Gamma' \\ Q = [E/x]Q' }{ \Delta \vdash \{\Gamma;Q\}~\iassign{x}{E}~\{\Gamma';Q'\} }$
  
  Let $r \in \bbR$ be such that $\gamma \vdash E \Downarrow r$.
  We have $\nu = \delta(\tuple{\gamma[x \mapsto r],\iskip,K,\alpha})$.
  Then we set $\mu = \delta(\tuple{\Gamma',Q',\gamma[x \mapsto r],\iskip,K,\alpha})$.
  Because $\gamma \vdash \Gamma$, i.e., $\gamma \vdash [E/x]\Gamma'$, we know that $\gamma[x \mapsto r] \vdash \Gamma'$.
  By \textsc{(Q-Skip)}, we have $\Delta \vdash \{\Gamma';Q'\}~\iskip~\{\Gamma';Q'\}$.
  Then by the assumption, we have $\Delta \vdash \{\Gamma';Q'\}~K$.
  It remains to show that $Q(\gamma) = {Q'}(\gamma[x \mapsto r])$.
  By the premise, we have $Q = [E/x]Q'$, thus $Q(\gamma) = {[E/x]Q'}(\gamma) = {Q'}(\gamma[x \mapsto r])$.
  
  \item $\small\Rule{Q-Sample}{ \Gamma = \Forall{x \in \mathrm{supp}(\mu_D)} \Gamma' \\ Q = \expe_{x \sim \mu_D}[Q'] }{ \Delta \vdash \{\Gamma;Q\}~\isample{x}{D}~\{\Gamma';Q'\} }$
  
  We have $\nu = \mu_D \bind \lambda r. \delta(\tuple{\gamma[x \mapsto r],\iskip,K,\alpha})$.
  Then we set $\mu = \mu_D \bind \lambda r. \delta(\tuple{\Gamma',Q',\gamma[x \mapsto r],\iskip,K,\alpha})$.
  For all $r \in \mathrm{supp}(\mu_D)$, because $\gamma \models \Forall{x \in \mathrm{supp}(\mu_D)} \Gamma'$, we know that $\gamma[x \mapsto r] \models \Gamma'$.
  By \textsc{(Q-Skip)}, we have $\Delta \vdash \{\Gamma';Q'\}~\iskip~\{\Gamma';Q'\}$.
  Then by the assumption, we have $\Delta \vdash \{\Gamma';Q'\}~K$.
  It remains to show that $Q(\gamma) \ge \expe_{r \sim \mu_D}[ {Q'}(\gamma[x \mapsto r])]$.
  Indeed, because $Q = \expe_{x \sim \mu_D}[Q']$, we know that $Q(\gamma) = {\expe_{x \sim \mu_D}[Q']}(\gamma) = \expe_{r \sim \mu_D}[{Q'}(\gamma[x \mapsto r])]$.
  
  \item $\small\Rule{Q-Call}{  (\Gamma;Q,\Gamma';Q') \in \Delta(f) \\ c \in \bbR }{ \Delta \vdash \{\Gamma;  Q+c \}~\iinvoke{f}~\{\Gamma';  Q' + c \} }$
  
  We have $\nu = \delta(\tuple{\gamma,\scrD(f),K,\alpha})$.
  Then we set $\mu = \delta(\tuple{\Gamma, Q+c ,\gamma,\scrD(f),K,\alpha})$.
  By the premise, we know that $\Delta \vdash \{\Gamma;Q\}~\scrD(f)~\{\Gamma';Q'\}$.
  By \cref{Lem:TypingRelax}, we know that $\Delta \vdash \{\Gamma;  Q+c \}~\scrD(f)~\{\Gamma';  Q'+c \} $.
  Also $\gamma$ and $Q+c$ do not change.
  
  \item $\small\Rule{Q-Prob}{ \Delta \vdash \{\Gamma;Q_1\}~S_1~\{\Gamma';Q'\} \\ \Delta \vdash \{\Gamma;Q_2\}~S_2~\{\Gamma';Q'\} \\ Q = p \cdot Q_1 + (1-p) \cdot Q_2 }{ \Delta \vdash \{\Gamma;Q\}~\iprob{p}{S_1}{S_2} ~\{\Gamma';Q'\} }$
  
  We have $\nu = p \cdot \delta(\tuple{\gamma,S_1,K,\alpha}) + (1-p) \cdot \delta(\tuple{\gamma,S_2,K,\alpha})$.
  Then we set $\mu = p \cdot \delta(\tuple{\Gamma,Q_1,\gamma,S_1,K,\alpha} + (1-p) \cdot \delta(\tuple{\Gamma,Q_2,\gamma,S_2,K,\alpha})$.
  From the assumption and the premise, we know that $\gamma \models \Gamma$, $\Delta \vdash \{\Gamma';Q'\}~K$, and $\Delta \vdash \{\Gamma;Q_1\}~S_1~\{\Gamma';Q'\}$, $\Delta \vdash \{\Gamma;Q_2\}~S_2~\{\Gamma';Q'\}$.
  It remains to show that $Q(\gamma) \ge (p \cdot {Q_1}(\gamma)) + ((1-p) \cdot {Q_2}(\gamma))$.
  On the other hand, from the premise, we have $Q = p \cdot Q_1 + (1-p) \cdot Q_2$.
  Therefore, we have $Q(\gamma) = p \cdot Q_1(\gamma) + (1-p) \cdot Q_2(\gamma)$.
  
  \item $\small\Rule{Q-Cond}{ \Delta \vdash \{\Gamma \wedge L;Q\}~S_1~\{\Gamma';Q'\} \\ \Delta \vdash \{\Gamma \wedge \neg L; Q\}~S_2~\{\Gamma';Q'\} }{ \Delta \vdash \{\Gamma;Q\}~\icond{L}{S_1}{S_2}~\{\Gamma';Q'\} }$
  
  Let $b \in \{\top,\bot\}$ be such that $\gamma \vdash L \Downarrow b$.
  
  If $b = \top$, then $\nu = \delta(\tuple{\gamma,S_1,K,\alpha})$.
  We set $\mu = \delta(\tuple{\Gamma \wedge L,Q,\gamma,S_1,K,\alpha})$.
  In this case, we know that $\gamma \models \Gamma \wedge L$.
  By the premise and the assumption, we know that $\Delta \vdash \{\Gamma \wedge L; Q\}~S_1~\{\Gamma';Q'\}$ and $\Delta \vdash \{\Gamma';Q'\}~K$.
  
  If $b = \bot$, then $\nu = \delta(\tuple{\gamma,S_2,K,\alpha})$.
  We set $\mu = \delta(\tuple{\Gamma \wedge \neg L,Q,\gamma,S_2,K,\alpha})$.
  In this case, we know that $\gamma \models \Gamma \wedge \neg L$.
  By the premise and the assumption, we know that $\Delta \vdash \{\Gamma \wedge \neg L; Q\}~S_2~\{\Gamma';Q'\}$ and $\Delta \vdash \{\Gamma';Q'\}~K$.
  
  In both cases, $\gamma$ and $Q$ do not change.
  
  \item $\small\Rule{Q-Loop}{ \Delta \vdash \{\Gamma \wedge L;Q\}~S~\{\Gamma;Q\} }{ \Delta \vdash \{\Gamma;Q\}~\iloop{L}{S_1}~\{\Gamma \wedge \neg L;Q \} }$
  
  We have $\nu = \delta(\tuple{\gamma,\iskip,\kloop{L}{S}{K},\alpha})$.
  Then we set $\mu = \delta(\tuple{\Gamma,Q,\gamma,\iskip,\kloop{L}{S}{K},\alpha})$.
  By \textsc{(Q-Skip)}, we have $\Delta \vdash \{\Gamma;Q\}~\iskip~\{\Gamma;Q\}$.
  Then by the assumption $\Delta \vdash \{\Gamma \wedge \neg L;Q\}~K$ and the premise, we know that $\Delta \vdash \{\Gamma;Q\}~\kloop{L}{S}{K}$ by \textsc{(QK-Loop)}.
  Also $\gamma$ and $Q$ do not change.
  
  \item $\small\Rule{Q-Seq}{ \Delta \vdash \{\Gamma;Q\}~S_1~\{\Gamma';Q'\} \\ \Delta \vdash \{\Gamma';Q'\}~S_2~\{\Gamma'';Q''\} }{ \Delta \vdash \{\Gamma;Q\}~S_1;S_2~\{\Gamma'';Q''\} }$
  
  We have $\nu = \delta(\tuple{\gamma,S_1,\kseq{S_2}{K},\alpha})$.
  Then we set $\mu = \delta(\tuple{\Gamma,Q,\gamma,S_1,\kseq{S_2}{K},\alpha})$.
  By the first premise, we have $\Delta \vdash \{\Gamma;Q\}~S_1~\{\Gamma';Q'\}$.
  By the assumption $\Delta \vdash \{\Gamma'';Q''\}~K$ and the second premise, we know that $\Delta \vdash \{\Gamma';Q'\}~\kseq{S_2}{K}$ by \textsc{(QK-Seq)}.
  Also $\gamma$ and $Q$ do not change. 
  
  \item $\small\Rule{Q-Weaken}{ \Delta \vdash \{\Gamma_0;Q_0\}~S~\{\Gamma_0';Q_0'\} \\ \Gamma \models \Gamma_0 \\ \Gamma_0' \models \Gamma' \\ \Gamma \models Q \ge Q_0 \\ \Gamma_0' \models Q_0' \ge Q' }{ \Delta \vdash \{\Gamma;Q\}~S~\{\Gamma';Q'\} }$
  
  By $\gamma \models \Gamma$ and $\Gamma \models \Gamma_0$, we know that $\gamma \models \Gamma_0$.
  By the assumption $\Delta \vdash \{\Gamma';Q'\}~K$ and the premise $\Gamma_0' \models \Gamma'$, $\Gamma_0' \models Q_0' \ge Q'$, we derive $\Delta \vdash \{\Gamma_0';Q_0'\}~K$ by \textsc{(QK-Weaken)}.
  Thus let $\mu_0$ be obtained by the induction hypothesis on $\Delta \vdash \{\Gamma_0;Q_0\}~S~\{\Gamma_0';Q_0'\}$.
  Then ${Q_0}(\gamma) \ge \expe_{\sigma' \sim \mu_0}[(\alpha'-\alpha)+ {Q''}(\gamma')]$, where $\sigma' = \tuple{\_,Q'',\gamma',\_,\_,\alpha'}$.
  We set $\mu = \mu_0$. 
  By the premise $\Gamma \models Q \ge Q_0$ and $\gamma \models \Gamma$, we conclude that ${Q}(\gamma) \ge {Q_0}(\gamma)$.
  \end{itemize}
\end{proof}

Therefore, we can use the annotated kernel $\kappa$ above to re-construct the Markov-chain semantics.
Then we can define the potential function on annotated program configurations as $\phi(\sigma) \defeq Q(\gamma)$ where $\sigma=\tuple{\_,Q,\gamma,\_,\_,\_}$.

The next step is to apply the extended OST (\cref{Cor:OSTExtended}).
Recall that the theorem requires that for some $\ell \in \bbN$ and $C \ge 0$, $|Y_n| \le C \cdot (n+1)^{\ell}$ almost surely for all $n \in \bbZ^+$.
One sufficient condition for the requirement is to 
(i) assume all the quantitative contexts $Q$ is the derivation are \emph{polynomials} over program variables up to some fixed degree $d \in \bbN$, and
(ii) assume the \emph{bounded-update} property, i.e., every (deterministic or probabilistic) assignment to a program variable updates the variable with a bounded change.
As observed by Wang et al.~\autocite{PLDI:WFG19}, bounded updates are common in practice.
We formulate the idea as follows.

\begin{lemma}\label{Lem:BoundedUpdate}
  If there exists $C_0 \ge 0$ such that for all $n \in \bbZ^+$ and $x \in \mathsf{VID}$, it holds that $\prob[\abs{\gamma_{n+1}(x)- \gamma_n(x) } \le C_0] = 1$ where $\omega$ is an infinite trace, $\omega_n = \tuple{\gamma_n,\_,\_,\_}$, and $\omega_{n+1} = \tuple{\gamma_{n+1},\_,\_,\_}$,
  then there exists $C \ge 0$ such that  for all $n \in \bbZ^+$, $|Y_n| \le C \cdot (n+1)^{d}$ almost surely.
\end{lemma}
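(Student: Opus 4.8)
The plan is to split $Y_n = A_n + \Phi_n$ and bound the two summands separately, each by a degree-$d$ polynomial in $n$ almost surely; the claim then follows from $|Y_n| \le |A_n| + |\Phi_n|$ together with $d \ge 1$ (recall $\bbN$ excludes $0$). I would work with the annotated transition kernel $\kappa$ of \cref{Lem:AnnotatedKernel}, so that the $n$-th trace configuration $\omega_n$ carries a potential annotation $Q_n$ with $\Phi_n = Q_n(\gamma_n)$, where $\gamma_n$ is the valuation component of $\omega_n$.

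For $A_n$: since $A_0 = 0$ and the only evaluation rule that modifies the cost accumulator is \textsc{(E-Tick)}, which changes $\alpha$ by a constant taken verbatim from one of the finitely many $\itick{c}$ statements of the fixed program, we have $|A_{n+1} - A_n| \le c_{\mathsf{tick}}$ for all $n$, hence $|A_n| \le n \cdot c_{\mathsf{tick}}$ everywhere, where $c_{\mathsf{tick}}$ is the largest $|c|$ over the program's tick statements. For the valuations: the bounded-update hypothesis gives $\prob[\,|\gamma_{n+1}(x) - \gamma_n(x)| \le C_0\,] = 1$ for each $n \in \bbZ^+$ and each of the finitely many $x \in \mathsf{VID}$; intersecting these countably many almost-sure events, almost surely $|\gamma_{n+1}(x) - \gamma_n(x)| \le C_0$ holds for all $n$ and $x$ at once, and since $\gamma_0 = \lambda\_.0$ a telescoping sum gives $|\gamma_n(x)| \le n C_0$ for all $n, x$, almost surely.

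For $\Phi_n$, the key is a structural claim: every annotation $Q_n$ occurring along a trace decomposes as $Q_n = P_n + b_n$, where $P_n$ lies in the \emph{finite} set of polynomials appearing as annotations in the fixed derivations of $\vdash \Delta$ and $\Delta \vdash \{\Gamma;Q\}~S_{\mathsf{main}}~\{\Gamma';Q'\}$ (each of degree at most $d$), and $b_n \in \bbR$ is an additive constant that changes by at most a fixed amount per evaluation step, so $|b_n| \le |b_0| + n \cdot c_{\mathsf{frame}}$ for a fixed $c_{\mathsf{frame}}$. The reason is that the only rules altering the non-constant part of an annotation, \textsc{(Q-Assign)} and \textsc{(Q-Sample)}, are applied only inside this fixed finite derivation, so replaying it during execution never produces a new polynomial shape; meanwhile $b_n$ moves only when a frame is introduced by \textsc{(Q-Call)} (equivalently, by the relaxation of \cref{Lem:TypingRelax}) or a tick constant is absorbed by \textsc{(Q-Tick)}, each a bounded change. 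Granting this, and letting $M$ bound the product of the number of monomials and the maximal coefficient magnitude over the finite shape set, we get $|\Phi_n| = |Q_n(\gamma_n)| \le |P_n(\gamma_n)| + |b_n| \le M\,(nC_0 + 1)^{d} + |b_0| + n\,c_{\mathsf{frame}} \le C_\Phi (n+1)^d$ almost surely, for a suitable $C_\Phi$ (using $d \ge 1$). Combining, $|Y_n| \le n\,c_{\mathsf{tick}} + C_\Phi(n+1)^d \le C(n+1)^d$ almost surely, with $C \defeq c_{\mathsf{tick}} + C_\Phi$.

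The main obstacle is making the structural claim precise: showing that the potential annotations remain within a finite set of polynomial shapes modulo a linearly growing additive constant. This calls for an induction on the transitions of the annotated kernel of \cref{Lem:AnnotatedKernel}, carefully tracking how the frames of \textsc{(Q-Call)} and the relaxation construction of \cref{Lem:TypingRelax} compose under nested recursion---each recursive call layers on one more relaxation, which shifts the additive constant but, crucially, leaves the polynomial shape unchanged. Everything else is routine estimation.
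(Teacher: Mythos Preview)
Your proof follows the same skeleton as the paper's: bound $|A_n|$ linearly via the finitely many tick constants, bound each $|\gamma_n(x)|$ linearly via the bounded-update hypothesis and zero initialization, and then bound $|\Phi_n| = |Q_n(\gamma_n)|$ using the degree-$d$ polynomial structure of the annotations together with the valuation bound, finally combining via $|Y_n| \le |A_n| + |\Phi_n|$. The paper handles the $\Phi_n$ part in one line, asserting that ``all the quantitative contexts \ldots\ have almost surely bounded coefficients'' and then estimating monomial by monomial; your decomposition $Q_n = P_n + b_n$ into a shape from a fixed finite set plus a constant that drifts by at most $c_{\mathsf{frame}}$ per step is actually more careful than the paper at this point, since the frames introduced by \textsc{(Q-Call)} (and propagated through \cref{Lem:TypingRelax}) do accumulate additively under nested recursion, so the constant term of $Q_n$ is only $O(n)$ rather than uniformly bounded---a wrinkle the paper's proof does not make explicit but which your structural claim addresses directly.
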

\begin{proof}
  Let $C_1 \ge 0$ be such that for all $\itick{c}$ statements in the program, $|c| \le C_1$.
  Then for all $\omega$, if $\omega_n = \tuple{\_,\_,\_,\alpha_n}$, then $|\alpha_n| \le n \cdot C_1$.
  On the other hand, we know that $\prob[|\gamma_n(x) - \gamma_0(x)| \le C_0 \cdot n] = 1$ for any variable $x$.
  As we assume all the program variables are initialized to zero, we know that $\prob[|\gamma_n(x)| \le C_0 \cdot n] = 1$.
  From the construction in the proof of \cref{Lem:AnnotatedKernel}, we know that all the quantitative contexts (represented by polynomials over program variables) should have almost surely bounded coefficients.
  Let $C_2 \ge 0$ be such a bound.
  Also, $\Phi_n(\omega) = \phi(\omega_n) = {Q_n}(\gamma_n)$, and
  \[
  |{Q_n}(\gamma_n)_k| \le \sum_{i=0}^{d} C_2 \cdot |\mathsf{VID}|^{i} \cdot |C_0 \cdot n|^i \le C_3 \cdot (n+1)^{d}, \textrm{a.s.},
  \]
  for some sufficiently large constant $C_3$.
  Thus $|Y_n| = |A_n+\Phi_n| \le |A_n| + |\Phi_n| \le C_1 \cdot n + C_3 \cdot (n+1)^d \le C_4 \cdot (n+1)^d$, a.s.,
  for some sufficiently large constant $C_4$.
\end{proof}

Now we prove the soundness of bound inference.

\begin{theorem}\label{The:SoundnessOfLogic}
  Suppose $\Delta \vdash \{\Gamma;Q\}~S_{\mathsf{main}}~\{\Gamma';0\}$ and $\vdash \Delta$.
  Then $\expe[A_T] \le Q(\lambda\_.0)$, i.e., the expected accumulated cost $\expe[A_T]$ upon program termination is upper-bounded by $Q(\lambda\_.0)$ where $Q$ is the quantitative context and $\lambda\_.0$ is the initial valuation, if \emph{both} of the following properties hold:
  \begin{enumerate}[(i)]
    \item $\expe[T^{d}] < \infty$, and
    
    \item there exists $C_0 \ge 0$ such that for all $n \in \bbZ^+$ and $x \in \mathsf{VID}$, it holds almost surely that $|\gamma_{n+1}(x) - \gamma_n(x)| \le C_0$ where $\tuple{\gamma_n,\_,\_,\_} = \omega_n$ and $\tuple{\gamma_{n+1},\_,\_,\_} = \omega_{n+1}$ of an infinite trace $\omega$.
  \end{enumerate}
\end{theorem}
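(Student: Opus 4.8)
The plan is to transport the Markov-chain cost semantics of \cref{Sec:MarkovChainConstruction} to the \emph{annotated} configurations supplied by \cref{Lem:AnnotatedKernel}, turn the expected-potential inequality into a supermartingale, and close the convergence gap with the extended optional stopping theorem \cref{Cor:OSTExtended}. Concretely, I would invoke \cref{Lem:AnnotatedKernel} with the hypotheses $\vdash\Delta$ and $\Delta\vdash\{\Gamma;Q\}~S_{\mathsf{main}}~\{\Gamma';0\}$ to obtain a probability kernel $\kappa$ on annotated configurations, then apply \cref{The:IonescuTulcea} with initial distribution $\delta(\tuple{\Gamma,Q,\lambda\_.0,S_{\mathsf{main}},\kstop,0})$ to get a probability measure on infinite traces of annotated configurations, with coordinate maps $X_n$ and generated filtration $\{\calF_n\}_{n\in\bbZ^+}$. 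Part~(i) of \cref{Lem:AnnotatedKernel} says that erasing the annotation components recovers exactly the chain of \cref{Sec:MarkovChainConstruction}, so $A_n$, the stopping time $T$, and $A_T$ are unchanged. I then set $\phi(\sigma)\defeq Q(\gamma)$ for $\sigma=\tuple{\_,Q,\gamma,\_,\_,\_}$, $\Phi_n(\omega)\defeq\phi(\omega_n)$, and $Y_n\defeq A_n+\Phi_n$.

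The next step is to check that $\{Y_n\}_{n\in\bbZ^+}$ is an $\{\calF_n\}_{n\in\bbZ^+}$-supermartingale. Adaptedness is immediate since $A_n,\Phi_n$ depend only on $\omega_n$; integrability follows from \cref{Lem:BoundedUpdate}, whose hypothesis is precisely assumption~(ii), yielding a constant $C\ge 0$ with $|Y_n|\le C\cdot(n+1)^d$ a.s., hence $Y_n\in\calL^1$. The inequality $\expe[Y_{n+1}\mid\calF_n]\le Y_n$, a.s., is then proved exactly as in \cref{Lem:RankingFunctionMartingale}, the only change being that the one-step estimate $\phi(\sigma)\ge\expe_{\sigma'\sim\kappa(\sigma)}[(\alpha'-\alpha)+\phi(\sigma')]$ is now furnished by part~(ii) of \cref{Lem:AnnotatedKernel} instead of \cref{Eq:PotentialProperty}; this argument imposes no sign restriction on stepwise costs, so negative costs are handled.

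Finally I would apply optional stopping. Assumption~(i) gives $\expe[T^d]<\infty$; since $d\in\bbN$ so $d\ge 1$ and $T$ is valued in $\bbZ^+\cup\{\infty\}$, this forces $\prob[T<\infty]=1$, and in particular (via \cref{Lem:StoppedProcessConvergeAS}) $A_T$ is a.s. well-defined with $A_n\to A_T$ a.s. Combining $\expe[T^d]<\infty$ with the bound $|Y_n|\le C\cdot(n+1)^d$ a.s., \cref{Cor:OSTExtended} with $\ell=d$ shows that the stopped process $\{Y^T_n\}_{n\in\bbZ^+}$ is uniformly integrable, so by \cref{The:OSTUI} (supermartingale case) $Y_T=\lim_{n\to\infty}Y^T_n$ exists a.s. and in $\calL^1$ with $\expe[Y_T]\le\expe[Y_0]$. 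Here $\expe[Y_0]=\Phi_0=\phi(\tuple{\Gamma,Q,\lambda\_.0,S_{\mathsf{main}},\kstop,0})=Q(\lambda\_.0)$; and the annotation carried by a termination configuration is the postcondition $0$ of $S_{\mathsf{main}}$ (in any case it is $\ge 0$, since the annotation of $\iskip$ always dominates its postcondition via \textsc{(Q-Skip)} and \textsc{(Q-Weaken)}), so $\Phi_T=0$ and $Y_T=A_T$. Therefore $\expe[A_T]=\expe[Y_T]\le\expe[Y_0]=Q(\lambda\_.0)$.

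I expect the genuine difficulty to have already been absorbed into the two supporting lemmas: \cref{Lem:AnnotatedKernel} (constructing an annotated kernel that realizes the expected-potential inequality step by step) and \cref{Lem:BoundedUpdate} (the degree-$d$ polynomial growth bound on $Y_n$ from bounded updates). What remains delicate inside this proof is (a) making sure the moment assumption $\expe[T^d]<\infty$ in~(i) is matched to the \emph{same} degree $d$ that bounds the potential polynomials, so that \cref{Cor:OSTExtended} applies with $\ell=d$ and no extra analysis of the tail of $T$ is needed; and (b) confirming $\Phi_T=0$, i.e., that the potential genuinely vanishes at the termination configurations reachable from a derivation whose postcondition is the literal $0$ — this is exactly where the hypothesis being stated with postcondition $\{\Gamma';0\}$, rather than an arbitrary $Q'$, is used.
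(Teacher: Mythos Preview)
Your proposal is correct and follows essentially the same route as the paper: invoke \cref{Lem:BoundedUpdate} under assumption~(ii) to get $|Y_n|\le C\cdot(n+1)^d$ a.s., combine with assumption~(i) to apply \cref{Cor:OSTExtended} and \cref{The:OSTUI}, and conclude $\expe[A_T]\le\expe[\Phi_0]=Q(\lambda\_.0)$. You spell out several steps the paper leaves implicit---the Ionescu--Tulcea construction on annotated configurations, the supermartingale verification (integrability via the polynomial bound, the one-step inequality via \cref{Lem:AnnotatedKernel}(ii)), and the argument that $\Phi_T=0$ from the postcondition being literally~$0$---but these are elaborations of the same argument rather than a different approach.
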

\begin{proof}
  By \cref{Lem:BoundedUpdate}, there exists $C \ge 0$ such that $|Y_n| \le C \cdot (n+1)^{d}$ almost surely for all $n \in \bbZ^+$.
  By the assumption, we also know that $\expe[T^{d}] < \infty$.
  Thus by \cref{Cor:OSTExtended,The:OSTUI}, we conclude that $\expe[Y_T] \le \expe[Y_0]$, i.e., $\expe[A_T] \le \expe[\Phi_0] = Q(\lambda\_.0)$.
\end{proof}


\printbibliography[heading=bibintoc]
\end{document}